\documentclass[11pt]{article}

\usepackage[margin=1in]{geometry}
\usepackage{amsmath,amssymb,amsthm,mathtools}
\usepackage{enumitem}
\usepackage{booktabs}
\usepackage{graphicx}
\usepackage{float}
\usepackage{array,tabularx}
\usepackage{microtype}
\usepackage{natbib}
\bibpunct[, ]{(}{)}{,}{a}{}{,}
\usepackage[hidelinks]{hyperref}

\newtheorem{theorem}{Theorem}[section]
\newtheorem{proposition}[theorem]{Proposition}
\newtheorem{lemma}[theorem]{Lemma}
\newtheorem{corollary}[theorem]{Corollary}
\theoremstyle{definition}
\newtheorem{assumption}[theorem]{Assumption}

\theoremstyle{remark}
\newtheorem{remark}[theorem]{Remark}

\DeclareMathOperator{\Var}{Var}
\DeclareMathOperator{\dist}{dist}
\DeclareMathOperator{\sgn}{sgn}
\newcommand{\R}{\mathbb R}
\newcommand{\E}{\mathbb E}
\newcommand{\Prob}{\mathbb P}
\newcommand{\1}{\mathbf 1}
\newcommand{\dd}{\,\mathrm d}
\newcommand{\eps}{\varepsilon}
\newcommand{\OPT}{\mathrm{OPT}}
\newcommand{\ALG}{\mathrm{ALG}}
\newcommand{\Reg}{\mathrm{Regret}}

\newcommand{\calL}{\mathcal L}

\title{Tight Lower Bounds for the Multi-Secretary Problem via \\ Bellman Certificates}
\author{Jiawei Zhang\\Department of Technologies, Operations, and Statistics\\Stern School of Business, New York University\\\texttt{jz31@stern.nyu.edu}}
\date{\today}

\begin{document}
\maketitle

\begin{abstract}
This paper studies additive regret in the multi-secretary problem, defined as the gap between the expected offline prophet reward and the reward of the best online policy.  Prior work established $O(\log T)$ regret for bounded-density distributions with connected support and $O((\log T)^2)$ upper bounds for bounded-density distributions with support gaps.  It was unknown whether the extra logarithmic factor is necessary even in the one-resource model.  We prove that it is necessary.  For a mixture of two separated uniform distributions at the critical capacity, the optimal regret grows at least on the order of $(\log T)^2$.  Thus the existing $O((\log T)^2)$ upper bounds for bounded-density gapped instances, including those implied by network revenue management models with continuous rewards, are tight in this simplest specialization.  The same framework also yields a matching lower bound for gapped distributions whose gap-facing densities vanish near the support edges; this companion result is given in the appendix.  The proofs use Bellman certificates: feasible solutions to a relaxation of the exact Bellman recursion.  This framework converts lower bounds into explicit certificate constructions and identifies why support gaps permit larger regret.
\end{abstract}

\section{Introduction}\label{sec:intro}

In the multi-secretary problem, a decision maker observes independent values $V_1,\ldots,V_T\sim F$ sequentially and may accept at most $k$ of them.  The objective is to maximize the expected total accepted value.  We measure performance by additive regret against the offline, or prophet, benchmark: the expected reward of a decision maker who observes all $T$ realized values before selecting.  Because all values are nonnegative and the offline decision maker has only the cardinality constraint, the offline optimum is obtained by selecting the largest $k$ realized values.  Thus the prophet benchmark is the expected sum of the largest $k$ order statistics, and the regret is this benchmark minus the expected reward of the optimal online policy.

The order of this regret depends sharply on the local structure of the distribution $F$.  \citet{ArlottoGurvich2019} show that, when $F$ has finite support, the regret is uniformly bounded in the horizon.  For continuous distributions with connected support and density bounded above and away from zero on the relevant interval, such as a uniform distribution, \citet{Lueker1998} and \citet{Bray2024} show that the tight order is logarithmic.

\citet{BesbesKanoriaKumar2024} show that substantially larger regret can arise when the distribution places little probability mass near the relevant selection threshold. They organize such instances by a local mass-accumulation exponent $\beta$.  The case $\beta=0$ means that the density remains bounded away from zero near that threshold, and includes the uniform distribution.  The case $\beta>0$ means that the local probability mass in an interval of length $\varepsilon$ is of order $\varepsilon^{\beta+1}$, equivalently that the density behaves like distance$^\beta$ when a density exists.  For connected support and $\beta>0$, their upper bound has a leading polynomial factor $T^{\beta/(2(\beta+1))}$, and they prove a matching lower bound.

They also study distributions with separated support intervals.  On each side of the gap, the mass near the endpoint adjacent to the gap may again follow the exponent $\beta$. When $\beta=0$, the gap-facing densities are bounded above and away from zero, as in a mixture of two separated uniform distributions.  This bounded-density gapped class is also a single-resource special case of the network revenue management model with continuous reward distributions studied by \citet{JiangMaZhang2025Degeneracy}.  In this regime, \citet{BesbesKanoriaKumar2024} and \citet{JiangMaZhang2025Degeneracy} give log-squared upper bounds.  For $\beta>0$, \citet{BesbesKanoriaKumar2024} give an upper bound with the same polynomial term as in the connected-support case, but multiplied by an additional logarithmic factor; their lower bound has the polynomial term but lacks this logarithmic factor.

Our main results resolve the remaining lower-bound questions for the gapped regimes considered here.  First, for the bounded-density gapped case $\beta=0$, we prove an $\Omega((\log T)^2)$ lower bound for the simplest such instance: a mixture of two separated uniform distributions.  The lower bound holds at the critical capacity $k_T=\lfloor qT\rfloor$, where $q$ is the probability mass of the upper support.  Therefore the log-squared upper bound is tight already in the single-resource multi-secretary problem, and consequently the log-squared bound of \citet{JiangMaZhang2025Degeneracy} for bounded-density continuous-reward network revenue management cannot in general be improved.  Second, for gapped distributions with $\beta>0$, we prove a lower bound with the additional logarithmic factor appearing in the upper bound of \citet{BesbesKanoriaKumar2024}.  This positive-$\beta$ gapped lower bound is proved for capacities shifted from the critical capacity: $k_T$ is of the form $qT$ plus a fixed positive multiple of $\sqrt{T\log T}$, where $qT$ is the expected number of observations from the upper support.  Unlike the two-uniform bounded-density case, where the sharp lower bound is obtained at $\lfloor qT\rfloor$, the positive-$\beta$ gapped theorem is stated for this moderate-deviation shifted-capacity region.  Table~\ref{tab:intro-summary} summarizes the known results and the new lower bounds.  For compactness, the table writes
\[
        R_\beta(T):=T^{\beta/(2(\beta+1))},
        \qquad
        G_\beta(T):=R_\beta(T)(\log T)^{(\beta+2)/(2(\beta+1))}.
\]
\vspace{-2em}
\begin{table}[ht]
\centering
\caption{Known and new lower bounds in the four distributional regimes.}
\label{tab:intro-summary}
\scriptsize
\setlength{\tabcolsep}{3pt}
\begin{tabularx}{\textwidth}{>{\raggedright\arraybackslash}p{0.12\textwidth} >{\raggedright\arraybackslash}p{0.23\textwidth} >{\raggedright\arraybackslash}p{0.21\textwidth} >{\raggedright\arraybackslash}X}
\toprule
Case & Known upper bound & Known lower bound & New lower bound \\
\midrule
No gap, $\beta=0$
& $O(\log T)$ \citep{Bray2024}
& $\Omega(\log T)$ \citep{Lueker1998,Bray2024}
& -- \\
Gap, $\beta=0$
& $O((\log T)^2)$ \citep{BesbesKanoriaKumar2024,JiangMaZhang2025Degeneracy}
& $\Omega(\log T)$ \citep{Lueker1998,Bray2024}
& $\Omega((\log T)^2)$, Theorem~\ref{thm:gapped-zero} \\
No gap, $\beta>0$
& $O(R_\beta(T))$ \citep{BesbesKanoriaKumar2024}
& $\Omega(R_\beta(T))$ \citep{BesbesKanoriaKumar2024}
& -- \\
Gap, $\beta>0$
& $O(G_\beta(T))$ \citep{BesbesKanoriaKumar2024}
& $\Omega(R_\beta(T))$ \citep{BesbesKanoriaKumar2024}
& $\Omega(G_\beta(T))$, Theorem~\ref{thm:gapped-shift-positive} (Appendix~\ref{app:gapped-positive}) \\
\bottomrule
\end{tabularx}
\end{table}
\subsection{Lower bounds by Bellman certificates}\label{subsec:intro-certificates}

We prove these tight lower bounds by a unified Bellman-certificate method.  \citet{BesbesKanoriaKumar2024} prove their lower bounds through direct hard-instance constructions and indistinguishability arguments over arbitrary candidate online policies.  Our method is structurally different: we do not analyze any specific online policy, but instead construct explicit Bellman certificates, feasible solutions to relaxations of the exact Bellman recursion for for the gap between the offline benchmark and the optimal online value.

The online multi-secretary problem has a standard dynamic-programming formulation.  The offline prophet benchmark also admits a recursion based on order statistics: after separating one current observation from the future observations, the current observation improves the offline top-$c$ sum exactly when it exceeds the future marginal order statistic.  Subtracting the online and offline recursions yields a Bellman equation for the prophet-online regret.

We denote a candidate regret certificate by $B=(B_s(c))$, where $s$ is the number of remaining observations and $c$ is the remaining capacity.  The target capacity is typically close to $qs$, where $q$ is the probability mass of the upper support in the gapped model.  In the gapped two-uniform case, $q$ is the probability mass of the upper support, so $qs$ is the expected number of future observations from that upper support.  Let $H_{s,c}$ be the $c$-th largest value among those $s-1$ future observations.  Let $\tau_s(c)=\E H_{s,c}$ and let $D_s(c)=B_{s-1}(c)-B_{s-1}(c-1)$ be the one-step capacity first difference of the certificate.  The Bellman residual can be written as
\begin{equation}\label{eq:intro-bellman-residual}
\resizebox{\textwidth}{!}{$
\underbrace{(1-q)B_{s-1}(c)+qB_{s-1}(c-1)-B_s(c)}_{M_s(c):\ \text{deterministic drift term}}
+\underbrace{\{\E\Delta(H_{s,c})-\Delta(\tau_s(c))\}}_{\text{order-statistic Jensen term}}
+\underbrace{\{\Delta(\tau_s(c))-\Delta(\tau_s(c)-D_s(c))\}}_{\text{finite-difference perturbation}}
\ge 0.
$}
\end{equation}
Here $\Delta$ is the convex residual; its curvature depends on the local shape of the distribution $F$.  When $F$ has density bounded away from zero near the operating threshold, $\Delta$ is strongly convex locally.  When $F$ has a support gap, $\Delta$ is flat on the empty interval.

The optimal prophet-online regret satisfies the corresponding Bellman equation with equality.  We prove that, in order to derive a regret lower bound, it is enough to construct a nonnegative candidate $B$ satisfying the boundary and base conditions, together with the one-sided Bellman certificate inequality \eqref{eq:intro-bellman-residual} for all states in the finite-horizon grid, with the boundary cases $c=0$ and $c\ge s$ treated by the imposed zero boundary values.  The resulting comparison principle is the main lower-bound tool.

The certificate construction has the following interpretation.  A candidate lower bound can be made large at the target state, where $c$ is close to $qs$, only if it can be propagated through the Bellman inequality at every nearby state.  The term $M_s(c)$ measures the resulting drift cost: it records how much the proposed certificate decreases, after the natural affine transport of capacity, when one moves from time $s-1$ to time $s$.  A taller certificate creates a larger drift burden, governed by how quickly the profile changes across both capacity and time.  Feasibility requires this cost to be paid for by the remaining terms in \eqref{eq:intro-bellman-residual}.

The first remaining term is the order-statistic Jense term.  It is independent of the certificate $B$ and depends only on the distribution of the prophet's offline marginal item ($H_{s,c}$) and on the shape of $\Delta$.  The final term is the finite-difference perturbation.  It depends on the spatial first difference $D_s(c)$ of the certificate and can be negative.  Thus, in constructing a certificate, one must balance three effects: deterministic drift, positive order-statistic slack, and possible loss from the finite difference.

This decomposition isolates the support-gap mechanism.  The operating threshold lies in the empty interval between two support components.  In the moderate band where $c-qs$ is of order $\sqrt{s\log s}$, the certificate is constructed so that both the mean prophet threshold $\tau_s(c)$ and the shifted threshold $\tau_s(c)-D_s(c)$ remain inside this empty interval.  Consequently the finite-difference perturbation in \eqref{eq:intro-bellman-residual} is zero there.  Randomness still matters, however: on sample paths with slightly too many or slightly too few observations from the upper support, the prophet's marginal item can fall near one of the two support boundaries adjacent to the gap.  At those boundaries $\Delta$ starts to bend, and the order-statistic Jensen term is positive.  The support gap therefore separates where slack is generated from where the certificate is transported.  Fluctuations of the offline marginal into the support edges generate positive slack, while the flat gap suppresses the local finite-difference penalty.  This separation permits a larger feasible certificate and yields the order $(\log T)^2$ lower bound at $k_T=\lfloor qT\rfloor$ for the two-uniform bounded-density gapped case.

\subsection{Additional related literature}\label{subsec:intro-related}

The multi-secretary problem is a single-resource special case of broader online resource allocation, network revenue management, and online linear programming models; see \citet{BalseiroBesbesPizarro2024}. Classical fluid and deterministic-linear-program approaches to network revenue management date back at least to \citet{GallegoVanRyzin1994} and \citet{TalluriVanRyzin1998}, and static fluid or bid-price controls generally incur regret of order $O(\sqrt{T})$; much of the subsequent literature asks when this baseline can be improved.

When rewards and resource consumptions have finite support, \citet{JasinKumar2012} show that resolving the fluid relaxation obtains constant regret under standard nondegeneracy assumptions. Recent works obtain constant regret without the standard nondegeneracy assumption, including \citet{ArlottoGurvich2019}, \citet{BumpensantiWang2020}, \citet{VeraBanerjee2021}, \citet{VeraEtAl2021}, \citet{FreundZhao2022}, and \citet{BanerjeeFreund2024}. These algorithms typically require repeatedly resolving the fluid relaxation, with the number of resolves ranging from $O(T)$ to $O(\log\log T)$. More recently, \citet{Gupta2024} and \citet{HeWeiXuYu2025} present algorithms that solve the fluid relaxation only once; they prove constant-regret bounds under conditions equivalent to existing nondegeneracy assumptions. Several recent papers present constant-regret algorithms without assuming known probability distributions; see \citep{ChenLiYe2024,WeiXuYu2023,XieMaXin2025} for nondegenerate problems and \citet{LiWangZhang2024} for degenerate problems.

When rewards or consumptions are continuously distributed, constant regret is generally no longer attainable. For online linear programming with continuous valuations, \citet{GaoGeXueSunYe2025} show that first-order algorithms can achieve $o(\sqrt{T})$ regret. \citet{LiYe2022} prove polylogarithmic regret under local strong-convexity and smoothness conditions, and \citet{Bray2024} sharpens this to a tight logarithmic bound with a different set of assumptions. \citet{JiangZhang2020StochasticConsumption} extend the result of \citet{ArlottoXie2020} to obtain logarithmic regret under regularity conditions. \citet{ChenWang2025} study general continuous reward distributions without the standard primal-stability, strict-complementarity, or second-order growth assumptions, though their conditions still imply uniqueness of the fluid dual.

Most directly related to our work, \citet{JiangMaZhang2025Degeneracy} study network revenue management with finitely many resource-consumption vectors but continuous, bounded-density reward distributions, obtaining $O((\log T)^2)$ regret without any additional assumptions. Their model contains our bounded-density gapped multi-secretary instances as a single-resource special case, and our lower bound shows the log-squared rate is already tight there.

Relative to these works, our contribution is a lower-bound method for additive regret.  The Bellman-certificate framework turns the lower-bound problem into the construction of explicit feasible solutions to relaxations of the exact Bellman recursion.  This Bellman-inequality structure is closest in spirit to \citet{VeraEtAl2021}, who use Bellman inequalities to prove constant-regret guarantees for tractable online allocation and pricing policies.  Their inequalities certify that a particular online policy is close to the offline benchmark, yielding an upper bound on regret.  Here the certificate has the opposite role: the certificates are feasible solutions to relaxations of the exact Bellman recursion, and the comparison principle converts them into lower bounds on the gap incurred by every online policy.  The same framework applies to bounded-density support gaps and gap-facing vanishing densities.

The paper is also related to the prophet-inequality literature, which studies multiplicative competitive ratios.  Classical single-unit prophet inequalities compare an online stopping rule with the maximum realized value; see \citet{HillKertz1982}, \citet{SamuelCahn1984}, and \citet{CorreaEtAl2017Posted}.  A large subsequent literature studies prophet inequalities under cardinality, matroid, and other feasibility constraints; see \citet{ChawlaHartlineMalecSivan2010}, \citet{Alaei2014}, \citet{JiangMaZhang2024Prophet}, and \citet{JiangMaZhang2025Tightness}, and the survey of \citet{CorreaEtAl2019Survey}.

\subsection{Organization}\label{subsec:intro-organization}

Section~\ref{sec:model} defines the model, the gapped support structure, and the affine-reference residual function. Section~\ref{sec:framework} formulates the Bellman-certificate optimization problem and proves the comparison principle. Section~\ref{sec:gapped-zero} proves the two-uniform bounded-density lower bound using the certificate in \eqref{eq:g0-certificate}, with the moderate-scale finite-difference expansions deferred to Appendix~\ref{app:g0-asymp}. Appendices~\ref{app:binomial}--\ref{app:fd-standard} collect the reusable binomial, order-statistic, and finite-difference estimates. Appendix~\ref{app:gapped-positive} proves the gapped vanishing-density lower bound for $\beta>0$ using the framework of Section~\ref{sec:framework}.

\section{Model setup}\label{sec:model}

This section introduce the basic model and the gapped distributional structure used in the main body of the paper.  The main text focuses on bounded-density gapped distributions, with the two-uniform mixture as the sharp lower-bound instance.  Gapped distributions whose gap-facing densities vanish, as in \citet{BesbesKanoriaKumar2024}, are treated in Appendix~\ref{app:gapped-positive}.

Let $V_1,V_2,\ldots$ be i.i.d. draws from a distribution $F$ on $[0,\infty)$.  Throughout the paper $F$ is atomless and has compact support.  The compactness assumption is used only to avoid irrelevant integrability issues and to keep all value functions finite.

A policy with horizon $T$ and capacity $k$ observes $V_t$ before deciding whether to accept it.  Formally, a policy is a sequence $A_t\in\{0,1\}$ such that $A_t$ is measurable with respect to the history generated by $(V_1,\ldots,V_t,A_1,\ldots,A_{t-1})$, and $\sum_{t=1}^T A_t\le k$ almost surely.
The online reward is $\ALG_T^\pi(k):=\sum_{t=1}^T A_tV_t$.
The offline prophet reward is the sum of the largest $k$ realized values, $\OPT_T(k):=\sum_{i=1}^{k} V_{i:T}^{\downarrow}$, where $V_{1:T}^{\downarrow}\ge\cdots\ge V_{T:T}^{\downarrow}$ are the descending order statistics, and the convention is that the sum is over $i\le \min\{k,T\}$.  Since the support is nonnegative, the at-most-$k$ offline benchmark agrees with the sum of the largest $k$ observations.

In this paper, we focus on the gapped model, i.e., $F$ has two separated support intervals $I_-:=[a_1,b_1]$ and $I_+:=[a_2,b_2]$, where $0\le a_1<b_1<a_2<b_2$, with gap width $G:=a_2-b_1>0$.  The distribution is a mixture $F=pF_-+qF_+$, where $p:=1-q$, $F_-$ is supported on $I_-$, and $F_+$ is supported on $I_+$.  Equivalently, $q=\Prob(V\in I_+)$ is the upper-support mass. 

The main body focuses on bounded-density gapped distributions, and the lower bound is proved for the two-uniform mixture, where $F_-$ and $F_+$ are uniform on their respective support intervals.  The vanishing-density case, where the gap-facing mass grows at a higher-order power, is discussed in the appendix.

For $s\ge0$ and integer $c$, let $J_s(c):=\sup_{\pi}\E[\sum_{t=1}^s A_tV_t]$ and $\Phi_s(c):=\E[\sum_{i=1}^{\min\{c,s\}}V_{i:s}^{\downarrow}]$ be the optimal online and expected prophet values, respectively, with $s$ periods remaining and capacity $c$.
We use the boundary convention
\begin{equation}\label{eq:boundary-model}
        J_s(0)=\Phi_s(0)=0,
        \qquad
        J_s(c)=\Phi_s(c)=s\E V \quad \text{for }c\ge s.
\end{equation}
The prophet-online gap is
\[
        B_s^\star(c):=\Phi_s(c)-J_s(c).
\]
The optimal additive regret for horizon $T$ and capacity $k$ is
\[
        \Reg(T,k;F):=B_T^\star(k).
        \]

Throughout the paper, constants denoted by $c,C,C_1,\ldots$ may depend on the fixed distributional primitives, on $q$, and on fixed parameters in the constructed certificates, but never on $s,T,c,k$.  Constants in the appendix may also depend on the local edge exponent introduced there.  We write $A_s\lesssim B_s$ if $A_s\le C B_s$ for a constant $C<\infty$, and $A_s\asymp B_s$ if both $A_s\lesssim B_s$ and $B_s\lesssim A_s$.  All statements involving ``sufficiently large'' $s$ or $T$ are uniform over the state ranges explicitly specified in the corresponding lemma.

\section{Common Bellman-certificate framework}\label{sec:framework}

This section contains the dynamic-programming identities and the comparison principle used in all lower-bound constructions.  The arguments are independent of the particular certificate constructed later.

\subsection{Online and offline dynamic recursions}\label{subsec:framework-recursions}

We first present the Bellman recursion for the online problem.  The boundary conditions are those in \eqref{eq:boundary-model}.  Define the tail-integral function
\[
        h(\tau):=\E[(V-\tau)^+]
        =\int_\tau^\infty (1-F(v))\,\dd v.
\]
This function will be used throughout the paper.  Since $F$ is atomless, $h$ is continuously differentiable and $h'(\tau)=-(1-F(\tau))=F(\tau)-1$.  In particular, $h$ is convex because $h'$ is nondecreasing.
The following result is well-known in the literature. We include a proof for completeness.
\begin{proposition}[Online Bellman recursion]\label{prop:online-bellman-framework}
For $s\ge1$ and $1\le c<s$,
\begin{equation}\label{eq:online-recursion}
        J_s(c)
        =J_{s-1}(c)+h\!\left(J_{s-1}(c)-J_{s-1}(c-1)\right).
\end{equation}
The optimal online action at state $(s,c)$ is a threshold rule: accept the current value $V$ if and only if
\begin{equation}\label{eq:online-threshold-framework}
        V\ge J_{s-1}(c)-J_{s-1}(c-1).
\end{equation}
\end{proposition}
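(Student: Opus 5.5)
The plan is to condition on the first observation and use the principle of optimality for the finite-horizon problem. Fix $s\ge1$ and $1\le c<s$. Because the values are i.i.d. and the horizon is finite, I would argue by backward induction on $s$, taking the boundary values in \eqref{eq:boundary-model} as the base, that $J_s(c)$ is attained by a Markov policy depending only on the state $(s,c)$. The argument is that after observing $V=V_1$ and choosing $A_1\in\{0,1\}$, the continuation problem is again a multi-secretary problem with $s-1$ periods and capacity $c-A_1$. The future draws are independent of the past history. So conditioning on the history, the best achievable continuation value is $J_{s-1}(c-A_1)$, and any policy's continuation value is at most this (the tower property plus the inductive definition of $J_{s-1}$ as a supremum).

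This yields the dynamic-programming equation
\[
J_s(c)=\E\bigl[\max\{J_{s-1}(c),\,V+J_{s-1}(c-1)\}\bigr].
\]
For the ``$\le$'' direction I apply the conditional bound just described, and then note that $A_1$ is $\sigma(V_1)$-measurable, so the best choice is pointwise the maximum. Randomization cannot help, since the objective is linear in the randomization. For the ``$\ge$'' direction I construct a policy: use the maximizing action at time $1$, then follow an $\eps$-optimal policy for the resulting state. Since $s$ is finite and $F$ has compact support, all values are finite and the supremum is approached; by induction it is in fact attained by threshold policies.

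Writing $\theta:=J_{s-1}(c)-J_{s-1}(c-1)$, we have $\max\{J_{s-1}(c),V+J_{s-1}(c-1)\}=J_{s-1}(c)+(V-\theta)^+$. Taking expectations gives $J_s(c)=J_{s-1}(c)+h(\theta)$, which is \eqref{eq:online-recursion}. The maximizer is to accept iff $V\ge\theta$, which is \eqref{eq:online-threshold-framework}; ties occur with probability zero because $F$ is atomless.

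One subtlety is the boundary states. When $c-1=0$, the convention $J_{s-1}(0)=0$ is consistent, since no further acceptances are possible. When $c=s-1$, the value $J_{s-1}(c)=(s-1)\E V$ is consistent with accepting everything. Both agree with the definition of $J$ as a supremum, because with capacity $\ge$ remaining periods and nonnegative values, accepting everything is optimal.

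The main obstacle is the measure-theoretic justification that the continuation value after the first step equals $J_{s-1}(\cdot)$. This requires care because policies may depend on the whole history, including past actions. I would handle it by noting that, for any policy $\pi$ and any fixed history, the induced rule on $(V_2,\dots,V_s)$ is itself an admissible policy for the $(s-1)$-period problem. This holds because of independence, and it gives the upper bound directly.
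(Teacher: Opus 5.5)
Your proposal is correct and follows the same route as the paper: condition on the current observation, write $J_s(c)=\E\max\{J_{s-1}(c),\,V+J_{s-1}(c-1)\}$, and use $\max\{a,V+b\}=a+(V-(a-b))^+$ to obtain the recursion and the threshold rule. The paper states the dynamic-programming step briefly, appealing to the finite state space. Your extra justification — history-dependent policies inducing admissible $(s-1)$-period policies, and the $\eps$-optimal continuation — simply fills in that step.
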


\begin{proof}
At state $(s,c)$, after observing $V$, rejection yields continuation value $J_{s-1}(c)$, while acceptance yields $V+J_{s-1}(c-1)$.  Therefore
\[
        J_s(c)=\E\max\{J_{s-1}(c),V+J_{s-1}(c-1)\}.
\]
Using $\max\{a,V+b\}=a+(V-(a-b))^+$ with $a=J_{s-1}(c)$, $b=J_{s-1}(c-1)$, and the definition of $h$ gives \eqref{eq:online-recursion}.  The same maximization gives the threshold rule \eqref{eq:online-threshold-framework}.  The state space is finite for each horizon, so the dynamic program attains the optimum.
\end{proof}

For the prophet recursion we separate one observation, called the current observation, from the remaining $s-1$ future observations.  Let $H_{s,c}$ be the $c$-th largest value among those $s-1$ future observations for $s\ge2$ and $1\le c<s$. We also refer to  $H_{s,c}$ as the offline marginal. Equivalently, if $S_r$ denotes the sum of the largest $r$ among the future observations, with $S_0=0$, then $H_{s,c}=S_c-S_{c-1}$.  Since $F$ is atomless, ties occur with probability zero; we use the standard order-statistic convention.
For $s\ge2$ and $1\le c<s$, define
\[
        \tau_s(c):=\Phi_{s-1}(c)-\Phi_{s-1}(c-1).
\]

\begin{proposition}[Offline marginal and prophet recursion]\label{prop:offline-bellman-framework}
It holds that 
\begin{equation}\label{eq:offline-recursion-general}
  \tau_s(c)=\E H_{s,c} \quad \mbox{and}   \quad   \Phi_s(c)=\Phi_{s-1}(c)+\E h(H_{s,c}).
\end{equation}
\end{proposition}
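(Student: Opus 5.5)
The plan is to work pathwise with the $s$ observations split into the current observation $V$ and the $s-1$ future observations. Let $S_r$ be the sum of the largest $r$ future observations, as in the definition of $H_{s,c}$. The key identity to establish first is that, almost surely, the top-$c$ sum of all $s$ observations equals
\[
S_c+(V-H_{s,c})^+ .
\]
To see this, consider the two cases. If $V\le H_{s,c}$, then $V$ is not among the top $c$ (ties have probability zero because $F$ is atomless), so the top-$c$ sum is $S_c$. If $V>H_{s,c}$, then $V$ enters the top $c$ and displaces the $c$-th largest future value, giving $S_{c-1}+V=S_c+(V-H_{s,c})$. The identity requires $1\le c<s$, so that $H_{s,c}$ is well defined among the $s-1$ future values. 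This is exactly the range in the statement.

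Next I take expectations. $V$ is independent of the future observations, hence of $H_{s,c}$. Conditioning on the future observations and using the definition of $h$ gives
\[
\E[(V-H_{s,c})^+\mid H_{s,c}]=h(H_{s,c}).
\]
Since $\E S_c=\Phi_{s-1}(c)$, taking the full expectation yields $\Phi_s(c)=\Phi_{s-1}(c)+\E h(H_{s,c})$.

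For the first claim, I use $H_{s,c}=S_c-S_{c-1}$, so $\E H_{s,c}=\E S_c-\E S_{c-1}=\Phi_{s-1}(c)-\Phi_{s-1}(c-1)=\tau_s(c)$. Here $c\le s-1$ ensures both $c$ and $c-1$ lie within the capacity range of $\Phi_{s-1}$, so there is no clash with the boundary convention \eqref{eq:boundary-model}; the edge case $c=s-1$ is fine because $S_{s-1}$ is simply the total sum. Compact support makes all expectations finite.

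There is no real obstacle. The only care needed is the tie/measure-zero argument in the pathwise identity and checking consistency with the boundary convention at $c=s-1$.
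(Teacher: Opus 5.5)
Your proposal is correct and follows essentially the same route as the paper: the pathwise identity that the top-$c$ sum of all $s$ observations equals $S_c+(V-H_{s,c})^+$, conditioning on $H_{s,c}$ and using the independence of $V$ to obtain $\E h(H_{s,c})$, and the telescoping identity $H_{s,c}=S_c-S_{c-1}$ to get $\tau_s(c)=\E H_{s,c}$. Your tie caveat is harmless but not actually needed, since when $V=H_{s,c}$ both sides of the pathwise identity equal $S_c$.
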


\begin{proof}
The first equality follows from $H_{s,c}=S_c-S_{c-1}$: by definition $\E S_c=\Phi_{s-1}(c)$ and $\E S_{c-1}=\Phi_{s-1}(c-1)$, hence
\[
        \E H_{s,c}=\E(S_c-S_{c-1})=\Phi_{s-1}(c)-\Phi_{s-1}(c-1).
\]
Now add an independent current observation $V$ to the future multiset.  The sum of the top $c$ values among all $s$ observations is
\[
        S_c+(V-H_{s,c})^+.
\]
The current value improves the previous top-$c$ block exactly when it exceeds the smallest value in that block, namely $H_{s,c}$; the improvement is then $V-H_{s,c}$.  Taking expectations and conditioning on $H_{s,c}$ gives
\[
        \Phi_s(c)=\E S_c+\E[(V-H_{s,c})^+]
        =\Phi_{s-1}(c)+\E h(H_{s,c}),
\]
which proves the second equality of \eqref{eq:offline-recursion-general}.
\end{proof}

For any $s\ge2$ and $1\le c<s$, define the prophet's one-step slack by
\[
        \delta_s(c):=\Phi_s(c)-\Phi_{s-1}(c)-h\bigl(\tau_s(c)\bigr).
        \]
On the gap $[b_1,a_2]$ we have $F(\tau)=p$, hence $h'(\tau)=F(\tau)-1=-q$ for $\tau\in(b_1,a_2)$.
Thus $h$ is affine on the gap.  Let $h_0$ denote its affine continuation from the gap to all of $\R$:
\[
        h_0(\tau):=h(b_1)-q(\tau-b_1).
\]
Thus
\begin{equation}\label{eq:h0-slope-model}
        h_0(\tau')-h_0(\tau)=-q(\tau'-\tau),
        \qquad \tau,\tau'\in\R.
\end{equation}
Define
\begin{equation}
        \Delta(\tau):=h(\tau)-h_0(\tau).
\end{equation}
Then $\Delta(\tau)=0$ for $\tau\in[b_1,a_2]$.
Since $h$ is convex and $h_0$ is affine, $\Delta$ is convex.  Moreover, $h_0$ agrees with a supporting affine segment of $h$ on the gap, so $\Delta$ is nonnegative on the support interval. 

\begin{lemma}[Offline slack identity]\label{lem:offline-slack-common}
For $s\ge2$ and $1\le c<s$,
\begin{equation}\label{eq:slack-common}
        \delta_s(c)=\E\Delta(H_{s,c})-\Delta\bigl(\tau_s(c)\bigr) \ge 0.
\end{equation}
\end{lemma}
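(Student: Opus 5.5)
The identity will follow by substituting the prophet recursion of Proposition~\ref{prop:offline-bellman-framework} into the definition of $\delta_s(c)$ and splitting $h=h_0+\Delta$. Inequality will then come from Jensen's inequality applied to the convex function $\Delta$.

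\textbf{Identity.} By \eqref{eq:offline-recursion-general}, $\Phi_s(c)-\Phi_{s-1}(c)=\E h(H_{s,c})$, so
\[
\delta_s(c)=\E h(H_{s,c})-h(\tau_s(c)).
\]
Substitute $h=h_0+\Delta$ in both terms. Since $h_0$ is affine and $H_{s,c}$ is integrable (the support is compact), we have $\E h_0(H_{s,c})=h_0(\E H_{s,c})$. By the first part of \eqref{eq:offline-recursion-general}, $\E H_{s,c}=\tau_s(c)$, so this equals $h_0(\tau_s(c))$. The affine parts therefore cancel, leaving
\[
\delta_s(c)=\E\Delta(H_{s,c})-\Delta(\tau_s(c)).
\]

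\textbf{Nonnegativity.} As noted before the lemma, $\Delta$ is convex on $\R$, being a convex function minus an affine one. Jensen's inequality then gives $\E\Delta(H_{s,c})\ge\Delta(\E H_{s,c})=\Delta(\tau_s(c))$. The function $\Delta$ is finite and continuous, and $H_{s,c}$ is bounded, so there are no integrability issues. Alternatively, one can apply Jensen directly to the convex $h$ to get $\E h(H_{s,c})\ge h(\tau_s(c))$.

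\textbf{Main obstacle.} The only delicate point is justifying $\E H_{s,c}=\tau_s(c)$ and the linearity of expectation for $h_0$. Both are already provided by Proposition~\ref{prop:offline-bellman-framework} and the compact-support assumption, so I expect no real difficulty. The ties and atomlessness conventions were handled in that proposition.
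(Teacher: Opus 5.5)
Your proposal is correct and follows the paper's own proof essentially verbatim: rewrite $\delta_s(c)=\E h(H_{s,c})-h(\tau_s(c))$ via the prophet recursion, split $h=h_0+\Delta$ so the affine parts cancel, then apply Jensen's inequality to the convex $\Delta$. Your remark that Jensen applied directly to the convex $h$ gives the same nonnegativity is also valid.
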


\begin{proof}
By Proposition~\ref{prop:offline-bellman-framework},
\[
        \delta_s(c)=\E h(H_{s,c})-h(\E H_{s,c}).
\]
Write $h=h_0+\Delta$.  Since $h_0$ is affine,
\[
        \E h_0(H_{s,c})=h_0(\E H_{s,c}).
\]
The affine contribution therefore cancels, leaving \eqref{eq:slack-common}.  Since $\Delta$ is convex, Jensen's inequality gives
\[
        \E\Delta(H_{s,c})\ge \Delta(\E H_{s,c})=\Delta(\tau_s(c)).
\]
\end{proof}

\subsection{Bellman certificate}\label{subsec:framework-comparison}

For any candidate function $B=(B_s(c))$,  $s\ge2$, and $1\le c<s$, define the one-step difference $D_s(c)$, affine transport drift $M_s(c)$, and support-source term $S_s(c,D)$ by
\begin{align}
        D_s(c)&:=B_{s-1}(c)-B_{s-1}(c-1),\label{eq:D-common}\\
        M_s(c)&:=(1-q)B_{s-1}(c)+qB_{s-1}(c-1)-B_s(c),\\
        S_s(c,D)&:=\E\Delta(H_{s,c})-
        \Delta\bigl(\tau_s(c)-D\bigr).
        \label{eq:S-common}
\end{align}
By Lemma~\ref{lem:offline-slack-common}, $S_s(c,0)=\delta_s(c)$. With this notation, we now formulate a finite-dimensional optimization problem whose feasible solutions provide certified lower bounds on the prophet-online regret. Such feasible solutions are called Bellman certificates.

Fix a horizon $T$, a target capacity $k\in\{0,\ldots,T\}$, and a base time $s_0\ge2$.  
Consider the following maximization problem:
\begin{equation*}
\begin{array}{ll}
\text{maximize}_{B} & B_T(k) \\[2mm]
\text{subject to}
& B_s(c)\ge0,
        \qquad s_0\le s\le T,\;0\le c\le \min\{s,k\},\\[1mm]
& B_s(c)=0,
        \qquad s_0\le s\le T,\quad c=0 \ \mbox{or}\ c=s\le k,\\[1mm]
& B_{s_0}(c)\le B_{s_0}^\star(c),
        \qquad 0\le c\le \min\{s_0,k\}, \\[1mm]
& M_s(c)+S_s(c,D_s(c))\ge0,
        \qquad s_0<s\le T,
        \quad 1\le c \le \min\{s-1,k\},
\end{array}
\tag{$\mathsf P(T,k,s_0)$}\label{eq:certificate-program}
\end{equation*}
\addtocounter{equation}{1}
where $D_s$, $M_s$, and $S_s$ are defined by \eqref{eq:D-common}--\eqref{eq:S-common}.  We call the final constraint the Bellman certificate inequality; in pointwise form it is
\begin{equation}\label{eq:common-bellman-residual}
        M_s(c)+S_s(c,D_s(c))\ge0.
\end{equation}
A collection $B$ satisfying all constraints in \eqref{eq:certificate-program} is called a feasible Bellman certificate on $[s_0,T]$: it is a feasible solution to a relaxation of the exact Bellman recursion.

The program \eqref{eq:certificate-program} is not an algorithm for computing the regret.  Its role is dual: any feasible value is a certified regret lower bound.  The next proposition gives the exact Bellman-residual equation for the true gap and the resulting comparison principle.  It also shows that the optimal value of \eqref{eq:certificate-program} is exactly the true regret gap, because $B^\star$ itself is feasible with equality in the Bellman certificate constraints.

\begin{proposition}[Comparison principle]\label{prop:common-comparison}
Every feasible solution of \eqref{eq:certificate-program} satisfies
\begin{equation}\label{eq:certificate-lower-bound-framework}
        B_T(k)\le \Reg(T,k;F).
\end{equation}
The optimal value of \eqref{eq:certificate-program} is $B_T^\star(k)$.
\end{proposition}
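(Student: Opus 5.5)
The plan has three steps: rewrite the exact Bellman recursion for the true gap $B^\star$ in the certificate variables, show that the resulting one-step map is monotone, and then run an induction forward in $s$ from the base time $s_0$.

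\emph{Step 1: exact Bellman equation for $B^\star$.} Fix $s\ge2$ and $1\le c<s$, and write $D^\star_s(c):=B^\star_{s-1}(c)-B^\star_{s-1}(c-1)$. Since $J=\Phi-B^\star$, the online threshold can be expressed through the prophet quantities:
\[
J_{s-1}(c)-J_{s-1}(c-1)=\tau_s(c)-D^\star_s(c).
\]
Subtract the online recursion \eqref{eq:online-recursion} (Proposition~\ref{prop:online-bellman-framework}) from the prophet recursion \eqref{eq:offline-recursion-general} (Proposition~\ref{prop:offline-bellman-framework}). This gives
\[
B^\star_s(c)=B^\star_{s-1}(c)+\E h(H_{s,c})-h\bigl(\tau_s(c)-D^\star_s(c)\bigr).
\]
Now split $h=h_0+\Delta$. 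Because $\E H_{s,c}=\tau_s(c)$, the slope identity \eqref{eq:h0-slope-model} shows that the affine part contributes
\[
\E h_0(H_{s,c})-h_0\bigl(\tau_s(c)-D^\star_s(c)\bigr)=-qD^\star_s(c).
\]
Combining the two displays yields
\[
B^\star_s(c)=(1-q)B^\star_{s-1}(c)+qB^\star_{s-1}(c-1)+S_s\bigl(c,D^\star_s(c)\bigr).
\]
That is, $M_s(c)+S_s(c,D^\star_s(c))=0$, so $B^\star$ satisfies the Bellman certificate inequality with equality.

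The remaining constraints of \eqref{eq:certificate-program} also hold for $B^\star$. Nonnegativity holds because the prophet dominates every online policy, so $\Phi_s(c)\ge J_s(c)$. The boundary values vanish by \eqref{eq:boundary-model}, and the base constraint holds with equality. Hence $B^\star$ is feasible, and once the bound \eqref{eq:certificate-lower-bound-framework} is established, the optimal value of the program is exactly $B^\star_T(k)=\Reg(T,k;F)$.

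\emph{Step 2: monotonicity of the one-step map (the main point).} For fixed $(s,c)$, define
\[
\mathcal G(x,y):=(1-q)x+qy+\E\Delta(H_{s,c})-\Delta\bigl(\tau_s(c)-(x-y)\bigr).
\]
The certificate inequality reads $B_s(c)\le\mathcal G(B_{s-1}(c),B_{s-1}(c-1))$, while $B^\star$ satisfies it with equality. I will show that $\mathcal G$ is nondecreasing in each argument.

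Since $F$ is atomless, $\Delta$ is $C^1$ with derivative
\[
\Delta'=h'+q=F-1+q=F-p.
\]
Evaluating at the point $u=\tau_s(c)-(x-y)$, this gives
\[
\partial_x\mathcal G=(1-q)+\Delta'(u)=F(u)\ge0,\qquad \partial_y\mathcal G=q-\Delta'(u)=1-F(u)\ge0.
\]
This is the usual monotonicity of $a+(V-(a-b))^+$ in $a$ and $b$, seen through the affine change of variables. To avoid differentiability issues entirely, I can instead note directly that $\mathcal G(x,y)-\E\Delta(H_{s,c})$ equals, up to a constant,
\[
x+\E\bigl[(V-(\tau_s(c)-(x-y)))^+\bigr]-\text{const},
\]
and verify monotonicity from that form.

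\emph{Step 3: induction.} I claim $B_s(c)\le B^\star_s(c)$ for all $s_0\le s\le T$ and $0\le c\le\min\{s,k\}$. At $s=s_0$ this is the base constraint. For $s>s_0$:
\begin{itemize}[leftmargin=*]
\item If $c=0$, or $c=s\le k$, both sides are $0$.
\item Otherwise $1\le c\le\min\{s-1,k\}$. Then $c$ and $c-1$ lie in the range $0\le\cdot\le\min\{s-1,k\}$ at time $s-1$, so the induction hypothesis applies to both. Using Step 2 and then Step 1,
\[
B_s(c)\le\mathcal G\bigl(B_{s-1}(c),B_{s-1}(c-1)\bigr)\le\mathcal G\bigl(B^\star_{s-1}(c),B^\star_{s-1}(c-1)\bigr)=B^\star_s(c).
\]
\end{itemize}
Taking $s=T$ and $c=k$ gives \eqref{eq:certificate-lower-bound-framework}.

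The only nontrivial ingredient is the monotonicity in Step 2. The remaining care is bookkeeping: checking that the index ranges in \eqref{eq:certificate-program} never call for values at $c>k$ or $c>s$ outside the boundary convention.
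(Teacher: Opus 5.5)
Your proof is correct and follows the paper's argument essentially step for step. The common steps are the exact residual identity $M^\star_s(c)+S_s(c,D^\star_s(c))=0$, monotonicity of the one-step map, forward induction from $s_0$, and feasibility of $B^\star$. The only difference is how monotonicity is checked: you compute $\partial_x\mathcal G=F(u)$ and $\partial_y\mathcal G=1-F(u)$, while the paper uses the representation $\Psi_{s,c}(x,y)=\Phi_s(c)-\E\max\{\Phi_{s-1}(c)-x,\,V+\Phi_{s-1}(c-1)-y\}$.

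There is one small slip in your optional aside. Up to additive constants, $\mathcal G(x,y)-\E\Delta(H_{s,c})$ equals $x-\E[(V-(\tau_s(c)-(x-y)))^+]$, with a minus sign rather than a plus. With a plus sign the expression would be nonincreasing in $y$. Your main derivative argument does not depend on this aside, so the proof stands.
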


\begin{proof}
We first record the exact residual identity satisfied by $B^\star$:
\begin{equation}\label {eq:true-bellman-residual-framework}
        M_s^\star(c)+S_s\bigl(c,D_s^\star(c)\bigr)=0.
\end{equation} Using the online recursion \eqref{eq:online-recursion}, the prophet recursion \eqref{eq:offline-recursion-general}, and the definition of $\delta_s(c)$,
$$    B_s^\star(c)
    =\Phi_s(c)-J_s(c)=\Phi_{s-1}(c)+h(\tau_s(c))+\delta_s(c)
          -J_{s-1}(c)
          -h\!\left(J_{s-1}(c)-J_{s-1}(c-1)\right).
$$
Since $J_{s-1}=\Phi_{s-1}-B_{s-1}^\star$, we have
\[
        J_{s-1}(c)-J_{s-1}(c-1)
        =\tau_s(c)-D_s^\star(c).
\]
Thus
\begin{equation}\label{eq:Bstar-h-form-framework}
        B_s^\star(c)
        =B_{s-1}^\star(c)+\delta_s(c)
        +h(\tau_s(c))
        -h\bigl(\tau_s(c)-D_s^\star(c)\bigr).
\end{equation}
By \eqref{eq:h0-slope-model}, for any $\tau$ and $D$,
\begin{equation}\label{eq:h-affine-split-framework}
        h(\tau)-h(\tau-D)
        =-qD+\Delta(\tau)-\Delta(\tau-D).
\end{equation}
Combining \eqref{eq:Bstar-h-form-framework}, \eqref{eq:h-affine-split-framework}, and \eqref{eq:slack-common} gives
\begin{align*}
        B_s^\star(c)
        &=B_{s-1}^\star(c)-qD_s^\star(c)
          +\E\Delta(H_{s,c})
          -\Delta\bigl(\tau_s(c)-D_s^\star(c)\bigr)\\
        &=(1-q)B_{s-1}^\star(c)+qB_{s-1}^\star(c-1)
          +S_s\bigl(c,D_s^\star(c)\bigr).
\end{align*}
Rearranging gives \eqref{eq:true-bellman-residual-framework}.

We next prove the comparison inequality by induction on $s$.  The base case $s=s_0$ is the base constraint in \eqref{eq:certificate-program}.  The boundary states $c=0$ and $c\ge s$ are immediate because both the online and prophet values agree there, and the program imposes the same zero boundary values on $B$.

Fix $s>s_0$ and assume $B_{s-1}(\cdot)\le B_{s-1}^\star(\cdot)$.  Let $1\le c<s$, and define, for real $x,y$,
\[
        \Psi_{s,c}(x,y):=(1-q)x+qy+S_s(c,x-y).
\]
By \eqref{eq:true-bellman-residual-framework},
\[
        B_s^\star(c)
        =\Psi_{s,c}\bigl(B_{s-1}^\star(c),B_{s-1}^\star(c-1)\bigr).
\]
The Bellman certificate constraint in \eqref{eq:certificate-program} is exactly
\[
        B_s(c)
        \le \Psi_{s,c}\bigl(B_{s-1}(c),B_{s-1}(c-1)\bigr).
\]
It remains to use monotonicity.  Set $\tau:=\tau_s(c)$ and $\delta:=\delta_s(c)$.  The map $\Psi_{s,c}$ admits the representation
\begin{equation}\label{eq:Psi-max-representation-framework}
        \Psi_{s,c}(x,y)
        =\Phi_s(c)-
        \E\max\left\{
              \Phi_{s-1}(c)-x,
              V+\Phi_{s-1}(c-1)-y
        \right\}.
\end{equation}
Apply $\E\max\{A,V+B\}=A+h(A-B)$ with
$A=\Phi_{s-1}(c)-x$ and $B=\Phi_{s-1}(c-1)-y$, and use
$\Phi_s(c)=\Phi_{s-1}(c)+h(\tau)+\delta$.  Increasing either $x$ or $y$ decreases one argument of the maximum in \eqref{eq:Psi-max-representation-framework}; hence $\Psi_{s,c}$ is nondecreasing in each coordinate.  Therefore
\[
        B_s(c)
   \le \Psi_{s,c}\bigl(B_{s-1}(c),B_{s-1}(c-1)\bigr)
   \le \Psi_{s,c}\bigl(B_{s-1}^\star(c),B_{s-1}^\star(c-1)\bigr)
        =B_s^\star(c),
\]
where the middle inequality uses the induction hypothesis.  Inequality \eqref{eq:certificate-lower-bound-framework} follows from $B_T^\star(k)=\Reg(T,k;F)$.

Finally, $B^\star$ itself is feasible: nonnegativity follows because the prophet dominates any online policy, the boundary and base constraints hold with equality, and the Bellman certificate constraints hold with equality by \eqref{eq:true-bellman-residual-framework}.  Hence the objective value $B_T^\star(k)$ is attainable in \eqref{eq:certificate-program}.  Together with the comparison part, this proves that the program value is exactly $B_T^\star(k)$.
\end{proof}

\subsection{The finite base condition}\label{subsec:framework-base}

The explicit certificates used later are asymptotic constructions valid for all sufficiently large $s$.  The comparison principle starts at a finite base time $s_0$, so we need a simple way to verify the base condition after fixing $s_0$.

\begin{lemma}[Strict finite-time prophet advantage]\label{lem:strict-common}
Assume $F$ is atomless and nondegenerate.  Then, for every $s\ge2$ and every $1\le c<s$, it holds that
$B_s^\star(c)>0.$
\end{lemma}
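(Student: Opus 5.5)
Plan: induction on $s$ using the exact residual identity from the proof of Proposition~\ref{prop:common-comparison}, namely
\[
B_s^\star(c)=\Psi_{s,c}\bigl(B_{s-1}^\star(c),B_{s-1}^\star(c-1)\bigr),
\]
where $\Psi_{s,c}$ is nondecreasing in each coordinate. The additional ingredient needed is a strict-positivity source, and it comes from the prophet slack $\delta_s(c)$. Rather than the $\Delta$-decomposition, I will use the equivalent $h$-form \eqref{eq:Bstar-h-form-framework}:
\[
B_s^\star(c)=B_{s-1}^\star(c)+\delta_s(c)+h(\tau_s(c))-h\bigl(\tau_s(c)-D_s^\star(c)\bigr).
\]
Since $J_{s-1}$ is the optimum of the maximization $\E\max\{J_{s-1}(c),V+J_{s-1}(c-1)\}$, we have, for every threshold $x$,
\[
J_{s-1}(c)+h\bigl(J_{s-1}(c)-J_{s-1}(c-1)\bigr)\ \ge\ J_{s-1}(c)+\E[(V-x)\1\{V\ge x\}]+\bigl(J_{s-1}(c-1)-J_{s-1}(c)\bigr)\Prob(V\ge x).
\]
This is just the fact that $h(a)=\sup_x\{\E[(V-x)\1\{V\ge x\}]+(x-a)\Prob(V\ge x)\}$. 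It is cleaner to argue as follows. Write
\[
B_s^\star(c)=\Phi_s(c)-J_{s-1}(c)-h(\tau_s(c)-D^\star)
\]
and use the weak bound $B_{s-1}^\star\ge 0$ (prophet dominates online) together with monotonicity of $\Psi_{s,c}$:
\[
B_s^\star(c)\ \ge\ \Psi_{s,c}(0,0)=\Phi_s(c)-\Phi_{s-1}(c)-h(\tau_s(c))=\delta_s(c).
\]
So it suffices to show $\delta_s(c)>0$ for $s\ge2$, $1\le c<s$. This avoids any induction: positivity reduces to strict Jensen for $h$ at the random offline marginal $H_{s,c}$.

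The main step is to show $\delta_s(c)=\E h(H_{s,c})-h(\E H_{s,c})>0$. By convexity of $h$ this quantity is $\ge0$. Equality would force $h$ to be affine on an interval carrying all the mass of $H_{s,c}$, meaning $h$ agrees with a supporting line a.s. at $H_{s,c}$. I will argue this is impossible by a support argument. Let $m$ and $M$ be the essential infimum and supremum of $F$'s support, with $m<M$ by nondegeneracy. The order statistic $H_{s,c}$ of $s-1\ge c$ i.i.d. atomless draws has a distribution whose support has the same infimum $m$ and supremum $M$: with positive probability all draws lie near $m$, or all lie near $M$. Hence $H_{s,c}$ is nondegenerate and takes values arbitrarily close to both $m$ and $M$ with positive probability. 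On the other hand, $h'(\tau)=F(\tau)-1$ equals $-1$ for $\tau<m$, equals $0$ for $\tau\ge M$, and is nondecreasing. Therefore $h$ is not affine on any interval containing points near $m$ and near $M$. Precisely, $h'$ must take some value in $(-1,0)$ somewhere in $(m,M)$, and it cannot be constant on all of $(m,M)$ because $h'\to-1$ at $m^+$ and $h'\to 0$ at $M^-$, by continuity of $F$ (atomless). Any supporting line $\ell$ at $\E H_{s,c}$ therefore satisfies $h>\ell$ on a neighborhood of $m$ or of $M$, and that neighborhood is charged by $H_{s,c}$. This gives $\E h(H_{s,c})>\ell(\E H_{s,c})=h(\E H_{s,c})$.

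The expected obstacle is making the strict-Jensen step rigorous without density assumptions. The key points are, first, that $H_{s,c}$ charges every neighborhood of both support endpoints, which holds because $\Prob(\text{all } s-1 \text{ draws in } U)>0$ for any neighborhood $U$ of an endpoint. Second, a single line cannot touch the convex $h$ near both $m$ and $M$ unless $h$ is affine on $[m,M]$. That would contradict $h'(m^+)=-1\neq0=h'(M^-)$. Combining, $B_s^\star(c)\ge\delta_s(c)>0$.
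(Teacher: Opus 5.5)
Your proof is correct, and it takes a genuinely different route from the paper's.

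\textbf{What the paper does.} The paper argues at the level of policies. Since $\OPT\ge\ALG^\pi$ pathwise, it suffices to rule out $\ALG^\pi=\OPT$ almost surely. Equality would force $A_1$ to be the indicator that $V_1$ lies in the top-$c$ block, namely $\{N_{V_1}<c\}$. But conditional on $V_1=v$ in a suitable interval, this event has probability strictly between $0$ and $1$ and is independent of $A_1$, which is a contradiction. Attainment of an optimal policy then gives $B_s^\star(c)>0$.

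\textbf{What you do.} You stay inside the Bellman machinery.
\begin{enumerate}[label=(\roman*),leftmargin=2em]
\item Monotonicity of the exact update $\Psi_{s,c}$, together with $B_{s-1}^\star\ge0$, gives $B_s^\star(c)\ge\Psi_{s,c}(0,0)=\delta_s(c)$. This is the loss from making only the first decision online and letting a prophet play the remaining $s-1$ periods.
\item Strict Jensen for $h$ at $H_{s,c}$ then finishes. The contact set of $h$ with a supporting line at $\E H_{s,c}$ is a closed interval. It cannot contain $[m,M]$, because $h'=F-1$ runs continuously from $-1$ to $0$ there. Meanwhile $H_{s,c}$ charges every one-sided neighborhood of both $m$ and $M$.
\end{enumerate}

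\textbf{What each route buys.} Your route gives a quantitative bound, $B_s^\star(c)\ge\delta_s(c)$. In effect, $B_s(c)=\delta_s(c)$ (extended by zero) is the simplest feasible certificate, since $\Psi_{s,c}(x,y)\ge\Psi_{s,c}(0,0)$ for $x,y\ge0$. It also avoids the regular-conditional-law and attainment steps. The paper's route needs none of the recursion identities and exhibits the loss directly as an information constraint on the first decision.

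\textbf{Two small points.}

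First, the paper writes $\Psi_{s,c}$ through $S_s$ and $\Delta=h-h_0$, where $h_0$ has the gap slope $-q$. The lemma, however, is stated for an arbitrary atomless nondegenerate $F$. You should therefore check monotonicity directly in your $h$-form,
$\Psi_{s,c}(x,y)=x+\delta_s(c)+h(\tau_s(c))-h(\tau_s(c)-x+y)$.
Here $\partial_x\Psi_{s,c}=F(\tau_s(c)-x+y)\ge0$ and $\partial_y\Psi_{s,c}=1-F(\tau_s(c)-x+y)\ge0$. Alternatively, use the max representation, which is also gap-free.

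Second, delete the abandoned detour in your first paragraph. It is not used, and its displayed inequality has the wrong coefficient. It should read $(x-a)\Prob(V\ge x)$ with $a=J_{s-1}(c)-J_{s-1}(c-1)$, matching your own $\sup_x$ formula for $h(a)$.
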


\begin{proof}
Let $\pi$ be any nonanticipating policy with capacity $c$, and let
\[
        \ALG^\pi:=\sum_{t=1}^s A_tV_t,
        \qquad
        \OPT:=\sum_{i=1}^c V_{i:s}^{\downarrow}.
\]
Pathwise, every feasible online policy satisfies $\ALG^\pi\le\OPT$.  It therefore suffices to show that no nonanticipating policy can satisfy $\ALG^\pi=\OPT$ almost surely.

Suppose, toward a contradiction, that some policy satisfies $\ALG^\pi=\OPT$ almost surely.  Since $F$ is atomless, the $s$ realized values are almost surely distinct.  On this event, the unique subset attaining $\OPT$ is the set of the $c$ largest observations.  Thus the policy must select exactly this top-$c$ subset almost surely.  In particular,
\begin{equation}\label{eq:A1-topc-framework}
        A_1=
        \1\{V_1\text{ belongs to the top-}c\text{ observations among }V_1,\ldots,V_s\}
        \quad\text{a.s.}
\end{equation}
Choose an interval $I$ with positive $F$-mass such that
\[
        0<\inf_{v\in I}\Prob(V>v)
        \le \sup_{v\in I}\Prob(V>v)<1;
\]
such an interval exists because $F$ is atomless and nondegenerate.  Conditional on $V_1\in I$, the future observations remain independent of the first decision and have a positive probability of producing at least $c$ values above $V_1$, and also a positive probability of producing fewer than $c$ values above $V_1$.  Equivalently, by the regular conditional distribution of $V_1$, for $F$-almost every $v\in I$ the binomial variable
\[
        N_v:=\#\{2\le t\le s:V_t>v\}\sim\mathrm{Bin}(s-1,\Prob(V>v))
\]
satisfies
\[
        \Prob(N_v<c)>0,
        \qquad
        \Prob(N_v\ge c)>0.
\]
For such $v$, the event that $V_1$ belongs to the top-$c$ block is exactly $\{N_v<c\}$, up to a null tie event.

The first decision $A_1$ is measurable with respect to the information available at time one, together with any policy randomization independent of the future; hence, under the regular conditional law given $V_1=v$, it is independent of $N_v$.  Let $a(v):=\Prob(A_1=1\mid V_1=v)$.  If \eqref{eq:A1-topc-framework} held almost surely, then for $F$-almost every such $v$ we would have $A_1=\1\{N_v<c\}$ almost surely under the conditional law.  Therefore
\[
        0=\Prob(A_1=1,N_v\ge c\mid V_1=v)=a(v)\Prob(N_v\ge c),
\]
which forces $a(v)=0$, and also
\[
        0=\Prob(A_1=0,N_v<c\mid V_1=v)=(1-a(v))\Prob(N_v<c),
\]
which forces $a(v)=1$.  This contradiction proves that every online policy loses to the prophet with positive probability.  Since an optimal online policy is attained by Proposition~\ref{prop:online-bellman-framework}, the expected gap $B_s^\star(c)$ is strictly positive.
\end{proof}

\begin{corollary}[Choosing the base scale]\label{cor:base-scale-framework}
Fix $s_0\ge2$.  Let $\widetilde B_s(c)=\eta\,\bar B_s(c)$ be a nonnegative candidate family depending linearly on a scale parameter $\eta>0$, and suppose $\bar B_{s_0}(c)<\infty$ for every $0\le c\le s_0$.  If the Bellman certificate inequality is verified for all $s>s_0$ whenever $0<\eta\le\eta_0$, then, after possibly shrinking $\eta\in(0,\eta_0]$, the base condition $ \widetilde B_{s_0}(c)\le B_{s_0}^\star(c)$ holds for $0 \le c\le s_0$.
\end{corollary}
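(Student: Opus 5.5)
The argument is a finite-dimensional scaling argument at the single time $s_0$. The base condition involves only the finitely many states $0\le c\le s_0$. At each such state the target $B^\star_{s_0}(c)$ is either zero, at the boundary, or strictly positive, by Lemma~\ref{lem:strict-common}. We compare these finitely many values with $\eta\bar B_{s_0}(c)$ and pick $\eta$ small enough.

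First, the boundary states. For $c=0$ and $c=s_0$ (the latter only when $s_0\le k$), the program imposes $B_{s_0}(c)=0$. We assume, as part of the candidate family being admissible, that $\bar B$ satisfies these zero boundary values, so $\widetilde B_{s_0}(c)=0=B^\star_{s_0}(c)$ by \eqref{eq:boundary-model}. (If $\bar B_{s_0}(s_0)$ were nonzero, the boundary constraint of \eqref{eq:certificate-program} would already fail, so this is forced.)

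Second, the interior states $1\le c\le s_0-1$. Since $s_0\ge2$, Lemma~\ref{lem:strict-common} gives $B^\star_{s_0}(c)>0$; it applies because $F$ is atomless and nondegenerate, which holds for the gapped two-component mixtures considered here. Set
\[
\underline m:=\min_{1\le c\le s_0-1}B^\star_{s_0}(c)>0,\qquad \bar M:=\max_{1\le c\le s_0-1}\bar B_{s_0}(c)<\infty,
\]
where both extrema are over finite sets and finiteness of $\bar M$ is the hypothesis $\bar B_{s_0}(c)<\infty$. Take $\eta:=\min\{\eta_0,\underline m/\bar M\}$, or $\eta=\eta_0$ if $\bar M=0$. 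Then $\widetilde B_{s_0}(c)=\eta\bar B_{s_0}(c)\le \underline m\le B^\star_{s_0}(c)$ for every interior $c$. Nonnegativity of $\bar B$ ensures that the inequality is meaningful and that nonnegativity is preserved under scaling.

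Finally, since $\eta\in(0,\eta_0]$, the hypothesis says the Bellman certificate inequality still holds for all $s>s_0$ at this $\eta$. So shrinking $\eta$ to meet the base condition does not break the propagation constraints, and $\widetilde B$ is feasible for \eqref{eq:certificate-program}. There is no real obstacle. The only delicate point is the order of choices: $s_0$ must be fixed before $\eta$, since $\underline m$ depends on $s_0$ and may be very small, and the verification of the certificate inequality must hold uniformly for every $\eta\le\eta_0$. That uniformity is exactly what the hypothesis supplies.
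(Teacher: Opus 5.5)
Your proposal is correct and follows essentially the same route as the paper. The boundary states are zero by construction of the candidate family. The interior minimum $\min_{1\le c<s_0}B^\star_{s_0}(c)$ is strictly positive by Lemma~\ref{lem:strict-common} over a finite set, so choosing $\eta\le\min\{\eta_0,\,m_0/K_0\}$ (or any $\eta\le\eta_0$ when $K_0=0$) gives the base condition without disturbing the verified inequalities for $s>s_0$.
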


\begin{proof}
The boundary states satisfy $\widetilde B_{s_0}(0)=\widetilde B_{s_0}(s_0)=B_{s_0}^\star(0)=B_{s_0}^\star(s_0)=0$ in the constructions below.  On the finite set of interior states, Lemma~\ref{lem:strict-common} gives
\[
        m_0:=\min_{1\le c<s_0}B_{s_0}^\star(c)>0.
\]
Let $K_0:=\max_{1\le c<s_0}\bar B_{s_0}(c)<\infty$.  If $K_0=0$ there is nothing to prove; otherwise choose $\eta\le m_0/K_0$, in addition to $\eta\le\eta_0$.
\end{proof}

\section[Gapped bounded-density case]{Mixture of Two Uniform Distributions}\label{sec:gapped-zero}

This section treats the bounded-density case in the gapped model, which corresponds to the case $\beta=0$ in the more general distribution introduced in Appendix~\ref{app:edge-order}.  We focus on the two-uniform mixture, which gives the sharp bounded-density lower bound.  Throughout this section we assume
\begin{equation}\label{eq:g0-uniform-model}
        F_-=\mathrm{Unif}[a_1,b_1],
        \qquad
        F_+=\mathrm{Unif}[a_2,b_2].
\end{equation}
Denote $L_-:=b_1-a_1$ and $L_+:=b_2-a_2$.
We now prove the main lower bound of the paper: the bounded-density gapped case has regret of order $(\log T)^2$.  

The proof builds a certificate on the moderate-deviation scale $\sqrt{s\log s}$, with height $(\log s)^2$ at the critical capacity.  The verification has four regimes, summarized below: the near-critical range, the bounded moderate-deviation range, the large moderate-deviation range, and the outer cutoff range.

A direct calculation for the two-uniform mixture gives
\begin{equation}\label{eq:g0-Delta-exact}
        \Delta(\tau)
        =\frac{q}{2L_+}(\tau-a_2)_+^2
        +\frac{p}{2L_-}(b_1-\tau)_+^2,
        \qquad \tau\in[a_1,b_2].
\end{equation}
This follows by integrating $h'(\tau)+q=F(\tau)-p$ on the lower and upper support intervals and using $\Delta=0$ on $[b_1,a_2]$.

\subsection{Candidate Certificate}
In this section, we construct a candidate certificate $B$ as follows.
For $s\ge s_0$ and $1\le c<s$, define
\begin{equation}\label{eq:g0-certificate}
        B_s(c):=\eta(\log s)^2\Omega(c/s)\varphi_s(z_s(c)),
\end{equation}
and set $B_s(0)=0$ and $B_s(c)=0$ for $c\ge s$.   Here $\eta$, $\Omega$, and $\varphi_{s}$ are constructed below.

  We choose constants in the following order.  First choose
\begin{equation}
        0<\alpha_0<\alpha_1<\alpha_2<\alpha_3<\alpha_4<\alpha_5<\frac12.
\end{equation}
Choose $\rho>0$ such that $\rho < \min\{p/4, q/4\}$. 
Choose constants $f_0>0$, $\kappa>8f_0$, and $\varphi_\star>f_0+\kappa\alpha_2^2$.  Finally choose $\lambda>0$ so large that
\begin{equation}\label{eq:g0-lambda-choice}
        \Bigl(\frac{11}{16}-\frac12\Bigr)\lambda\alpha_5>3.
\end{equation}
This lower bound on $\lambda$ is imposed for the large-deviation verification in Proposition~\ref{prop:g0-feasible}.  The small multiplicative parameter $\eta>0$ and the base time $s_0$ are chosen last.

Let $\Omega\in C^2([0,1])$ be a cutoff satisfying
\begin{equation}
        0\le\Omega\le1,
        \qquad
        \Omega(x)=1\text{ if } |x-q|\le\rho,
        \qquad
        \Omega(x)=0\text{ if } |x-q|\ge2\rho.
\end{equation}

Let $\psi:\mathbb R\to[0,1]$ be a $C^\infty$ transition function such that $\psi=0$ on $(-\infty,0]$, $\psi=1$ on $[1,\infty)$, $\psi'\ge0$, and all derivatives vanish at $0$ and $1$.  Define
\[
        \psi_1(z):=\psi\left(\frac{z-\alpha_1}{\alpha_2-\alpha_1}\right),
        \qquad
        \psi_2(z):=\psi\left(\frac{z-\alpha_3}{\alpha_4-\alpha_3}\right).
\]
Define
For all sufficiently large $s$, define $\varphi_s:[0,\infty)\to(0,\infty)$ by
\begin{equation}\label{eq:g0-profile-def}
\varphi_s(z):=
\begin{cases}
 f_0+\kappa z^2,
        &0\le z\le\alpha_1,\\
 (1-\psi_1(z))(f_0+\kappa z^2)+\psi_1(z)\varphi_\star,
        &\alpha_1\le z\le\alpha_2,\\
 \varphi_\star,
        &\alpha_2\le z\le\alpha_3,\\
 (1-\psi_2(z))\varphi_\star+\psi_2(z)\mathcal T_s(z),
        &\alpha_3\le z\le\alpha_4,\\
 \mathcal T_s(z):=\varphi_\star
        \exp\left\{-\frac{\lambda}{\log s}\left(z-\frac{\alpha_3+\alpha_4}{2}\right)\right\},
        &z\ge\alpha_4.
\end{cases}
\end{equation}
The flatness of $\psi$ at the joining points makes $\varphi_s$ a $C^3$ function.

\begin{lemma}[Profile bounds]\label{lem:g0-profile}
Uniformly in all sufficiently large $s$, the functions $\varphi_s$ defined in \eqref{eq:g0-profile-def} satisfy:
\begin{enumerate}[label=(P\arabic*),leftmargin=2.5em]
\item For $z \in [\alpha_1,\alpha_2]$, $\varphi_s'(z)\ge0$ and $|\varphi_s^{(j)}(z)|\le C_\lambda$ for $j=1,2,3$.
\item For $z \in [\alpha_3,\alpha_4]$, $\varphi_s(z)\asymp \varphi_\star$ and $
        |\varphi_s^{(j)}(z)|\le C_\lambda \frac{\varphi_s(z)}{\log s},
        \qquad j=1,2,3.
$
\item For every fixed $C_0<\infty$, uniformly over $\alpha_3\le z\le C_0\sqrt{s/(\log s)}$,
\begin{equation}\label{eq:g0-profile-time}
        |\varphi_{s-1}(z)-\varphi_s(z)|
        \le C_\lambda\frac{(1+z)\varphi_s(z)}{s(\log s)^2}.
\end{equation}
For every fixed $R<\infty$, uniformly over $0\le z\le R$ and $j=1,2$,
\begin{equation}\label{eq:g0-profile-der-time}
        |\varphi_{s-1}^{(j)}(z)-\varphi_s^{(j)}(z)|
        \le \frac{C_{\lambda,R}}{s(\log s)^2}.
\end{equation}
\end{enumerate}
\end{lemma}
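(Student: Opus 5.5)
The proof is a direct piecewise computation from \eqref{eq:g0-profile-def}. Set $m:=(\alpha_3+\alpha_4)/2$ and $\epsilon_s:=\lambda/\log s$, so that $\mathcal T_s(z)=\varphi_\star e^{-\epsilon_s(z-m)}$.

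\emph{Property (P1).} On $[\alpha_1,\alpha_2]$ the profile $\varphi_s$ does not depend on $s$. Differentiating gives
\[
\varphi_s'=\psi_1'\bigl(\varphi_\star-f_0-\kappa z^2\bigr)+(1-\psi_1)\,2\kappa z .
\]
Both terms are nonnegative. The second is clear since $\psi_1\le 1$ and $z\ge0$. For the first, $\psi_1'\ge0$, and $\varphi_\star-f_0-\kappa z^2\ge \varphi_\star-f_0-\kappa\alpha_2^2>0$ by the choice of $\varphi_\star$. The derivatives of orders $1,2,3$ are polynomials in $z$, $\psi^{(j)}$ and $(\alpha_2-\alpha_1)^{-j}$, so they are bounded by a fixed constant.

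\emph{Property (P2).} On $[\alpha_3,\alpha_4]$ we have $|z-m|\le\alpha_4-\alpha_3$, so $\mathcal T_s(z)=\varphi_\star(1+O(1/\log s))$. Since $\varphi_s$ is a convex combination of $\varphi_\star$ and $\mathcal T_s$, it follows that $\varphi_s\asymp\varphi_\star$. Write $\varphi_s=\varphi_\star+\psi_2(\mathcal T_s-\varphi_\star)$ and apply Leibniz. Each derivative then has the form $\sum_j\binom{\cdot}{j}\psi_2^{(i-j)}(\mathcal T_s-\varphi_\star)^{(j)}$. Here $|\mathcal T_s-\varphi_\star|\le C\epsilon_s$ and $|\mathcal T_s^{(j)}|\le\epsilon_s^j\varphi_\star$, while the $\psi_2^{(i)}$ are bounded constants. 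Every term therefore carries at least one factor $\epsilon_s$, which gives the bound $C_\lambda\varphi_s/\log s$.

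\emph{Property (P3), time difference \eqref{eq:g0-profile-time}.} The only $s$-dependence enters through $\epsilon_s$, and $\epsilon_{s-1}-\epsilon_s=\lambda\log\frac{s}{s-1}/(\log s\log(s-1))\le C_\lambda/(s(\log s)^2)$. On $[\alpha_3,\alpha_4]$ the difference is $\psi_2(\mathcal T_{s-1}-\mathcal T_s)$. For $z\ge\alpha_4$ the difference is $\mathcal T_{s-1}-\mathcal T_s$. In both cases
\[
|\mathcal T_{s-1}(z)-\mathcal T_s(z)|=\mathcal T_s(z)\bigl|e^{-(\epsilon_{s-1}-\epsilon_s)(z-m)}-1\bigr|.
\]
The key step is to check that the exponent stays bounded. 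Since $z\le C_0\sqrt{s/\log s}$, we get $(\epsilon_{s-1}-\epsilon_s)|z-m|\le C/(\sqrt{s}(\log s)^{5/2})\to0$. Hence $|e^x-1|\le 2|x|$ for large $s$, and the difference is at most $C_\lambda(1+z)\mathcal T_s(z)/(s(\log s)^2)$. On $[\alpha_4,\infty)$ we have $\mathcal T_s=\varphi_s$. On $[\alpha_3,\alpha_4]$ we use $\mathcal T_s\asymp\varphi_\star\asymp\varphi_s$ from (P2). Below $\alpha_3$ the profile is $s$-independent, so the difference vanishes. This proves \eqref{eq:g0-profile-time}; the constant depends on $C_0$ only through the vanishing of the exponent, which is uniform.

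\emph{Property (P3), derivative differences \eqref{eq:g0-profile-der-time}.} On $[0,R]$ we again need only $z\ge\alpha_3$. The $j$-th derivative in $z$ of the difference is a Leibniz combination of the bounded $\psi_2^{(i)}$ with $\partial_z^{i}(\mathcal T_{s-1}-\mathcal T_s)$. We have
\[
\partial_z^i\mathcal T_s=\varphi_\star(-\epsilon_s)^ie^{-\epsilon_s(z-m)}.
\]
Its difference in $s$ is controlled, by the mean value theorem in the variable $\epsilon$, by $|\epsilon_{s-1}-\epsilon_s|$ times $\sup_\epsilon|\partial_\epsilon[\epsilon^ie^{-\epsilon(z-m)}]|$. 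That supremum is $\le C_R$ for $\epsilon\le 2\lambda$ and $|z-m|\le R$. This yields $C_{\lambda,R}/(s(\log s)^2)$.

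The only delicate point is the uniformity in (P3) over the growing range $z\lesssim\sqrt{s/\log s}$. It works because $\epsilon_{s-1}-\epsilon_s$ is of order $1/(s(\log s)^2)$, which is much smaller than $1/z$. This is precisely why the factor $(1+z)$ is needed in \eqref{eq:g0-profile-time}. The remaining bounds are routine.
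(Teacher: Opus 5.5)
Your proposal is correct and follows essentially the same route as the paper: (P1) and (P2) use the same product-rule computation, the latter via $\varphi_s=\varphi_\star+\psi_2(\mathcal T_s-\varphi_\star)$, and (P3) is a first-order variation in the parameter $1/\log s$, using $1/\log(s-1)-1/\log s=O(1/(s(\log s)^2))$ and the fact that this times $z\lesssim\sqrt{s/\log s}$ is $o(1)$. Your exact factorization $\mathcal T_{s-1}(z)=\mathcal T_s(z)e^{-(\epsilon_{s-1}-\epsilon_s)(z-m)}$ is a slightly cleaner substitute for the paper's mean-value argument in $u=1/\log s$ and its separate bound on the ratio $F(z,u)/F(z,1/\log s)$.
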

\begin{proof}
On $[\alpha_1,\alpha_2]$,
\[
        \varphi_s'(z)=\psi_1'(z)\bigl(\varphi_\star-f_0-\kappa z^2\bigr)
        +(1-\psi_1(z))2\kappa z.
\]
Since $\varphi_\star>f_0+\kappa\alpha_2^2$, both terms are nonnegative; the derivative bounds on this fixed compact interval are immediate.  On $[\alpha_3,\alpha_4]$, the exponent in $\mathcal T_s$ is $O_\lambda(1/\log s)$, so $\mathcal T_s=\varphi_\star(1+O_\lambda(1/\log s))$.  More explicitly,
\[
        \mathcal T_s-\varphi_\star=O_\lambda(\varphi_\star/\log s),
        \qquad
        \mathcal T_s^{(m)}=(-\lambda/\log s)^m\mathcal T_s,
        \quad m\ge1.
\]
Because
\[
        \varphi_s=\varphi_\star+\psi_2(\mathcal T_s-\varphi_\star)
        \qquad (\alpha_3\le z\le\alpha_4),
\]
the first two derivatives are
\begin{align*}
        \varphi_s'
        &=\psi_2'(\mathcal T_s-\varphi_\star)+\psi_2\mathcal T_s',\\
        \varphi_s''
        &=\psi_2''(\mathcal T_s-\varphi_\star)
          +2\psi_2'\mathcal T_s'
          +\psi_2\mathcal T_s''.
\end{align*}
Each displayed term is $O_\lambda(\varphi_\star/\log s)$, and $\varphi_s\asymp\varphi_\star$ on $[\alpha_3,\alpha_4]$.  The third derivative is identical in form, with one more product-rule term, and is also $O_\lambda(\varphi_s/\log s)$.  This proves (P2).

It remains to prove the time-variation bounds.  On $[0,\alpha_3]$, the construction is independent of $s$.  On $[\alpha_3,\infty)$, view $\varphi_s(z)$ as $F(z,u)$ evaluated at $u=1/\log s$.  On $z\ge\alpha_4$,
\[
        \partial_u F(z,u)
        =-\lambda\left(z-\frac{\alpha_3+\alpha_4}{2}\right)F(z,u),
\]
and on $[\alpha_3,\alpha_4]$ the same expression is multiplied by the fixed cutoff $\psi_2(z)$.  Thus, uniformly for $z\ge\alpha_3$,
\[
        |\partial_u F(z,u)|
        \le C_\lambda(1+z)F(z,u).
\]
Moreover,
\[
        \frac1{\log(s-1)}-\frac1{\log s}
        =\frac{1}{s(\log s)^2}+O\left(\frac{1}{s(\log s)^3}\right).
\]
For $z\le C_0\sqrt{s/(\log s)}$ and $u$ between $1/\log s$ and $1/\log(s-1)$, the ratio
$F(z,u)/F(z,1/\log s)$ is bounded by a constant depending only on $C_0$ and $\lambda$, because
\[
        \left|u-\frac1{\log s}\right|(1+z)
        \le C\frac{1+z}{s(\log s)^2}=o(1).
\]
The mean-value theorem therefore gives \eqref{eq:g0-profile-time}.  For the derivative bounds on bounded intervals, differentiate $F$ first.  For instance, on $z\ge\alpha_4$,
\[
        \partial_u\partial_z F
        =-\lambda F
          +\lambda^2u\left(z-\frac{\alpha_3+\alpha_4}{2}\right)F,
\]
and similarly $|\partial_u\partial_z^2F|\le C_{\lambda,R}F$ on $z\le R$.  The interval $[\alpha_3,\alpha_4]$ has only additional fixed cutoff factors.  Applying the mean-value theorem to $\partial_zF$ and $\partial_z^2F$ gives \eqref{eq:g0-profile-der-time}.
\end{proof}

In order to verify the feasibility of the constructed function $B$, notice that Bellman certificate inequality \eqref{eq:common-bellman-residual} is equivalent to $$
\resizebox{\textwidth}{!}{$
\underbrace{(1-q)B_{s-1}(c)+qB_{s-1}(c-1)-B_s(c)}_{M_s(c):\ \text{deterministic drift}}
+\underbrace{\{\E\Delta(H_{s,c})-\Delta(\tau_s(c))\}}_{\text{Jensen/order-statistic source}}
+\underbrace{\{\Delta(\tau_s(c))-\Delta(\tau_s(c)-D_s(c))\}}_{\text{finite-difference perturbation}}
\ge 0.
$}
$$ We decompose the Bellman residual into three contributions: the deterministic
drift, the Jensen/order-statistic slack, and the finite-difference perturbation.
In the verification, the drift $M_s(c)$ is bounded separately.  The remaining
two contributions are controlled either separately, when the perturbation is
zero or negligible, or together through
\[
        S_s(c,D):=\E\Delta(H_{s,c})-\Delta(\tau_s(c)-D),
\]
when the perturbation is comparable to the Jensen/order-statistic slack.  Recall that 
$     D_s(c):=B_{s-1}(c)-B_{s-1}(c-1).$

Although the constructed certificates are verified on a broad state space, the most interesting cases are when $c$ is close to critical capacity $qs$, where $q$ is the upper-support mass introduced earlier.  Write $\sigma^2:=q(1-q)$.  For any $(s,c)$, define the centered capacity imbalance and its standard-deviation normalization by
\begin{equation}\label{eq:coords-model}
        d_s(c):=c-qs,
        \qquad
        x_s(c):=\frac{d_s(c)}{\sigma\sqrt{s}}.
\end{equation}
The proofs also use the moderate-deviation coordinate
\begin{equation}\label{eq:z-def-model}
        z_s(c):=\frac{|d_s(c)|}{\sigma\sqrt{s\log s}}
        =\frac{|x_s(c)|}{\sqrt{\log s}}.
\end{equation}
We begin with bounding $M_s(c)$, the deterministic drift in the next subsection. 

\subsection{Deterministic finite differences}
The following lemma bounds $M_s(c)$ and $D_s(c)$ in four regions, using the moderate-deviation coordinate $z_s(c)$.
\begin{lemma}[Bounded-density finite differences]\label{lem:g0-fdiff}
After increasing $s_0$ if necessary, the following estimates hold for all $s>s_0$ and all $1\le c<s$.
\begin{enumerate}[label=(D\arabic*),leftmargin=2.5em]
\item If $z_s(c)\le\alpha_0$ and $|c/s-q|\le\rho/2$, then
\begin{equation}
        M_s(c)\ge0,
        \qquad
        |D_s(c)|\le C\eta\frac{(\log s)^{3/2}}{\sqrt{s}}.
\end{equation}
\item If $\alpha_0\le z_s(c)\le\alpha_5$ and $|c/s-q|\le\rho/2$, then
\begin{equation}\label{eq:g0-annulus-MD}
        M_s(c)\ge -C_\lambda\eta\frac{\log s}{s},
        \qquad
        |D_s(c)|\le C_\lambda\eta\frac{(\log s)^{3/2}}{\sqrt{s}}.
\end{equation}
\item If $z_s(c)\ge\alpha_5$ and $|c/s-q|\le\rho/2$, then $d_s(c)D_s(c)<0$.  Moreover, uniformly for $\alpha_5\le z_s(c)\le C_\rho\sqrt{s/(\log s)}$, which is the range implied by $|c/s-q|\le\rho/2$,
\begin{align}
        |D_s(c)|
        &=(1+o(1))\frac{\eta\lambda}{\sigma}\varphi_s(z_s(c))\sqrt{\frac{\log s}{s}},
        \label{eq:g0-tail-D}\\
        M_s(c)
        &=-\eta \varphi_s(z_s(c))
        \left(2+\frac{\lambda z_s(c)}2+r_s(z_s(c))\right)\frac{\log s}{s}.
        \label{eq:g0-tail-M}
\end{align}
Here the remainder satisfies
\[
        \sup_{\alpha_5\le z\le C_\rho\sqrt{s/(\log s)}}
        \frac{|r_s(z)|}{1+\lambda z}\longrightarrow 0.
\]
\item If $|c/s-q|\ge\rho/2$ and at least one of $B_s(c)$, $B_{s-1}(c)$, $B_{s-1}(c-1)$ is nonzero, then
\begin{equation}\label{eq:g0-macro-MD}
        |M_s(c)|+|D_s(c)|=o(1/s).
\end{equation}
\end{enumerate}
\end{lemma}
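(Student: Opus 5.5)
The plan is to treat $B_s(c)=\eta(\log s)^2\Omega(c/s)\varphi_s(z_s(c))$ as a smooth function $\eta\,G(s,c)$ of two real variables and Taylor-expand both finite differences around $(s,c)$. First write
\[
M_s(c)=\bigl[B_{s-1}(c)-B_s(c)\bigr]-q\,D_s(c),\qquad D_s(c)=B_{s-1}(c)-B_{s-1}(c-1).
\]
This shows that $M_s$ is a discrete transport derivative along the direction $(-1,-q)$, corrected by a second-order term. A second-order Taylor expansion gives
\[
M_s(c)\approx -\partial_sB+ \tfrac{q}{1}\cdot 0 -\tfrac{q(1-q)}{2}\partial_c^2B\cdot(-1)\ \ldots,
\]
which in cleaner form is
\[
(1-q)B(s-1,c)+qB(s-1,c-1)-B(s,c)=-\bigl(\partial_s+q\partial_c\bigr)B+\tfrac{q(1-q)}2\partial_c^2B+O(\text{third derivatives}).
\]
Next, compute the derivatives in the coordinates $z=|c-qs|/(\sigma\sqrt{s\log s})$. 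Along the transport direction, $(\partial_s+q\partial_c)d_s(c)=0$, so the drift comes only from the explicit $s$-dependence:
\begin{itemize}
\item $\partial_s[(\log s)^2]=2\log s/s$;
\item $\partial_s z=-z(1+1/\log s)/(2s)$;
\item $\partial_s\varphi_s$, which by Lemma~\ref{lem:g0-profile}(P3) is $O((1+z)\varphi_s/(s(\log s)^2))$.
\end{itemize}
The diffusion term uses $\partial_c z=\pm1/(\sigma\sqrt{s\log s})$, so $\tfrac{\sigma^2}{2}\partial_c^2B=\eta(\log s)\varphi_s''(z)/(2s)$. Combined, the approximation to leading order is
\[
M_s\approx \eta\frac{\log s}{s}\Bigl[-2\varphi_s+\tfrac{z}{2}\varphi_s'+\tfrac12\varphi_s''\Bigr].
\]
For the range $z\le C\sqrt{s/\log s}$ the $\Omega$ factor is identically one, and the $\Omega$-derivative terms only enter in (D4).

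With this formula, I would check each regime in turn.

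\textbf{(D1).} Here $\varphi=f_0+\kappa z^2$, so the bracket equals $-2f_0-2\kappa z^2+\kappa z^2+\kappa=\kappa-2f_0-\kappa z^2$, using $\varphi''=2\kappa$. Since $\kappa>8f_0$ and $z\le\alpha_0<1/2$, this is at least $\kappa(1-\alpha_0^2)-2f_0>0$, with room to absorb the error terms. A subtlety is the kink of $|d|$ at $d=0$. There $\varphi(z)=f_0+\kappa z^2$ is a smooth function of $d$, because it is even, so no issue arises. The bound on $D_s$ is $\eta(\log s)^2\cdot\varphi'\cdot\partial_c z=O(\eta(\log s)^{3/2}/\sqrt s)$.

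\textbf{(D2).} On $[\alpha_0,\alpha_5]$, (P1) and (P2) make all derivatives of $\varphi_s$ bounded by $C_\lambda$. The bracket is therefore $O_\lambda(1)$, which gives the stated lower bound $-C_\lambda\eta\log s/s$ and the same $D_s$ bound.

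\textbf{(D3).} For $z\ge\alpha_4$, $\varphi_s=\mathcal T_s$ with $\varphi'=-(\lambda/\log s)\varphi$ and $\varphi''=O(\varphi/(\log s)^2)$. Then
\[
D_s\approx \eta(\log s)^2\varphi'\,\partial_cz=\mp\frac{\eta\lambda}{\sigma}\varphi\sqrt{\frac{\log s}{s}},
\]
whose sign is opposite to that of $d$. In the bracket, the term $\tfrac z2\varphi'$ together with the derivative of $\mathcal T_s$ with respect to $s$ (through $1/\log s$) produces $-\tfrac{\lambda z}{2}\varphi$ up to $o(1+\lambda z)$ corrections. The range $[\alpha_5,\alpha_4]$ is empty since $\alpha_4<\alpha_5$.

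\textbf{(D4).} The support of $\Omega$ forces $z\asymp\sqrt{s/\log s}$, so the exponential factor is $\exp(-c\lambda\sqrt{s/\log s}\cdot 1/\log s)$, which is superpolynomially small. This makes both $M$ and $D$ equal to $o(1/s)$.

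The main obstacle is making the Taylor remainders rigorous and uniform over the growing range $z\le C_\rho\sqrt{s/\log s}$ in (D3). There $z$ is unbounded, so third-derivative terms carry powers of $z/\sqrt{s\log s}$ and $z^2/s$, and the factor $\exp(-\lambda z/\log s)$ varies on scale $\log s$ in $z$, i.e.\ on scale $\sqrt s(\log s)^{3/2}$ in $c$. I would handle this by writing $\mathcal T_s$ exactly as an exponential in $d$ and $s$ and computing the finite differences of the exponential exactly instead of by Taylor expansion, using
\[
\frac{\mathcal T_{s-1}(z_{s-1}(c-1))}{\mathcal T_s(z_s(c))}=\exp(\text{explicit small quantity}).
\]
Each small exponent would then be expanded to second order, so that the remainders are visibly $o((1+\lambda z)\log s/s)$ relative to $\varphi$, with the time-variation estimate \eqref{eq:g0-profile-time} controlling the $u=1/\log s$ dependence.
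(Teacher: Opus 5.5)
\paragraph{Verdict.} Your route is essentially the paper's: the affine transport is treated as drift plus diffusion in the moderate-deviation coordinate, the tail uses an exact exponential ratio, and (D4) uses superpolynomial smallness. However, your leading-order drift formula drops a factor $\log s$ in the transport term, and that is exactly the term on which (D2) and (D3) depend. From your own ingredients, $-\eta(\log s)^2\varphi_s'(z)\,\partial_s z=\eta\frac{\log s}{s}\cdot\frac{z(\log s+1)}{2}\varphi_s'(z)$. Up to Taylor remainders, the correct expansion is therefore
\[
M_s(c)\approx \eta\frac{(\log s)^2}{s}\,\frac z2\,\varphi_s'(z)+\eta\frac{\log s}{s}\Bigl(\tfrac12\varphi_s''(z)-2\varphi_s(z)+\tfrac z2\varphi_s'(z)\Bigr)-\eta(\log s)^2\,\partial_s\varphi_s(z).
\]
Its first term is the one in the paper's Regime~2 formula; you kept only the subleading piece $\tfrac z2\varphi_s'$.

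\paragraph{Consequences.} In (D1) the slip is harmless, because the omitted piece $\kappa z^2\log s$ is nonnegative. In (D2) it is not harmless. Your justification ``all derivatives are bounded, so the bracket is $O_\lambda(1)$'' is false. On $[\alpha_1,\alpha_2]$ the slope $\varphi_s'$ is of order one, so the transport term is of order $\eta(\log s)^2/s$. Were it negative, it would violate the claimed bound and also swamp the source $c_I\alpha_0^2\log s/s$ used in the feasibility proof.

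The bound $M_s\ge -C_\lambda\eta\log s/s$ holds only because of two properties, and your argument invokes neither:
\begin{itemize}
\item $\varphi_s'\ge0$ on $[\alpha_0,\alpha_3]$: it equals $2\kappa z$ on $[0,\alpha_1]$, is nonnegative by (P1) on $[\alpha_1,\alpha_2]$, and is zero on $[\alpha_2,\alpha_3]$;
\item $|\varphi_s'|\le C_\lambda/\log s$ on $[\alpha_3,\alpha_5]$, by (P2) and the exponential formula.
\end{itemize}

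In (D3) your attribution of the $-\tfrac{\lambda z}{2}\varphi_s$ term is wrong. With $\varphi_s'=-\lambda\varphi_s/\log s$, both the piece $\tfrac z2\varphi_s'$ and the $s$-derivative of $\mathcal T_s$ through $1/\log s$ are $O(\lambda z\varphi_s/\log s)$ after dividing by $\eta\log s/s$. They therefore belong to $r_s$ and cannot produce $-\tfrac{\lambda z}{2}\varphi_s$. That term comes from the omitted $\tfrac{z\log s}{2}\varphi_s'=-\tfrac{\lambda z}{2}\varphi_s$: the height $(\log s)^2$ times the slope $-\lambda/\log s$ times the coordinate drift $z/(2s)$. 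This matches $\E[\varphi_{s-1}(Z_s)/\varphi_s(z)]=1-\lambda z/(2s\log s)+\cdots$ in Lemma~\ref{lem:g0-tail-ratio-detail}.

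\paragraph{Fix.} Correct the coefficient and invoke the sign and decay properties of $\varphi_s'$. Your plan then goes through, including the exact-exponential treatment of the tail, the $D_s$ asymptotics and sign, and (D4).
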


\begin{proof}
When $|c/s-q|\le\rho/2$, the cutoff factors in $B_s(c)$, $B_{s-1}(c)$, and $B_{s-1}(c-1)$ are all equal to one for large $s$.  Write $d=d_s(c)$.  The predecessor deviations are
\[
        d_{s-1}(c)=d+q,
        \qquad
        d_{s-1}(c-1)=d-(1-q).
\]
Whenever $|d|\to\infty$ and the two predecessor deviations have the same sign as $d$, let $\xi$ be the two-point random variable taking values $q\,\sgn(d)$ with probability $1-q$ and $-(1-q)\,\sgn(d)$ with probability $q$.  Then $\E\xi=0$ and $\E\xi^2=\sigma^2$.  With this notation, Lemma~\ref{lem:g0-sign-stable-detail} gives the useful identity
\begin{equation}\label{eq:g0-fdiff-master-identity}
        M_s(c)
        =\eta\left\{(\log(s-1))^2\E\varphi_{s-1}(Z_s)
              -(\log s)^2\varphi_s(z)\right\},
        \qquad z=z_s(c),
\end{equation}
where $Z_s=\frac{|d|+\xi}{\sigma\sqrt{(s-1)\log(s-1)}}$.  On the same sign-stable range,
\begin{equation}
        D_s(c)
        =\eta(\log(s-1))^2
        \left[\varphi_{s-1}(z_{s-1}(c))
              -\varphi_{s-1}(z_{s-1}(c-1))\right].
\end{equation}
The range $z_s(c)\le\alpha_0$ is the only near-critical range where $d$ may be bounded; there we use the quadratic formula directly instead of \eqref{eq:g0-fdiff-master-identity}.

\emph{Regime 1.}  If $z_s(c)\le\alpha_0$, then for large $s$ both predecessor coordinates lie in the interval $[0,\alpha_1]$.  Therefore
\[
\varphi_{s-1}(z_{s-1}(c))=f_0+\kappa\frac{(d+q)^2}{\sigma^2(s-1)\log(s-1)},
\]
\[
\varphi_{s-1}(z_{s-1}(c-1))=f_0+\kappa\frac{(d-(1-q))^2}{\sigma^2(s-1)\log(s-1)}.
\]
Using
\begin{equation}
        (1-q)(d+q)^2+q(d-(1-q))^2=d^2+\sigma^2,
\end{equation}
we obtain
\[
\begin{aligned}
        M_s(c)
        &=\eta f_0((\log(s-1))^2-(\log s)^2)  \\
        &\quad+\eta\kappa\left[(\log(s-1))^2\frac{d^2+\sigma^2}{\sigma^2(s-1)\log(s-1)}
               -(\log s)^2\frac{d^2}{\sigma^2s\log s}\right].
\end{aligned}
\]
Since the square of the moderate-deviation normalizer is $\sigma^2s\log s$, this becomes
\[
        M_s(c)
        =\eta f_0((\log(s-1))^2-(\log s)^2)
        +\eta\kappa\left[\log(s-1)\frac{d^2+\sigma^2}{\sigma^2(s-1)}
               -\log s\frac{d^2}{\sigma^2s}\right].
\]
The $d^2$ part is
\[
        \frac{d^2}{\sigma^2}
        \left(\frac{\log(s-1)}{s-1}-\frac{\log s}{s}\right).
\]
It is nonnegative for large $s$ because $x\mapsto (\log x)/x$ is decreasing on $(e,\infty)$, so
\[
        \frac{\log(s-1)}{s-1}\ge \frac{\log s}{s}.
\]
The remaining positive contribution from the $\sigma^2$ term is
\[
        \eta\kappa\frac{\log(s-1)}{s-1}
        \ge \eta\kappa\frac{\log s}{2s}
\]
for all large $s$.  Finally,
\[
        (\log(s-1))^2-(\log s)^2
        =-\frac{2\log s}{s}+O\left(\frac{\log s}{s^2}\right),
\]
so the negative part contributed by the constant term $f_0$ is at most
$3\eta f_0\log s/s$ for large $s$.  Therefore
\[
        M_s(c)
        \ge \eta\frac{\log s}{s}\left(-3f_0+\frac{\kappa}{2}\right)
        \ge0
\]
for large $s$, because $\kappa>8f_0$.

The bound on $D_s(c)$ follows from the mean-value theorem and the global Lipschitz estimate
\begin{equation}\label{eq:g0-global-lipschitz-B}
        |\partial_c B_{s-1}(c)|
        \le C\eta\frac{(\log s)^2}{\sigma\sqrt{s\log s}}
        =C\eta\frac{(\log s)^{3/2}}{\sqrt{s}}.
\end{equation}
The estimate \eqref{eq:g0-global-lipschitz-B} follows by differentiating $(\log(s-1))^2\Omega(c/(s-1))\varphi_{s-1}(z_{s-1}(c))$.  The cutoff derivative contributes $O(\eta(\log s)^2/s)$, while the profile derivative contributes $O(\eta(\log s)^2/(\sigma\sqrt{s\log s}))$, which dominates.

\emph{Regime 2.}  Suppose $\alpha_0\le z_s(c)\le\alpha_5$.  Then the predecessor deviations have the same sign as $d$ for large $s$, and with the above definition of $\xi$,
\[
        (1-q)\varphi_{s-1}(z_{s-1}(c))+q\varphi_{s-1}(z_{s-1}(c-1))
        =\E \varphi_{s-1}\left(\frac{|d|+\xi}{\sigma\sqrt{(s-1)\log(s-1)}}\right).
\]
Let $z=z_s(c)$.  The random argument has expansion
\[
        \frac{|d|+\xi}{\sigma\sqrt{(s-1)\log(s-1)}}
        =z+\frac{\xi}{\sigma\sqrt{s\log s}}+\frac{z}{2s}+O\left(\frac{1+z}{s\log s}\right),
\]
so its first two centered moments around $z$ are
\[
        \E(Z-z)=\frac{z}{2s}+O\left(\frac{1+z}{s\log s}\right),
        \qquad
        \E(Z-z)^2=\frac{1}{s\log s}+O\left(\frac{1+z^2}{s^2}+\frac{1+z}{(s\log s)^{3/2}}\right).
\]
Decompose the drift using \eqref{eq:g0-fdiff-master-identity}:
\begin{equation}\label{eq:g0-annulus-decomp}
        M_s(c)/\eta
=\underbrace{(\log(s-1))^2
          \E[\varphi_{s-1}(Z_s)-\varphi_{s-1}(z)]}_{\text{coordinate transport}}
\quad+
          \underbrace{(\log(s-1))^2\varphi_{s-1}(z)-(\log s)^2\varphi_s(z)}_{\text{height/time change}}.
\end{equation}
The second term in \eqref{eq:g0-annulus-decomp} is
\begin{eqnarray*}
        &&(\log(s-1))^2\varphi_{s-1}(z)-(\log s)^2\varphi_s(z)\\
        &=&((\log(s-1))^2-(\log s)^2)\varphi_s(z)
          +(\log(s-1))^2(\varphi_{s-1}(z)-\varphi_s(z))\\
        &=&-2\varphi_s(z)\frac{\log s}{s}
          +O_\lambda\left(\frac1s\right).
\end{eqnarray*}
Here the first equality is exact, the scalar expansion
$(\log(s-1))^2-(\log s)^2=-2\log s/s+O(\log s/s^2)$ gives the main term.  Up to $\alpha_3$ the profile is independent of $s$, while on $[\alpha_3,\alpha_5]$ the time-variation estimate \eqref{eq:g0-profile-time} contributes only $O_\lambda(1/s)$ on this bounded range.  The coordinate-transport term is the Taylor increment recorded in Appendix Lemma~\ref{lem:g0-taylor-detail}.  Combining the two terms gives
\[
\begin{aligned}
        M_s(c)
        &=\eta\frac{(\log s)^2}{s}\frac{z}{2}\varphi_s'(z)
          +\eta\frac{\log s}{s}\left(\frac12\varphi_s''(z)-2\varphi_s(z)\right)
          +O_\lambda\left(\eta\frac{\log s}{s}\right).
\end{aligned}
\]
On $[\alpha_0,\alpha_3]$ one has $\varphi_s'(z)\ge0$, so the first term is nonnegative.  On $[\alpha_3,\alpha_5]$, the derivative bounds from Lemma~\ref{lem:g0-profile} and the exponential formula in \eqref{eq:g0-profile-def} give $|\varphi_s'(z)|\le C_\lambda/\log s$, so the first term is bounded below by $-C_\lambda\eta\log s/s$.  The remaining terms are also bounded below by $-C_\lambda\eta\log s/s$.  This proves the drift bound in \eqref{eq:g0-annulus-MD}; the bound on $D_s(c)$ again follows from \eqref{eq:g0-global-lipschitz-B}.

\emph{Regime 3.}  Suppose $z_s(c)\ge\alpha_5$.  Because $\alpha_5>\alpha_4$, all relevant predecessor arguments lie in the interval $z\ge\alpha_4$ for large $s$.  We give the calculation for $d<0$; the case $d>0$ is identical after reversing signs.  When $d<0$,
\[
        z_{s-1}(c)=\frac{|d|-q}{\sigma\sqrt{(s-1)\log(s-1)}},
        \qquad
        z_{s-1}(c-1)=\frac{|d|+1-q}{\sigma\sqrt{(s-1)\log(s-1)}}.
\]
Thus $z_{s-1}(c)<z_{s-1}(c-1)$.  Since $\varphi_s$ is decreasing on $z\ge\alpha_4$, $B_{s-1}(c)>B_{s-1}(c-1)$ and hence $D_s(c)>0$.  Since $d<0$, this proves $dD_s(c)<0$ in this case.

For the magnitude of the first difference, the identity
\[
        \varphi_{s-1}'(w)=-\frac{\lambda}{\log(s-1)}\varphi_{s-1}(w)
\]
and the mean-value theorem give
\[
\begin{aligned}
        D_s(c)
        &=\eta(\log(s-1))^2
          \{\varphi_{s-1}(z_{s-1}(c))-
             \varphi_{s-1}(z_{s-1}(c-1))\}  \\
        &=\eta(\log(s-1))^2
          \frac{\lambda}{\log(s-1)}
          \varphi_{s-1}(\widetilde z)
          \frac1{\sigma\sqrt{(s-1)\log(s-1)}}
\end{aligned}
\]
for some $\widetilde z$ between the two predecessor coordinates.  The two predecessor coordinates differ from $z$ by $O(1/(\sigma\sqrt{s\log s}))+O(z/s)$.  Thus, for the intermediate point $\widetilde z$ in the mean-value theorem,
\[
        |\widetilde z-z|
        \le C\left(\frac1{\sqrt{s\log s}}+\frac{z}{s}\right).
\]
In the range $z\le C_\rho\sqrt{s/(\log s)}$,
\[
        \frac{\lambda |\widetilde z-z|}{\log s}
        \le C_\lambda\left(\frac1{\sqrt{s}(\log s)^{3/2}}
        +\frac{z}{s\log s}\right)=o(1),
\]
uniformly.  Since the slope on $z\ge\alpha_4$ is $-\lambda/\log s$, this gives $\varphi_{s-1}(\widetilde z)/\varphi_s(z)=1+o(1)$ uniformly.  Therefore
\[
        |D_s(c)|
        =(1+o(1))\frac{\eta\lambda}{\sigma}
        \varphi_s(z)\sqrt{\frac{\log s}{s}},
\]
which is \eqref{eq:g0-tail-D}.

For the drift, let $Z=\frac{|d|+\xi}{\sigma\sqrt{(s-1)\log(s-1)}}$ with the sign-stable two-point increment from Lemma~\ref{lem:g0-sign-stable-detail}.  Then
\[
        \frac{M_s(c)}{\eta(\log s)^2\varphi_s(z)}
        =\frac{(\log(s-1))^2}{(\log s)^2}
          \E\left[\frac{\varphi_{s-1}(Z)}{\varphi_s(z)}\right]-1.
\]
Appendix Lemma~\ref{lem:g0-tail-ratio-detail} gives
\[
        \E\left[\frac{\varphi_{s-1}(Z)}{\varphi_s(z)}\right]
        =1-\frac{\lambda z}{2s\log s}
          +O_\lambda\left(\frac{1+z}{s(\log s)^2}+\frac1{s(\log s)^3}\right),
\]
while Lemma~\ref{lem:g0-s-asymp-detail} gives
\[
        \frac{(\log(s-1))^2}{(\log s)^2}
        =1-\frac{2}{s\log s}+O\left(\frac1{s^2\log s}\right).
\]
Multiplying the two displays yields
\[
        \frac{(\log(s-1))^2}{(\log s)^2}
          \E\left[\frac{\varphi_{s-1}(Z)}{\varphi_s(z)}\right]-1
        =-\frac{2+\lambda z/2+r_s(z)}{s\log s},
\]
where
\[
        \sup_{\alpha_5\le z\le C_\rho\sqrt{s/(\log s)}}
        \frac{|r_s(z)|}{1+\lambda z}\to0.
\]
Multiplying by $\eta(\log s)^2\varphi_s(z)$ proves \eqref{eq:g0-tail-M}.

\emph{Regime 4.}  If $|c/s-q|\ge\rho/2$ but some neighboring certificate value is nonzero, then by the support of $\Omega$ all relevant ratios are within $2\rho+O(1/s)$ of $q$.  In particular
\[
        z_s(c)\ge c_\rho\sqrt{s/(\log s)}.
\]
The exponential formula in \eqref{eq:g0-profile-def} then gives
\[
        \varphi_s(z_s(c))\le C\exp\{-c\sqrt{s}/(\log s)^{3/2}\},
\]
and the same bound holds at neighboring states.  Since $B$ and its first differences are polynomial factors times this super-polynomially small quantity, \eqref{eq:g0-macro-MD} follows.
\end{proof}

\subsection{Source and perturbation estimates}

The next estimate controls the Jensen/order-statistic term
\[
        \E\Delta(H_{s,c})-\Delta(\tau_s(c)).
\]
It is the bounded-density specialization of the estimate in Appendix~\ref{app:edge-order}.  We record the form used below. The later estimate in this subsection controls the combined term
$S_s(c,D)$, where the finite-difference perturbation is included.

\begin{lemma}[Quadratic bounded-density source on the moderate band]\label{lem:g0-inner-source}
There are constants $c_I>0$ and $s_I<\infty$ such that, for all $s\ge s_I$ and all $c$ with $|x_s(c)|\le \alpha_5\sqrt{\log s}+1$,
\begin{equation}\label{eq:g0-inner-source}
        \E\Delta(H_{s,c})
        \ge c_I\frac{(1+|x_s(c)|)^2}{s}.
\end{equation}
In particular, if $z_s(c)\ge\alpha_0$, then
\begin{equation}\label{eq:g0-inner-source-ell}
        \E\Delta(H_{s,c})
        \ge c_I \alpha_0^2 \frac{\log s}{s}.
\end{equation}
\end{lemma}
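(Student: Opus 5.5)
We use the exact form of $\Delta$ in \eqref{eq:g0-Delta-exact} and reduce the bound to a binomial tail estimate for the number of future observations in the upper support. Let $N\sim\mathrm{Bin}(s-1,q)$ count the future observations that land in $I_+$. Conditional on $N$, the upper observations are i.i.d.\ $\mathrm{Unif}[a_2,b_2]$ and the lower ones are i.i.d.\ $\mathrm{Unif}[a_1,b_1]$. The $c$-th largest future value $H_{s,c}$ therefore lies in $I_+$ when $N\ge c$ and in $I_-$ when $N<c$. On $\{N\ge c\}$ it is the $c$-th largest of $N$ uniforms on $I_+$, so
\[
H_{s,c}-a_2 = L_+\,U_{(N-c+1):N},
\]
where $U_{(j):N}$ is the $j$-th smallest of $N$ standard uniforms. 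Symmetrically, on $\{N<c\}$ we have $b_1-H_{s,c}=L_-\,U'_{(c-N):(s-1-N)}$, an order statistic counted from the top of the lower block.

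Since both terms of $\Delta$ are nonnegative, we get
\[
\E\Delta(H_{s,c})\ \ge\ \frac{q L_+}{2}\,\E\bigl[U_{(N-c+1):N}^2\,;\,N\ge c\bigr]+\frac{pL_-}{2}\,\E\bigl[U'^2_{(c-N):(s-1-N)}\,;\,N<c\bigr].
\]
For $j\le N$ we have $\E U_{(j):N}^2\ge (j/(N+1))^2$, and on the relevant events $N\le s$. Hence the right side is at least a constant times
\[
s^{-2}\,\E\bigl[(N-c+1)_+^2+(c-N)_+^2\bigr].
\]
This is bounded below by $c\,s^{-2}\,\E\bigl[(N-c)^2+1\bigr]$: the added $1$ is harmless because $(m+1)_+^2+(-m)_+^2\ge 1/4$ for every integer $m$. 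Writing $N-c=(N-q(s-1))-(d_s(c)+q)$ gives
\[
\E(N-c)^2=\Var N+(d_s(c)+q)^2=\sigma^2(s-1)+(d_s(c)+q)^2.
\]
This quantity is $\gtrsim s\,(1+x_s(c)^2)$, because $d_s(c)=\sigma\sqrt{s}\,x_s(c)$ and the $+q$ shift is absorbed, using $\sigma^2 s$ when $|x_s(c)|$ is small. Dividing by $s^2$ yields $c_I(1+|x_s(c)|)^2/s$.

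The one delicate point is the bound $\E U_{(j):N}^2\ge(j/(N+1))^2$ together with the range of $N$. Jensen gives $\E U^2\ge(\E U)^2=(j/(N+1))^2$ exactly, and $N+1\le s$, so no cutoff on $N$ is needed and the hypothesis $|x_s(c)|\le\alpha_5\sqrt{\log s}+1$ is used only to keep $c$ in $[1,s-1]$. Uniformity of the constants then follows for $s\ge s_I$.

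For the second claim, $z_s(c)\ge\alpha_0$ means $|x_s(c)|\ge\alpha_0\sqrt{\log s}$. Substituting into \eqref{eq:g0-inner-source} gives $(1+|x_s(c)|)^2\ge\alpha_0^2\log s$, which is \eqref{eq:g0-inner-source-ell}.
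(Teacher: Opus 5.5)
Your proof is correct, but it takes a different and more elementary route than the paper.

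The paper obtains this lemma as the $\beta=0$ specialization of the general gapped source estimate (Lemma~\ref{lem:gapped-random-rank-source}). That estimate is built in three steps. First, Lemma~\ref{lem:edge-conversion} converts edge quantiles into residual cost. Second, uniform order-statistic moments are bounded through Gamma-function ratios. Third, it invokes the truncated positive-part binomial moment lower bound of Lemma~\ref{lem:binomial-moment-common}. That bound is proved separately for large and small $u$, via Chebyshev and Berry--Esseen respectively, and one sign of $x_s(c)$ at a time.

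You instead exploit two exact features of the two-uniform mixture. First, $\Delta$ is exactly quadratic. Second, $H_{s,c}-a_2$ (resp.\ $b_1-H_{s,c}$) is exactly a scaled uniform order statistic. Together these let Jensen give $\E[\Delta(H_{s,c})\mid N]\gtrsim ((N-c+1)_+^2+(c-N)_+^2)/s^2$ directly. You then combine both sides into one integer function of $m=N-c$. This lets you replace the truncated moment lemma by the exact identity $\E(N-c)^2=\sigma^2(s-1)+(d_s(c)+q)^2$. The pointwise inequality you actually need is $(m+1)_+^2+(-m)_+^2\ge\max\{m^2,1\}\ge(m^2+1)/2$ for integer $m$; your remark about the added $1$ should be stated in that form.

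What your route buys is a self-contained argument with no normal approximation, no truncation, and no case split on the sign of $x_s(c)$. It also gives the bound for all interior $c$ once $s$ is large, not only on the moderate band.

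What the paper's route buys is generality. It applies to any gap-facing laws satisfying Assumption~\ref{ass:model-gapped} with any $\beta\ge0$. In that setting $\Delta$ is only comparable to a power near the edges, and the order statistic is a nonlinear quantile transform of a uniform one. For $\beta>0$ the relevant binomial moment also has the fractional exponent $1+\theta$, so no exact variance identity is available. The paper's route also comes packaged with the matching upper bound used elsewhere.
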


\begin{proof}
Apply Lemma~\ref{lem:gapped-random-rank-source} with $A=\alpha_5<1$ and $L=1$.  The two-uniform model has bounded positive densities at the two support edges adjacent to the gap, and the appendix estimate reduces exactly to \eqref{eq:g0-inner-source}.  If $z_s(c)\ge\alpha_0$, then $|x_s(c)|=z_s(c)\sqrt{\log s}\ge\alpha_0\sqrt{\log s}$, proving \eqref{eq:g0-inner-source-ell}.
\end{proof}

The range $z_s(c)\ge\alpha_5$ requires a bound on the combined term $S_s(c,D)$ that uses the sign condition $d_s(c)D\le0$.  This is the only such estimate in this section not already contained in the appendix estimates for the vanishing-density case.  The following elementary comparison explains the constants used in the large- and small-overshoot alternatives.

\begin{lemma}[Deterministic bounded-density comparisons]\label{lem:g0-deterministic-comparison}
Let $R,A>0$, $0\le D\le RA/8$, $x\ge0$, and $m_2\ge0.99R^2A^2$.  Then there is an absolute constant $c_{\rm det}>0$ such that:
\begin{enumerate}[label=(\roman*),leftmargin=2.5em]
\item if $x\le3RA/4$, then $
        m_2-((x-D)_+)^2\ge c_{\rm det}R^2A^2;
$
\item if $x>3RA/4$, then
$        x^2-((x-D)_+)^2\ge \frac{11}{8}RAD.
$
\end{enumerate}
\end{lemma}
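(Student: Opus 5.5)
Both parts are elementary one-variable inequalities; I will prove them by direct comparison using only $0\le D\le RA/8$ and $m_2\ge0.99R^2A^2$. I will also normalize by setting $u:=RA>0$. After this normalization every quantity scales homogeneously, so it suffices to find an absolute constant that works for all $u$.

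\emph{Part (i).} The map $t\mapsto((t-D)_+)^2$ is nondecreasing in $t$, so for $x\le 3u/4$ we have
\[
((x-D)_+)^2\le((3u/4-D)_+)^2\le(3u/4)^2=\tfrac{9}{16}u^2 .
\]
The last step uses only $D\ge0$. Hence
\[
m_2-((x-D)_+)^2\ge(0.99-0.5625)u^2=0.4275\,u^2 ,
\]
and any $c_{\rm det}\le 0.4275$ works. In particular $c_{\rm det}=2/5$ is an absolute constant.

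\emph{Part (ii).} Since $x>3u/4>u/8\ge D$, we have $x-D>0$, so the positive part can be dropped. Then
\[
x^2-(x-D)^2=D(2x-D).
\]
Using $x>3u/4$ and $D\le u/8$ gives
\[
2x-D>\tfrac{3}{2}u-\tfrac{1}{8}u=\tfrac{11}{8}u .
\]
Multiplying by $D\ge0$ yields $x^2-(x-D)^2\ge\tfrac{11}{8}uD=\tfrac{11}{8}RAD$. The inequality is strict when $D>0$ and becomes an equality ($0=0$) when $D=0$.

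The only point needing care is the positive part in (ii). There we must check that $x-D$ is positive, which follows because $D\le RA/8$ is much smaller than $3RA/4$. Everything else is arithmetic. The constants $11/8$ and $0.99$ are exactly what feed the $\lambda$ condition \eqref{eq:g0-lambda-choice} later, where $\tfrac{11}{16}=\tfrac12\cdot\tfrac{11}{8}$.
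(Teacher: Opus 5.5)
Your proof is correct and is essentially the paper's argument. The differences are cosmetic: in (i) you bound $(x-D)_+\le x\le 3RA/4$ and get $c_{\rm det}=0.99-9/16$, where the paper uses the cruder $(x-D)_+\le x+D\le 7RA/8$ and gets $0.99-49/64$. In (ii) you check $x-D>0$ explicitly, where the paper just uses $(v_+)^2\le v^2$. One small correction to your closing remark: only the $11/8$ feeds the $\lambda$ condition (through $K=11/16$), while the $0.99$ enters only the small-overshoot constant $c_0$.
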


\begin{proof}
If $x\le3RA/4$, then $(x-D)_+\le x+D\le7RA/8$.  Hence
\[
        m_2-((x-D)_+)^2
        \ge \left(0.99-\left(\frac78\right)^2\right)R^2A^2,
\]
and the parenthetical constant is positive.  If $x>3RA/4$, then $u_+^2\le u^2$ gives
\[
        x^2-((x-D)_+)^2\ge x^2-(x-D)^2=2xD-D^2.
\]
Using $x>3RA/4$ and $D\le RA/8$ yields
\[
        2xD-D^2\ge \frac32 RAD-\frac18RAD=\frac{11}{8}RAD.
\]
\end{proof}

\begin{lemma}[Uniform bounded-density bound for $S_s(c,D)$]\label{lem:g0-tail-source}
There exist constants $K=11/16>1/2$, $c_0>0$, and $\varepsilon_0>0$, depending only on the two-uniform distribution and on $\rho$, with the following property. Fix $\alpha>0$. For all sufficiently large $s$, suppose
\[
        \alpha \sigma\sqrt{s\log s}\le |d_s(c)|\le 2\rho s,
        \qquad
        a_s(c):=\frac{|d_s(c)|}{s},
\]
and
\[
        |D|\le\varepsilon_0 a_s(c), \qquad d_s(c)D\le0.
\]
Define
\[
(w,R):=
\begin{cases}
\bigl((\tau_s(c)-a_2)_+,\,L_+/q\bigr), & d_s(c)<0,\\
\bigl((b_1-\tau_s(c))_+,\,L_-/p\bigr), & d_s(c)>0.
\end{cases}
\]
Then, if $w>\frac34 R a_s(c)$,
\[
        \E\Delta(H_{s,c})-\Delta(\tau_s(c)-D)
        \ge K a_s(c)|D|,
\]
while, if $w\le\frac34 R a_s(c)$,
\[
        \E\Delta(H_{s,c})-\Delta(\tau_s(c)-D)
        \ge c_0 a_s(c)^2.
\]
\end{lemma}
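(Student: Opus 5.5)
The plan is to treat the case $d_s(c)<0$ in detail; the case $d_s(c)>0$ is the mirror image, with the lower interval $I_-$ in place of $I_+$. Write $a:=a_s(c)$. When $d<0$ we have $c<qs$, so there are more upper-support observations than slots, and the offline marginal $H_{s,c}$ falls inside $I_+$. It sits near the $c/s$ upper quantile of $F$, which is
\[
a_2+L_+\frac{q-c/s}{q}=a_2+Ra ,\qquad R=L_+/q .
\]
By \eqref{eq:g0-Delta-exact}, on $[b_1,b_2]$ we have $\Delta(\tau)=\frac{q}{2L_+}(\tau-a_2)_+^2=\frac{1}{2R}(\tau-a_2)_+^2$. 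The whole statement therefore reduces to Lemma~\ref{lem:g0-deterministic-comparison}, applied with $A=a$, $x=w$, and
\[
m_2:=2R\,\E\Delta(H_{s,c})\ \ge\ \E\bigl[(H_{s,c}-a_2)_+^2\bigr].
\]
This requires checking three things: the shifted point $\tau_s(c)-D$ stays in the region where $\Delta$ has this one-sided form, $0\le D\le Ra/8$, and $m_2\ge0.99R^2a^2$.

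\emph{Step 1 (localization).} Here $D\ge0$ because $dD\le0$. Choose $\varepsilon_0\le \min\{L_+/q,\,L_-/p\}/8$, so that $D\le Ra/8$. Because $a\le2\rho$, this also makes $D\le G/4$ once $\varepsilon_0$ is small enough. Next I show $\tau_s(c)\ge a_2-o(1)$. We have $H_{s,c}<a_2$ only if $\mathrm{Bin}(s-1,q)<c$, i.e.\ only if the binomial falls $|d|\gtrsim\alpha\sigma\sqrt{s\log s}$ below its mean. By Hoeffding this has probability at most $s^{-c'\alpha^2}$, and $H\ge a_1$ always. Hence $\tau_s(c)-D\ge a_2-G/2>b_1$ for large $s$. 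On this range the lower quadratic piece of $\Delta$ vanishes, so
\[
\Delta(\tau_s(c)-D)=\tfrac1{2R}\bigl((w-D)_+\bigr)^2 .
\]

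\emph{Step 2 (second-moment lower bound).} This is the main obstacle, because it must hold uniformly for $a$ ranging from $\alpha\sigma\sqrt{\log s/s}$ up to $2\rho$. Fix a small $\theta>0$ and let $t:=a_2+(1-\theta)Ra$. The event $\{H_{s,c}<t\}$ says that fewer than $c$ of the $s-1$ future draws exceed $t$. That count is $\mathrm{Bin}(s-1,\,c/s+\theta q\,a/q\cdot(\text{const}))$: its mean exceeds $c$ by order $\theta as$, up to an $O(1)$ correction from $s-1$ versus $s$. Since $\rho<q/4$, the quantile $t$ stays inside $I_+$. Hoeffding then gives
\[
\Prob(H<t)\le \exp(-c\theta^2a^2s)\le s^{-c\theta^2\alpha^2\sigma^2}\to0 .
\]
Therefore
\[
\E(H-a_2)_+^2\ \ge\ (1-\theta)^2R^2a^2\bigl(1-o(1)\bigr),
\]
and choosing $\theta$ small makes this at least $0.99R^2a^2$ for all large $s$.

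\emph{Step 3 (the two alternatives).} Suppose first $w>\tfrac34Ra$. Jensen's inequality (Lemma~\ref{lem:offline-slack-common}) gives $\E\Delta(H)\ge\Delta(\tau_s(c))=\frac1{2R}w^2$. Lemma~\ref{lem:g0-deterministic-comparison}(ii) then gives
\[
S_s(c,D)\ \ge\ \tfrac1{2R}\bigl(w^2-((w-D)_+)^2\bigr)\ \ge\ \tfrac1{2R}\cdot\tfrac{11}{8}RaD=\tfrac{11}{16}aD .
\]
This is the claim with $K=11/16$. Suppose instead $w\le\tfrac34Ra$. Lemma~\ref{lem:g0-deterministic-comparison}(i) together with Step 2 gives
\[
S_s(c,D)\ \ge\ \tfrac1{2R}\bigl(m_2-((w-D)_+)^2\bigr)\ \ge\ \tfrac{c_{\rm det}}{2}Ra^2 ,
\]
so the claim holds with $c_0:=\tfrac{c_{\rm det}}2\min\{L_+/q,\,L_-/p\}$.

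For $d>0$, the quantities $H$, $\tau$, and $\tau-D$ now lie near $b_1-Ra$ with $R=L_-/p$, and $D\le0$ shifts $\tau-D$ upward, away from $a_2$. The same three steps apply with $\Delta=\frac{p}{2L_-}(b_1-\tau)_+^2=\frac1{2R}(b_1-\tau)_+^2$.
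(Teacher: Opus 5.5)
Your proof is correct and is essentially the paper's argument: localize $\tau_s(c)-D$ in $[b_1,b_2]$ by a binomial tail bound, prove $\E[(H_{s,c}-a_2)_+^2]\ge 0.99R^2a_s(c)^2$ uniformly on the range, and then split via Lemma~\ref{lem:g0-deterministic-comparison}, using Jensen in the large-overshoot case. The one difference is in the second-moment bound: you apply Hoeffding to the number of draws exceeding the quantile $a_2+(1-\theta)Ra_s(c)$, whereas the paper uses the conditional beta-moment formula for uniform order statistics on a good event; the two are equivalent and yours is slightly shorter. One wording fix in the mirrored case: $-D\ge0$ moves $\tau_s(c)-D$ toward $a_2$, not away from it, so it is $|D|\le G/4$ that keeps it out of the upper quadratic piece.
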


\begin{proof}
We prove the lemma when $d_s(c)<0$; the other side is symmetric.  Put $a:=a_s(c)$, $\tau=\tau_s(c)$, $\widehat\tau=\tau-D$, $X=(H_{s,c}-a_2)_+$, and $w=(\tau-a_2)_+$.  We choose $\varepsilon_0$ below fixed distributional constants; in particular,
\[
        \varepsilon_0\le \min\left\{\frac{G}{4\rho},\frac{R_+}{8},\frac{R_-}{8}\right\}.
\]
Since $D\ge0$ and $|D|\le\varepsilon_0 a\le 2\rho\varepsilon_0$, this gives $D<G/2$ throughout the stated range.  We first check that $\widehat\tau\in[b_1,b_2]$ for all large $s$.

The upper bound $\widehat\tau\le b_2$ is immediate from $\tau\le b_2$ and $D\ge0$.  For the lower bound, if $\tau\ge a_2$, then $\widehat\tau\ge a_2-D>b_1+G/2$.  Suppose instead that $\tau<a_2$.  Since $H_{s,c}\ge a_2$ on $\{N\ge c\}$ and $H_{s,c}\ge a_1$ always,
\[
        \tau=\E H_{s,c}
        \ge a_2\Prob(N\ge c)+a_1\Prob(N<c)
        =a_2-(a_2-a_1)\Prob(N<c).
\]
Because $d_s(c)<0$ and $a=|d_s(c)|/s$, $\mu_s-c=a\,s-q$.  For large $s$, $a\,s\to\infty$, and hence
\[
        a\,s-q+1\ge \frac12 a\,s.
\]
Bernstein's inequality for the binomial variable $N$ gives
\[
        \Prob(N<c)
        \le \Prob\{\mu_s-N\ge a\,s-q+1\}
        \le \exp\left\{-\frac{c'_B a^2s^2}{s+a\,s}\right\}.
\]
Since $a\le2\rho$ in the stated range, the denominator is at most a constant multiple of $s$, and therefore
\[
        \Prob(N<c)\le \exp\{-c_B a^2s\}
        \le s^{-c_B\alpha^2},
        \qquad a\in[\alpha\sigma\sqrt{(\log s)/s},2\rho],
\]
where $c_B>0$ depends only on $q$ and $\rho$.  The constant $\alpha$ is fixed before $s\to\infty$, so the probability is $o(1)$ uniformly over this range; at the lower end of the range it is polynomially small in $s$, which is sufficient here.  Hence $\tau\ge a_2-G/4=b_1+3G/4$ for all large $s$.  Since $D<G/2$, this gives $\widehat\tau\ge b_1+G/4$.  Thus $\widehat\tau\in[b_1,b_2]$.

Using the exact convex-residual formula \eqref{eq:g0-Delta-exact}, and because $\widehat\tau\in[b_1,b_2]$,
\begin{equation}
        \Delta(\widehat\tau)
        =\frac{q}{2L_+}\bigl((w-D)_+\bigr)^2,
\end{equation}
while
\begin{equation}
        \E\Delta(H_{s,c})
        \ge\frac{q}{2L_+}\E X^2.
\end{equation}
Therefore
\begin{equation}\label{eq:g0-tail-master-bound}
        \E\Delta(H_{s,c})-\Delta(\widehat\tau)
        \ge\frac{q}{2L_+}
        \left(\E X^2-((w-D)_+)^2\right).
\end{equation}

We need a lower bound on $\E X^2$.  Conditional on $N=n\ge c$, $X$ is the order statistic of rank $n-c+1$ among $n$ iid uniform variables on $[0,L_+]$.  Hence
\[
        \E[X^2\mid N=n]
        =L_+^2\frac{(n-c+1)(n-c+2)}{(n+1)(n+2)}.
\]
Because $d_s(c)<0$ and $a=|d_s(c)|/s$, one has $\mu_s-c=a\,s-q$.  Fix $\delta\in(0,q/4)$.  Let
\[
        \mathcal G_s:=\bigl\{N-c\ge (1-\delta)a\,s\bigr\}
        \cap
        \bigl\{|N-\mu_s|\le \delta s\bigr\}.
\]
The first event in $\mathcal G_s$ fails only if
\[
        \mu_s-N>\delta a\,s-q.
\]
Since $a\,s\to\infty$ uniformly over $a\in[\alpha\sigma\sqrt{(\log s)/s},2\rho]$, the right side is at least $(\delta/2)a\,s$ for all large $s$.  Bernstein's inequality then gives
\[
        \Prob\bigl(N-c<(1-\delta)a\,s\bigr)
        \le \exp\{-c_{B,\delta}a^2s\}=o(1)
\]
uniformly over the same range of $a$.  The second event fails with probability at most $\exp\{-c_\delta s\}$.  Hence $\Prob(\mathcal G_s)=1-o(1)$ uniformly.  On $\mathcal G_s$, for all large $s$,
\[
        N-c+1\ge (1-\tfrac32\delta)a\,s,
        \qquad
        N-c+2\ge (1-\tfrac32\delta)a\,s,
        \qquad
        N+1,N+2\le (q+2\delta)s.
\]
Substituting these deterministic bounds into the conditional second-moment formula gives, on $\mathcal G_s$,
\[
        \E[X^2\mid N]
        \ge L_+^2
        \frac{(1-\tfrac32\delta)^2a^2s^2}
             {(q+2\delta)^2s^2}.
\]
Using the tower property and $X^2\ge0$ on the complement of $\mathcal G_s$ yields
\[
\begin{aligned}
        \E X^2
        &=\E\bigl[\E[X^2\mid N]\bigr]\\
        &\ge \E\bigl[\E[X^2\mid N]\1_{\mathcal G_s}\bigr]\\
        &\ge
        L_+^2
        \frac{(1-\tfrac32\delta)^2a^2s^2}
             {(q+2\delta)^2s^2}
        \Prob(\mathcal G_s)\\
        &\ge L_+^2\frac{(1-\tfrac32\delta)^2}{(q+2\delta)^2}a^2(1-o(1)).
\end{aligned}
\]
This estimate is uniform on
$a\in[\alpha\sigma\sqrt{(\log s)/s},2\rho]$: at the lower end of the range the deviations are only logarithmic, but $a^2s\asymp\log s$ still forces $\Prob(\mathcal G_s)\to1$.  Since the prefactor tends to $(L_+/q)^2=R_+^2$ as $\delta\downarrow0$, we may choose $\delta$ and then $s$ large enough so that, for any prescribed $\varepsilon>0$,
\begin{equation}
        \E X^2\ge (1-\varepsilon)R_+^2a^2.
\end{equation}
Take $\varepsilon=0.01$.

By the choice of $\varepsilon_0$, $D\le R_+a/8$.  If $w\le\frac34R_+a$, apply Lemma~\ref{lem:g0-deterministic-comparison} with $R=R_+$ and $m_2=\E X^2$.  Together with \eqref{eq:g0-tail-master-bound}, this gives
\[
        \E\Delta(H_{s,c})-\Delta(\widehat\tau)
        \ge c_0a^2
\]
for a positive constant $c_0$.

If $w>\frac34R_+a$, then $w=\tau-a_2>0$.  The pointwise inequality $X=(H_{s,c}-a_2)_+\ge H_{s,c}-a_2$ holds also on the off-event $\{N<c\}$, where $X=0$ and $H_{s,c}-a_2<0$.  Therefore $\E X\ge \E(H_{s,c}-a_2)=w$, and Jensen's inequality gives $\E X^2\ge w^2$.  Lemma~\ref{lem:g0-deterministic-comparison}, again with $R=R_+$, gives
\[
        w^2-((w-D)_+)^2\ge\frac{11}{8}R_+aD.
\]
Substituting into \eqref{eq:g0-tail-master-bound} and using $qR_+/(2L_+)=1/2$ yields
\[
        \E\Delta(H_{s,c})-\Delta(\widehat\tau)
        \ge \frac{11}{16}aD.
\]
This proves the case $d_s(c)<0$ with $K=11/16$.

For the case $d_s(c)>0$, set $Y=(b_1-H_{s,c})_+$ and $w=(b_1-\tau)_+$.  Now $D\le0$, so $|D|=-D$.  The same gap-margin argument shows $\widehat\tau\in[a_1,a_2]$ for all large $s$.  On this side the exact convex-residual formula gives
\[
        \Delta(\widehat\tau)=\frac{p}{2L_-}\bigl((w - |D|)_+\bigr)^2,
        \qquad
        \E\Delta(H_{s,c})\ge \frac{p}{2L_-}\E Y^2.
\]
Conditional on $N=n<c$, the distance $Y$ has the same uniform-order-statistic second-moment formula as above, with $L_-$ and $p$ replacing $L_+$ and $q$ and rank $c-n$.  Repeating the preceding tower-property argument gives $\E Y^2\ge(1-\varepsilon)R_-^2a^2$, uniformly on the stated range.  Applying Lemma~\ref{lem:g0-deterministic-comparison} with $R=R_-$ and $D$ replaced by $|D|$ proves the stated linear and quadratic alternatives for the case $d_s(c)>0$.
\end{proof}

For states outside the central range, the proof uses the prophet Jensen slack rather than the signed first-difference source.

\begin{lemma}[Jensen slack in the outer cutoff range]\label{lem:g0-macro-slack}
There are constants $\kappa_\rho>0$ and $s_\rho<\infty$ such that, for all $s\ge s_\rho$ and all interior $c$ satisfying
\begin{equation}\label{eq:g0-active-macro-band}
        \rho/2\le |c/s-q|\le2\rho+1/s,
\end{equation}
one has
\begin{equation}
        \delta_s(c)
        :=\E\Delta(H_{s,c})-\Delta(\tau_s(c))
        \ge \frac{\kappa_\rho}{s}.
\end{equation}
\end{lemma}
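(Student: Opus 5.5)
The plan is to use the strong convexity of $\Delta$ on the support interval that the offline marginal $H_{s,c}$ lands in. On the outer band \eqref{eq:g0-active-macro-band} this marginal is pushed deep into one support interval, with fluctuations of order $s^{-1/2}$, so Jensen's gap should be of order $\mathrm{Var}(H_{s,c})\asymp 1/s$. I treat the case $c/s\ge q+\rho/2$, where the marginal lies in $I_-$. The case $c/s\le q-\rho/2$ is symmetric: use $I_+$, curvature $q/L_+$, and rank $c$ among the upper observations.

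\emph{Step 1 (localization).} Let $N\sim\mathrm{Bin}(s-1,q)$ be the number of future upper observations. Since $c-(s-1)q\ge \rho s/2-1$, Bernstein gives $\Prob(N\ge c)\le e^{-c's}$. On $\{N<c\}$, $H_{s,c}$ is the order statistic of rank $r:=c-N$ among $n:=s-1-N$ i.i.d.\ $\mathrm{Unif}[a_1,b_1]$ variables. Let $\mathcal E$ be the event $\{|N-qs|\le \epsilon s\}$ for a small fixed $\epsilon>0$. By $\rho<p/4$ and $c/s\le q+2\rho+1/s$, on $\mathcal E$ the relative lower rank $r/n$ lies in a fixed compact subinterval $[\theta_1,\theta_2]\subset(0,1)$. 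Roughly, $\theta_1\approx \rho/(2p)$ and $\theta_2\approx 2\rho/p+$ small, which is below $1/2$. Beta concentration then shows that $H_{s,c}$ lies in $[a_1+\gamma L_-,\,b_1-\gamma L_-]$ for some fixed $\gamma>0$, except on an event of probability $O(e^{-c''s})$. Since $H_{s,c}$ is bounded, $\tau_s(c)=\E H_{s,c}$ also lies in $[a_1+\gamma L_-/2,\,b_1-\gamma L_-/2]$ for large $s$.

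\emph{Step 2 (convexity with a quadratic remainder).} $\Delta$ is convex and $C^1$ with $\Delta''=p/L_-$ on $(a_1,b_1)$ by \eqref{eq:g0-Delta-exact}. For every $v\in[a_1,b_2]$,
\[
\Delta(v)-\Delta(\tau)-\Delta'(\tau)(v-\tau)\ge \frac{p}{2L_-}\,(v-\tau)^2\,\1\{v\in[a_1,b_1]\},
\]
where I take $\tau=\tau_s(c)\in(a_1,b_1)$. This holds because the integral of $\Delta''$ along the segment from $\tau$ to $v$ is at least its integral over the part of the segment inside $[a_1,b_1]$. If $v\in[a_1,b_1]$, the whole segment is inside. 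If $v\notin[a_1,b_1]$, the left side is still nonnegative by convexity. Set $v=H_{s,c}$ and take expectations. The linear term vanishes because $\tau=\E H_{s,c}$, so
\[
\delta_s(c)\ge \frac{p}{2L_-}\,\E\bigl[(H_{s,c}-\tau)^2\1\{H_{s,c}\in I_-\}\bigr]\ge \frac{p}{2L_-}\Bigl(\E(H_{s,c}-\tau)^2-C e^{-c''s}\Bigr),
\]
using boundedness and Step 1 for the last inequality.

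\emph{Step 3 (variance lower bound).} By the tower property, $\E(H_{s,c}-\tau)^2\ge \E[\mathrm{Var}(H_{s,c}\mid N)\1_{\mathcal E\cap\{N<c\}}]$. Conditional on $N$, the variance is the Beta variance
\[
L_-^2\,\frac{r(n-r+1)}{(n+1)^2(n+2)}.
\]
On $\mathcal E$ this is at least $L_-^2\,\theta_1(1-\theta_2)/(2n)\ge c_1/s$. Since $\Prob(\mathcal E\cap\{N<c\})\to1$, this gives $\delta_s(c)\ge \kappa_\rho/s$ for large $s$, with $\kappa_\rho$ depending only on $p$, $L_-$ and $\rho$ (and on $q$, $L_+$ for the other side).

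\emph{Main obstacle.} I expect the delicate step to be Step 1: checking uniformly over the whole band \eqref{eq:g0-active-macro-band}, including its upper end $2\rho+1/s$, that the rank fraction stays strictly inside $(0,1)$, so that both $H_{s,c}$ and $\tau_s(c)$ are bounded away from the support edges. Near $b_1$ the curvature switches off into the gap, and near $a_1$ the residual formula \eqref{eq:g0-Delta-exact} stops applying. The condition $\rho<\min\{p/4,q/4\}$ is exactly what should secure this. I will make it explicit by writing $r/n=(c-N)/(s-1-N)$ and bounding it on $\mathcal E$ with $\epsilon$ chosen small relative to $\rho$.
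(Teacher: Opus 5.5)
Your proof is correct, but it gets the $1/s$ curvature bound from a different decomposition than the paper uses.

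\textbf{The paper's route.} The paper treats the side $c/s\le q-\rho/2$ and conditions on $N$. On $E_s=\{|N-\mu_s|\le \rho s/4\}$ one has $N-c\ge\rho s/8$, so given $N$ the marginal $H_{s,c}$ lies in $I_+$ almost surely. There $\Delta$ is exactly the quadratic $q(\tau-a_2)^2/(2L_+)$, so the conditional Jensen gap equals $\frac{q}{2L_+}\Var(H_{s,c}\mid N)\gtrsim 1/s$; the condition $\rho<q/4$ keeps the rank fraction away from $0$ and $1$ here. The paper then splits $\delta_s(c)$ into the expected conditional Jensen gap plus the Jensen gap of $\E[H_{s,c}\mid N]$, and drops the second term by convexity. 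This argument never needs to locate $\tau_s(c)$, nor to keep $H_{s,c}$ away from the support edges.

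\textbf{Your route.} You expand unconditionally around $\tau_s(c)$. Your restricted strong-convexity inequality is valid only when $\tau_s(c)\in[a_1,b_1]$: if $\tau$ sat in the gap, the segment from $\tau$ to a point $v\in I_-$ would cross the flat region. That is the only reason you need Step 1. So the ``main obstacle'' you flag comes from centering at the unconditional mean; it is not intrinsic to the lemma.

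\textbf{Step 1 is heavier than necessary.} It is fine as written, but $\tau_s(c)\ge a_1$ is automatic. The bound $\tau_s(c)<b_1$ already follows without Beta concentration. Use the exact conditional mean $\E[b_1-H_{s,c}\mid N]=L_-(c-N)/(s-N)$ on $\{N<c\}$, which is at least $L_-\theta_1$ on $\mathcal E$, together with the exponentially small $\Prob(N\ge c)$.

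\textbf{What each route buys.} Your approach gives a one-line inequality that works for any $\Delta$ whose second derivative is bounded below on one support interval. The paper's conditional decomposition generalizes just as easily, because the conditional Jensen gap is at least $\tfrac{m}{2}\Var(H_{s,c}\mid N)$ whenever $\Delta''\ge m$ there. It also avoids the localization bookkeeping entirely.
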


\begin{proof}
We prove the high-deficit case $\rho/2\le q-c/s\le2\rho+1/s$; the other side is symmetric.  Let $E_s:=\{|N-\mu_s|\le(\rho/4)s\}$.  By Bernstein's inequality, $\Prob(E_s^c)\le e^{-c_\rho s}$.  On $E_s$, for large $s$,
\[
        N-c\ge \frac{\rho}{8}s,
        \qquad
        N\le(q+\rho/4)s,
        \qquad
        c\ge(q-2\rho)s-1\ge\frac q2 s.
\]
Thus $H_{s,c}$ lies in the upper support on $E_s$, and conditional on $N$ it is a uniform order statistic with rank $r_N=N-c+1$.  The variance formula for uniform order statistics gives
\[
        \Var(H_{s,c}\mid N)
        =L_+^2\frac{r_N(N-r_N+1)}{(N+1)^2(N+2)}
        \ge \frac{\kappa_1}{s}
        \qquad\text{on }E_s.
\]
On the upper interval, $\Delta(\tau)=q(\tau-a_2)^2/(2L_+)$, so the conditional Jensen gap equals
\[
        \E[\Delta(H_{s,c})\mid N]
        -\Delta(\E[H_{s,c}\mid N])
        =\frac{q}{2L_+}\Var(H_{s,c}\mid N)
        \ge \frac{\kappa_2}{s}
\]
on $E_s$.  Decomposing the unconditional Jensen gap into the expectation of conditional Jensen gaps plus the Jensen gap of the conditional mean, the second term is nonnegative by convexity.  Hence
\[
        \delta_s(c)\ge \frac{\kappa_2}{s}\Prob(E_s)
        \ge\frac{\kappa_2}{2s}
\]
for large $s$.
\end{proof}

\subsection{Certificate verificatioin}

We now verify the Bellman certificate inequality \eqref{eq:common-bellman-residual} for the certificate \eqref{eq:g0-certificate}.  Recall that
\[
        S_s(c,D)=\E\Delta(H_{s,c})-
        \Delta(\tau_s(c)-D).
\]
The proof separates four regimes.  In the central range $z_s(c)\le\alpha_0$, the drift is nonnegative and the perturbation remains inside the gap.  For $\alpha_0\le z_s(c)\le\alpha_5$, the Jensen/order-statistic source dominates the bounded negative drift.  For $z_s(c)\ge\alpha_5$ near $q$, Lemma~\ref{lem:g0-tail-source} controls the combined term $S_s(c,D_s(c))$.  Outside the central range, the certificate terms are $o(1/s)$ and are dominated by the prophet Jensen slack.

\begin{proposition}[Feasibility of the bounded-density certificate]\label{prop:g0-feasible}
There exist $\eta_0>0$ and $s_0<\infty$ such that, for every $\eta\in(0,\eta_0]$, the certificate \eqref{eq:g0-certificate} satisfies
\begin{equation}\label{eq:g0-bellman-residual}
        M_s(c)+S_s(c,D_s(c))\ge0
\end{equation}
for every $s>s_0$ and every interior state $1\le c<s$.
\end{proposition}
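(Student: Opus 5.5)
The plan is to split the interior states $1\le c<s$ into the four regimes of Lemma~\ref{lem:g0-fdiff}: (D1) $z_s(c)\le\alpha_0$ with $|c/s-q|\le\rho/2$; (D2) $\alpha_0\le z_s(c)\le\alpha_5$; (D3) $z_s(c)\ge\alpha_5$ with $|c/s-q|\le\rho/2$; and (D4) the outer cutoff range $|c/s-q|\ge\rho/2$. States where $B_s(c)$, $B_{s-1}(c)$ and $B_{s-1}(c-1)$ all vanish are trivial: then $M_s=0$, $D_s=0$, and $S_s(c,0)=\delta_s(c)\ge0$ by Lemma~\ref{lem:offline-slack-common}. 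Throughout, $\eta\le\eta_0$ is taken small and $s_0$ large, and I use the bound $S_s(c,D)\ge\E\Delta(H_{s,c})-\Delta(\tau_s(c)-D)$.

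\textbf{Regimes (D1) and (D2): the threshold stays in the gap.} First I show that $\tau_s(c)$ lies at distance of order $G$ inside the gap $[b_1,a_2]$ whenever $|d_s(c)|\le\alpha_5\sigma\sqrt{s\log s}+O(1)$. The plan is to write $\tau=\E H_{s,c}$ and condition on $N$, the number of upper-support observations among the $s-1$ future values. On $\{N\ge c\}$ the marginal $H_{s,c}$ is an upper-uniform order statistic of rank $N-c+1$, so $\E[H_{s,c}-a_2\mid N]\asymp(N-c)_+/N$. Taking expectations gives $\E(H-a_2)_+\lesssim(1+|x|)/\sqrt s$, and symmetrically on the lower side, so $\tau_s(c)$ is within $O(\sqrt{\log s/s})$ of the point where the two tails balance. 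More simply, I can use the convex combination bound: $\tau\ge a_2-(a_2-a_1)\Prob(N<c)$ and $\tau\le b_1+(b_2-b_1)\Prob(N\ge c)$. These only place $\tau$ in $[a_1,b_2]$, so the stronger claim is that $\tau$ lies in $(b_1+\epsilon,a_2-\epsilon)$. That should follow because $\E H$ is a mixture of values near $b_1$ and values near $a_2$ with weights bounded away from $0$ and $1$ only when $|x|=O(1)$. When $|x|\to\infty$ the weight concentrates on one side, and then $\tau$ is near $a_2$ or near $b_1$, but still within $O((1+|x|)/\sqrt s)$ of the edge. The robust statement is therefore that $\Delta(\tau_s(c))$ and $\Delta(\tau_s(c)-D)$ are both $O(\log s/s)\cdot o(1)$ or smaller. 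Concretely, $|\tau-D-a_2|_+\lesssim(1+|x|)/\sqrt s+\eta(\log s)^{3/2}/\sqrt s$, so $\Delta(\tau-D)\lesssim(1+x^2)/s+\eta^2(\log s)^3/s$. This quadratic edge term must be compared with the source $c_I(1+|x|)^2/s$ from Lemma~\ref{lem:g0-inner-source}. The cleanest route is to show that the mean excess $w=(\tau-a_2)_+$ satisfies $w^2\le(1-\epsilon)\E X^2$, via Jensen with the strict variance gain of order $1/s$ plus the mixture variance. For (D1), $M_s\ge0$, and I need $S_s\ge0$; convexity gives $\E\Delta(H)\ge\Delta(\tau)$, and the perturbation loss $\Delta(\tau)-\Delta(\tau-D)\le C|\Delta'(\tau)||D|+CD^2\lesssim(1+|x|)s^{-1/2}\cdot\eta(\log s)^{3/2}s^{-1/2}$. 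I expect this comparison to be the main obstacle in the central range. When $x=O(1)$, the factor $(\log s)^{3/2}$ is not obviously beaten by the Jensen slack $\asymp1/s$ unless $\tau$ sits strictly inside the gap so that $\Delta'(\tau)=0$. So I would first try to prove that for $|x|\le\alpha_5\sqrt{\log s}$ the mean $\tau_s(c)$ stays inside $(b_1+G/4,a_2-G/4)$. The argument is that $\E(H-a_2)_+$ and $\E(b_1-H)_+$ are both $O(\sqrt{\log s/s})$, while the mixing weight mass on each side is handled by the affine cancellation. If this holds, both $\tau$ and $\tau-D$ lie in the gap, the perturbation vanishes, and $S_s=\E\Delta(H)\ge c_I(1+|x|)^2/s$. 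In (D2) this is at least $c_I\alpha_0^2\log s/s$, which beats $C_\lambda\eta\log s/s$ once $\eta\le\eta_0$.

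\textbf{Regime (D3).} Here $|D|=(1+o(1))\eta\lambda\varphi_s\sigma^{-1}\sqrt{\log s/s}$. With $a=|d|/s\ge\alpha_5\sigma\sqrt{\log s/s}$, this gives $|D|\le\varepsilon_0a$ for small $\eta$, since $\varphi_s\le\varphi_\star$. The sign condition $dD<0$ holds by (D3), so Lemma~\ref{lem:g0-tail-source} applies with $\alpha=\alpha_5$. In the small-overshoot case, $S\ge c_0a^2\ge c_0\alpha_5^2\sigma^2\log s/s$, and since $|M_s|\lesssim\eta\varphi_\star(1+\lambda z)\log s/s$ this dominates for small $\eta$. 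Because $z\le a\sqrt{s/\log s}/\sigma$, the relevant quantity is the ratio $(1+\lambda z)\log s/s$ against $a^2$, that is, $(1+\lambda z)$ against $z^2\log s\cdot\sigma^2$, which is fine. In the large-overshoot case, $S\ge Ka|D|=(1+o(1))K\eta\lambda\varphi_s z\log s/s$. Against $|M_s|=\eta\varphi_s(2+\lambda z/2+o(1+\lambda z))\log s/s$, the net is $\eta\varphi_s(\log s/s)[(K-\frac12)\lambda z-2-o(1+\lambda z)]$. This is positive by \eqref{eq:g0-lambda-choice}, since $(K-\frac12)\lambda\alpha_5>3$.

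\textbf{Regime (D4).} Lemma~\ref{lem:g0-macro-slack} gives $\delta_s(c)\ge\kappa_\rho/s$ on the active band, while $|M_s|+|D_s|=o(1/s)$. The perturbation satisfies $|\Delta(\tau)-\Delta(\tau-D)|\le C|D|$, because $\Delta$ is Lipschitz on the compact support. Therefore $M_s+\delta_s-C|D_s|\ge\kappa_\rho/(2s)>0$.
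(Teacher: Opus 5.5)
Your handling of the all-zero states, of regime (D3) (both alternatives of Lemma~\ref{lem:g0-tail-source}, with $|D_s(c)|\le\varepsilon_0 a_s(c)$ for small $\eta$ and the choice \eqref{eq:g0-lambda-choice}), and of regime (D4) (Jensen slack $\kappa_\rho/s$ against $o(1/s)$ terms, using the Lipschitz bound on $\Delta$) matches the paper. The gap is in (D1)--(D2), at the step you yourself flag as the main obstacle.

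The claim you plan to prove there is false: that $\tau_s(c)$ stays in $(b_1+G/4,\,a_2-G/4)$ throughout $|x_s(c)|\le\alpha_5\sqrt{\log s}$. Take $x_s(c)=-\alpha_5\sqrt{\log s}$. Since $H_{s,c}\ge a_2$ on $\{N\ge c\}$ and $H_{s,c}\ge a_1$ always, one has $a_2-\tau_s(c)\le(a_2-a_1)\Prob(N<c)$. Here $\Prob(N<c)$ is of order $s^{-\alpha_5^2/2}(\log s)^{-1/2}\to0$, so $\tau_s(c)\to a_2$. An order-$G$ margin holds only for bounded $x$. Your fallbacks do not close the argument either. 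The crude bound $\Delta(\tau_s(c)-D_s(c))\lesssim D_s(c)^2\asymp\eta^2(\log s)^3/s$, and the first-order bound via $|\Delta'(\tau_s(c))|\,|D_s(c)|$, both exceed the source $c_I(1+|x|)^2/s$ by powers of $\log s$, as you noted.

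The missing idea is a quantitative gap margin: it is only polynomially small, yet it still beats $|D_s(c)|$. For $x_s(c)\le0$ (the other side is symmetric), write $a_2-\tau_s(c)\ge G\,\Prob(N<c)-\E(H_{s,c}-a_2)_+$.
\begin{enumerate}[label=(\roman*),leftmargin=2.5em]
\item The moderate-deviation lower bound of Lemma~\ref{lem:binom-tail-common} gives $\Prob(N<c)\gtrsim e^{-u^2/2}/(1+u)\ge c\,s^{-A^2/2}(\log s)^{-1/2}$ for $u=|x_s(c)|\le A\sqrt{\log s}$.
\item The mean overshoot satisfies $\E(H_{s,c}-a_2)_+\lesssim(1+u)/\sqrt s$. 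This follows from the uniform order-statistic mean at rank $(N-c+1)_+$ together with the positive-part binomial moment bound.
\item Since $A<1$, the first term dominates. This is Lemma~\ref{lem:gapped-gap-margin-pub}: $\dist(\tau_s(c),\{b_1,a_2\})\ge c\,s^{-A^2/2}(\log s)^{-1/2}$ for any fixed $A\in(\alpha_5,1)$.
\item Compare with $|D_s(c)|\le C\eta(\log s)^{3/2}/\sqrt s$. The ratio is $s^{-(1-A^2)/2}(\log s)^2\to0$, so $\tau_s(c)-D_s(c)$ stays in the gap. Hence $\Delta(\tau_s(c)-D_s(c))=0$ and $S_s(c,D_s(c))=\E\Delta(H_{s,c})$.
\end{enumerate}
This is exactly where the restriction $\alpha_5<1$ on the width of the inner band is needed. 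With this margin in place, the rest of your (D1)--(D2) argument goes through as in the paper: $M_s(c)\ge0$ in (D1), and in (D2) the source $c_I\alpha_0^2\log s/s$ beats the drift $-C_\lambda\eta\log s/s$ for small $\eta$.
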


\begin{proof}
We choose $\eta_0$ small enough for the applications of Lemma~\ref{lem:g0-tail-source} and for the comparisons below, and then choose $s_0$ large enough so that all asymptotic estimates hold uniformly for $\eta\le\eta_0$.  Fix such $s$ and $c$.

\emph{Regime 1: $z_s(c)\le\alpha_0$ and $|c/s-q|\le\rho/2$.}  Lemma~\ref{lem:g0-fdiff} gives $M_s(c)\ge0$ and $|D_s(c)|\le C\eta(\log s)^{3/2}/\sqrt{s}$.  The bounded-density case of the common gap-margin Lemma~\ref{lem:gapped-gap-margin-pub}, used with any $A\in(\alpha_5,1)$, gives
\[
        \dist(\tau_s(c),\{b_1,a_2\})
        \ge c s^{-A^2/2}(\log s)^{-1/2}
\]
throughout $z_s(c)\le\alpha_5$.  Since $A<1$,
\[
        \frac{(\log s)^{3/2}/\sqrt{s}}{s^{-A^2/2}(\log s)^{-1/2}}
        =s^{-(1-A^2)/2}(\log s)^2\to0.
\]
Thus $\tau_s(c)-D_s(c)$ lies in the gap for all large $s$, so $\Delta(\tau_s(c)-D_s(c))=0$ and $S_s(c,D_s(c))=\E\Delta(H_{s,c})\ge0$.  Hence \eqref{eq:g0-bellman-residual} holds.

\emph{Regime 2: $\alpha_0\le z_s(c)\le\alpha_5$ and $|c/s-q|\le\rho/2$.}  As in Regime 1, the perturbation $D_s(c)$ is smaller than the gap margin, hence $S_s(c,D_s(c))=\E\Delta(H_{s,c})$.  Lemma~\ref{lem:g0-inner-source} gives $S_s(c,D_s(c))\ge c_1\log s/s$, while Lemma~\ref{lem:g0-fdiff} gives $M_s(c)\ge -C_\lambda\eta\log s/s$.  Choosing $\eta_0\le c/(2C_\lambda)$ yields \eqref{eq:g0-bellman-residual} in this regime.

\emph{Regime 3: $z_s(c)\ge\alpha_5$ and $|c/s-q|\le\rho/2$.}  Put $z=z_s(c)$ and $a_s(c)=|d_s(c)|/s=\sigma z\sqrt{\log s/s}$.  By Lemma~\ref{lem:g0-fdiff}, $d_s(c)D_s(c)<0$ and
\[
        a_s(c)|D_s(c)|
        =(1+o(1))\eta\lambda z\varphi_s(z)\frac{\log s}{s}.
\]
We apply Lemma~\ref{lem:g0-tail-source} with the fixed lower-tail parameter $\alpha=\alpha_5$.
Also,
\[
        \frac{|D_s(c)|}{a_s(c)}
        =(1+o(1))\frac{\eta\lambda \varphi_s(z)}{\sigma^2 z}
        \le C\eta\lambda,
\]
because $\varphi_s$ is bounded and $z\ge\alpha_5$.  Reducing $\eta_0$ ensures $|D_s(c)|\le\varepsilon_0a_s(c)$, so Lemma~\ref{lem:g0-tail-source} applies.

In Lemma~\ref{lem:g0-tail-source}, if $w>\frac34 R a_s(c)$, then $S_s(c,D_s(c))\ge K a_s(c)|D_s(c)|$ with $K=11/16$.  Combining with \eqref{eq:g0-tail-M},
\[
\begin{aligned}
        M_s(c)+S_s(c,D_s(c))
        &\ge \eta \varphi_s(z)\frac{\log s}{s}
        \left[(K-1/2)\lambda z-2-r_s(z)\right].
\end{aligned}
\]
By \eqref{eq:g0-lambda-choice}, the deterministic part of the bracket satisfies
\[
        (K-1/2)\lambda z-2
        \ge (K-1/2)\lambda\alpha_5-2>1
        \qquad (z\ge\alpha_5).
\]
The uniform bound on $r_s(z)$ in Lemma~\ref{lem:g0-fdiff} implies that $r_s(z)=o(1)$ on bounded subranges and $r_s(z)=o(\lambda z)$ on growing subranges, uniformly over $\alpha_5\le z\le C_\rho\sqrt{s/(\log s)}$.  Therefore the lower bound for $S_s(c,D_s(c))$ dominates the error for all sufficiently large $s$, and the bracket is positive uniformly.

If $w\le\frac34 R a_s(c)$, Lemma~\ref{lem:g0-tail-source} gives
\[
        S_s(c,D_s(c))\ge c_0a_s(c)^2
        =c_0\sigma^2 z^2\frac{\log s}{s}.
\]
On $z\ge\alpha_5$, $\varphi_s(z)$ is bounded and decreasing.  Hence, for $z\in[\alpha_5,1]$, the ratio $\varphi_s(z)(1+\lambda z)/z^2$ is bounded by a constant depending only on $\lambda,\alpha_5$ and $\varphi_\star$.  For $z\ge1$, $1+\lambda z\le(1+\lambda)z^2$, and $\varphi_s(z)\le \varphi_s(\alpha_5)\le C \varphi_\star$.  Therefore, for all $z\ge\alpha_5$,
\[
        \varphi_s(z)(1+\lambda z)\le C_{\lambda,\alpha_5}z^2.
\]
Using \eqref{eq:g0-tail-M},
\[
        -M_s(c)\le C\eta \varphi_s(z)(1+\lambda z)\frac{\log s}{s}
        \le C'\eta z^2\frac{\log s}{s}.
\]
Reducing $\eta_0$ once more gives \eqref{eq:g0-bellman-residual} in the small-overshoot subcase.

\emph{Regime 4: $|c/s-q|\ge\rho/2$.}  If all three values $B_s(c)$, $B_{s-1}(c)$, and $B_{s-1}(c-1)$ vanish, then $M_s(c)=D_s(c)=0$, and
\[
        S_s(c,0)=\E\Delta(H_{s,c})-\Delta(\tau_s(c))=\delta_s(c)\ge0
\]
by the Jensen-slack identity.

If at least one neighboring certificate value is nonzero, then the support of $\Omega$ implies $|c/s-q|\le2\rho+1/s$ for large $s$, so \eqref{eq:g0-active-macro-band} holds.  Lemma~\ref{lem:g0-fdiff} gives $|M_s(c)|+|D_s(c)|=o(1/s)$, while Lemma~\ref{lem:g0-macro-slack} gives $\delta_s(c)\ge\kappa_\rho/s$.  Since $h$ is globally Lipschitz for a compactly supported distribution and $h_0$ is affine, $\Delta=h-h_0$ is globally Lipschitz. Therefore the following comparison is valid even if the $o(1/s)$ perturbation moves $\tau_s(c)-D_s(c)$ slightly outside $[a_1,b_2]$:
\[
        S_s(c,D_s(c))
        =\delta_s(c)+\Delta(\tau_s(c))-
        \Delta(\tau_s(c)-D_s(c))
        \ge \frac{\kappa_\rho}{s}-C_\Delta |D_s(c)|
        = \frac{\kappa_\rho}{s}-o(1/s).
\]
Together with $M_s(c)=o(1/s)$, this proves \eqref{eq:g0-bellman-residual}.  The four regimes exhaust all interior states, so the proposition follows.
\end{proof}

\subsection{The lower bound}

\begin{theorem}[Two-uniform mixture]\label{thm:gapped-zero}
Assume \eqref{eq:g0-uniform-model}.  Let $k_T=\lfloor qT\rfloor$.  There is a constant $c>0$, depending only on the two-uniform distribution, such that for all sufficiently large $T$,
\begin{equation}
        B_T^\star(k_T)
        =\Phi_T(k_T)-J_T(k_T)
        \ge c(\log T)^2.
\end{equation}
\end{theorem}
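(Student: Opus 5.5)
The plan is to combine the feasibility result, Proposition~\ref{prop:g0-feasible}, with the comparison principle, Proposition~\ref{prop:common-comparison}, and then read off the height of the certificate \eqref{eq:g0-certificate} at the target state $(T,k_T)$.

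First, fix the constants $\alpha_0<\dots<\alpha_5$, $\rho$, $f_0$, $\kappa$, $\varphi_\star$ and $\lambda$ as in the construction. Proposition~\ref{prop:g0-feasible} then supplies $\eta_0>0$ and $s_0$ such that, for every $\eta\in(0,\eta_0]$, the Bellman certificate inequality \eqref{eq:common-bellman-residual} holds at every $s>s_0$ and every interior $c$. In particular it holds on the sub-grid $1\le c\le\min\{s-1,k\}$ required by \eqref{eq:certificate-program} with $k=k_T$.

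Next, check the remaining constraints of $\mathsf P(T,k_T,s_0)$.
\begin{itemize}
\item Nonnegativity holds because $\Omega\ge0$ and $\varphi_s>0$.
\item The boundary conditions $B_s(0)=0$ and $B_s(c)=0$ for $c\ge s$ hold by definition.
\item For the base condition, the certificate has the form $\eta\bar B_s(c)$ with $\bar B_{s_0}$ finite, since $\varphi_{s_0}$ is bounded. Corollary~\ref{cor:base-scale-framework}, which rests on the strict positivity in Lemma~\ref{lem:strict-common}, lets us shrink $\eta$ to some $\eta_1\in(0,\eta_0]$ so that $B_{s_0}\le B^\star_{s_0}$.
\end{itemize}
Shrinking $\eta$ keeps feasibility for $s>s_0$ because Proposition~\ref{prop:g0-feasible} is uniform in $\eta\le\eta_0$. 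Also, $s_0$ and $\eta_1$ depend only on the distribution and the fixed parameters, not on $T$. The comparison principle then gives $B_T(k_T)\le B_T^\star(k_T)$ for all $T>s_0$.

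It remains to evaluate $B_T(k_T)$. Since $k_T=\lfloor qT\rfloor$, we have $|d_T(k_T)|<1$, so $z_T(k_T)\le 1/(\sigma\sqrt{T\log T})\le\alpha_1$ for large $T$. Hence $\varphi_T(z_T(k_T))=f_0+\kappa z^2\ge f_0$. Moreover $|k_T/T-q|\le 1/T\le\rho$, so $\Omega(k_T/T)=1$, and $1\le k_T<T$ for large $T$ because $0<q<1$. Therefore $B_T(k_T)\ge \eta_1 f_0(\log T)^2$, which gives the theorem with $c=\eta_1 f_0$.

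The main obstacle is not in this assembly, since the heavy lifting is done in Proposition~\ref{prop:g0-feasible}. The one point needing care is the order of quantifiers: $s_0$ must be fixed first, independently of $\eta$, and only then is $\eta$ reduced for the base condition. I would verify this ordering explicitly. The proof of Proposition~\ref{prop:g0-feasible} takes $s_0$ large uniformly over $\eta\le\eta_0$, so the base shrinkage of Corollary~\ref{cor:base-scale-framework} does not feed back into $s_0$.
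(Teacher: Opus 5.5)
Your proposal is correct and matches the paper's proof step for step. Both use Proposition~\ref{prop:g0-feasible} for feasibility, Corollary~\ref{cor:base-scale-framework} for the base condition (with $s_0$ fixed before $\eta$ is shrunk), the comparison principle, and then evaluate the certificate at $z_T(k_T)\le\alpha_1$ with $\Omega(k_T/T)=1$. Your use of $\kappa z^2\ge 0$ to get $B_T(k_T)\ge\eta_1 f_0(\log T)^2$ directly is a slightly cleaner finish than the paper's $\eta f_0(\log T)^2+o(1)$ step, but the argument is otherwise identical.
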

\begin{proof}
Let $\eta_0$ and $s_0$ be as in Proposition~\ref{prop:g0-feasible}.  By Corollary~\ref{cor:base-scale-framework}, after possibly reducing $\eta\in(0,\eta_0]$, the certificate satisfies the base condition at time $s_0$.  Proposition~\ref{prop:g0-feasible} verifies the Bellman certificate inequality for all later times, so $B$ is feasible for $\mathsf P(T,k_T,s_0)$ for all sufficiently large $T$.

By Proposition~\ref{prop:common-comparison},
\[
        \Reg(T,k_T;F)\ge B_T(k_T).
\]
Now $|k_T-qT|\le1$, hence $z_T(k_T)=O((T\log T)^{-1/2})$.  For large $T$, $\Omega(k_T/T)=1$ and $z_T(k_T)\le\alpha_1$, so \eqref{eq:g0-profile-def} gives
\[
        B_T(k_T)
        =\eta(\log T)^2\left(f_0+\kappa z_T(k_T)^2\right)
        =\eta f_0(\log T)^2+o(1).
\]
Therefore $\Reg(T,k_T;F)\ge(\eta f_0/2)(\log T)^2$ for all sufficiently large $T$.
\end{proof}

\section{Conclusion}\label{sec:conclusion}

This paper proves a tight lower bound for the additive prophet-online regret in the bounded-density gapped case of the multi-secretary problem: for a mixture of two separated uniform distributions, the regret at the critical capacity is $\Omega((\log T)^2)$.  This shows that the additional logarithmic factor in prior upper bounds for gapped bounded-density instances is not an artifact of the analysis, but is already unavoidable in the one-resource multi-secretary specialization.  The same Bellman-certificate framework also yields a matching lower bound for gapped distributions with gap-facing vanishing densities in the shifted-capacity regime; this companion result is proved in Appendix~\ref{app:gapped-positive}.

The proof approach is based on explicit feasible Bellman certificates for the exact Bellman recursion of the prophet-online regret.  The certificates make visible the mechanism behind the regret order of the gapped support case.  With a support gap, the reward curve is locally affine at the operating threshold and curvature appears only near the adjacent support edges; the certificate must therefore exploit moderate-deviation fluctuations of the prophet's marginal order statistic.  This separation between where slack is generated and where the certificate is transported is what permits the log-squared lower bound in the bounded-density gapped case.  The Bellman-certificate method therefore works directly with the regret recursion rather than through policy-level indistinguishability arguments.

\appendix
\section{Binomial estimates}\label{app:binomial}\label{sec:binomial}

This section collects the binomial estimates used throughout the proof, stated uniformly over the central and moderate-deviation ranges that arise in the certificates.  The estimates are elementary consequences of Bernstein's inequality, Berry--Esseen, Mills' ratio, and moment bounds for sums of bounded independent random variables; see, for example, \citet{BoucheronLugosiMassart2013} for Bernstein-type concentration inequalities and \citet{Feller1971} for classical normal approximation and tail estimates.

Throughout this section fix $q\in(0,1)$ and write $\sigma^2:=q(1-q)$.  For each horizon $s\ge2$, let
\[
        N=N_s\sim \mathrm{Bin}(s-1,q),
        \qquad
        m:=s-1,
        \qquad
        \mu_s:=q(s-1).
\]
For an integer capacity $c\in\{0,\ldots,s\}$, use the coordinate $x_s(c)$ from \eqref{eq:coords-model} and set $u_s(c):=|x_s(c)|$.  When no confusion is possible we write $x=x_s(c)$ and $u=|x|$.  Notice that the binomial mean is $\mu_s=qs-q$, whereas the state coordinate is centered at $qs$.  This distinction creates only $O(s^{-1/2})$ shifts.  More precisely, for any integer $c$,
\begin{align}
        \frac{c-\mu_s}{\sigma\sqrt{m}}
        &=x_s(c)\sqrt{\frac{s}{m}}+\frac{q}{\sigma\sqrt{m}},\label{eq:binom-center-shift-1}\\
        \frac{c-1-\mu_s}{\sigma\sqrt{m}}
        &=x_s(c)\sqrt{\frac{s}{m}}-\frac{1-q}{\sigma\sqrt{m}}.\label{eq:binom-center-shift-2}
\end{align}
Thus, uniformly in the moderate-deviation range $|x_s(c)|\le A\sqrt{\log s}+L$,
\begin{equation}
        \frac{c-\mu_s}{\sigma\sqrt{m}}=x_s(c)+O\!\left(\frac{1+|x_s(c)|}{s}+\frac1{\sqrt{s}}\right),
\end{equation}
and the same estimate holds with $c$ replaced by $c-1$.

\subsection{Concentration and central moments}

We begin with a nonasymptotic concentration estimate.  It is useful both for tail truncations and for proving uniform moment bounds.

\begin{lemma}[Bernstein bound for the binomial]\label{lem:bernstein-binomial}
For every $s\ge2$ and every $t\ge0$,
\begin{align}
        \Prob(N-\mu_s\ge t)
        &\le
        \exp\!\left(-\frac{t^2}{2\sigma^2(s-1)+2t/3}\right),\label{eq:bernstein-upper}\\
        \Prob(\mu_s-N\ge t)
        &\le
        \exp\!\left(-\frac{t^2}{2\sigma^2(s-1)+2t/3}\right).\label{eq:bernstein-lower}
\end{align}
Consequently, for every $b>0$ there is $C_b<\infty$ such that
\begin{equation}\label{eq:binom-central-moment}
        \E|N-\mu_s|^b\le C_b s^{b/2}
        \qquad\text{for all }s\ge2.
\end{equation}
\end{lemma}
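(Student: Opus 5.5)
The plan is to write $N-\mu_s=\sum_{i=1}^{s-1}(Y_i-q)$ with $Y_i$ i.i.d.\ Bernoulli$(q)$. Each summand is centered, bounded by $\max\{q,1-q\}\le 1$ in absolute value, and has variance $\sigma^2$. The classical Bernstein inequality for sums of independent centered variables bounded by $M$ gives
\[
\Prob\Bigl(\sum_i X_i\ge t\Bigr)\le\exp\Bigl(-\frac{t^2}{2\sum_i\Var X_i+2Mt/3}\Bigr).
\]
Taking $M=1$ and $\sum_i\Var X_i=\sigma^2(s-1)$ yields \eqref{eq:bernstein-upper}. Applying the same inequality to $-X_i=q-Y_i$, which is also centered, bounded by $1$, and has variance $\sigma^2$, yields \eqref{eq:bernstein-lower}. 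If one prefers a self-contained argument, I would derive it from the Chernoff bound $\E e^{\theta X}\le\exp\{\sigma^2(e^\theta-1-\theta)\}$, valid for centered $X\le1$, then bound $e^\theta-1-\theta\le \theta^2/(2(1-\theta/3))$ and optimize over $\theta$. This is standard, and I would simply cite \citet{BoucheronLugosiMassart2013}.

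For the moment bound \eqref{eq:binom-central-moment}, I would combine the two tails into
\[
\Prob(|N-\mu_s|\ge t)\le 2\exp\Bigl(-\frac{t^2}{2\sigma^2 m+2t/3}\Bigr)
\]
and use the layer-cake formula $\E|N-\mu_s|^b=\int_0^\infty b t^{b-1}\Prob(|N-\mu_s|\ge t)\dd t$. The integral splits into two ranges. For $t\le 3\sigma^2 m$ the denominator is at most $4\sigma^2 m$, so the tail is at most $2e^{-t^2/(4\sigma^2 m)}$. Substituting $t=\sqrt{m}\,u$ bounds this part by $C_b m^{b/2}$, with $C_b=2b\int_0^\infty u^{b-1}e^{-u^2/(4\sigma^2)}\dd u$. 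For $t>3\sigma^2m$ the tail is at most $2e^{-t/4}$ (the denominator is at most $4t/3$, and the resulting exponent $3t/4$ dominates $t/4$). Alternatively, this range is trivial because $|N-\mu_s|\le m$ deterministically, so the integrand vanishes for $t>m$. In either case its contribution is $O(1)$, or $O(m^b e^{-cm})$, which is at most $C_b m^{b/2}$. Since $m=s-1\le s$ and $s\ge2$, the constant is uniform in $s$.

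There is no real obstacle here. The only care needed is that $C_b$ does not depend on $s$, which the scaling substitution ensures, and that the cases $b<1$ and $b\ge1$ are handled uniformly, which the layer-cake formula does automatically for any $b>0$.
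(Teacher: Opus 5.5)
Your proposal is correct and follows essentially the same route as the paper: Bernstein's inequality for the centered Bernoulli summands (and their negatives) gives both tails, and the moment bound comes from the tail-integral formula, using a sub-Gaussian bound on the bulk range and either the deterministic bound $|N-\mu_s|\le s-1$ or an exponential tail beyond it. The only cosmetic difference is that you split the integral at $t=3\sigma^2 m$ rather than at $t=s$, which changes nothing of substance.
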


\begin{proof}
Write $N-\mu_s=\sum_{i=1}^{s-1}(Z_i-q)$, where $Z_i$ are iid Bernoulli$(q)$.  The summands are mean zero, bounded by $1$ in absolute value, and have total variance $(s-1)\sigma^2$.  Bernstein's inequality gives \eqref{eq:bernstein-upper}; applying the same inequality to $-(N-\mu_s)$ gives \eqref{eq:bernstein-lower}.

For the moment bound, use the tail-integral identity
\[
        \E|N-\mu_s|^b
        = b\int_0^\infty t^{b-1}\Prob(|N-\mu_s|\ge t)\,dt.
\]
Split the integral at $t=s$.  On $0\le t\le s$, Bernstein gives a sub-Gaussian bound after increasing constants:
\[
        \Prob(|N-\mu_s|\ge t)
        \le 2\exp\!\left(-c\frac{t^2}{s}\right)
\]
for a constant $c=c(q)>0$.  Integrating this bound gives $O(s^{b/2})$.  On $t>s$, the probability is zero because $|N-\mu_s|\le s-1$.  This proves \eqref{eq:binom-central-moment}.
\end{proof}

A useful corollary is the following square-root logarithmic concentration estimate.

\begin{corollary}[Square-root logarithmic concentration]\label{cor:binom-sqrtlog}
For every $a>0$ there are constants $C_a<\infty$ and $s_a<\infty$ such that, for all $s\ge s_a$,
\begin{equation}\label{eq:binom-sqrtlog}
        \Prob\left(|N-\mu_s|\ge C_a\sqrt{s\log s}\right)\le s^{-a}.
\end{equation}
\end{corollary}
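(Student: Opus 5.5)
The plan is to derive Corollary~\ref{cor:binom-sqrtlog} directly from the Bernstein bounds \eqref{eq:bernstein-upper}--\eqref{eq:bernstein-lower} of Lemma~\ref{lem:bernstein-binomial}, applied at the threshold $t=C_a\sqrt{s\log s}$. A union bound over the two one-sided tails gives
\[
        \Prob\bigl(|N-\mu_s|\ge t\bigr)
        \le 2\exp\!\left(-\frac{t^2}{2\sigma^2(s-1)+2t/3}\right).
\]
The only task is to bound the denominator so that the exponent is at least $(a+1)\log s$; the factor $s^{-1}$ then absorbs the factor $2$.

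\textbf{Step 1: bound the denominator.} Since $t=C_a\sqrt{s\log s}=o(s)$, we have $2t/3\le \sigma^2 s$ for all $s\ge s_a$, where $s_a$ depends on $C_a$ and $q$. Also $2\sigma^2(s-1)\le 2\sigma^2 s$. Hence for $s\ge s_a$ the denominator is at most $3\sigma^2 s$.

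\textbf{Step 2: bound the exponent.} By Step~1, the exponent is at least
\[
        \frac{C_a^2\,s\log s}{3\sigma^2 s}=\frac{C_a^2}{3\sigma^2}\log s .
\]
Choose $C_a:=\sqrt{3\sigma^2(a+1)}$. Then the two-sided bound becomes $2s^{-(a+1)}\le s^{-a}$ for $s\ge2$. Enlarging $s_a$ so that it is also at least $2$ completes the argument, with $C_a$ depending only on $a$ and $q$.

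The only point needing care, rather than a genuine obstacle, is the order of choices: first fix $C_a$ from $a$ and $\sigma$, and only then choose $s_a$ so that the linear Bernstein term $2t/3$ is dominated by the variance term. Doing it in this order avoids any circularity, since $s_a$ depends on $C_a$ but $C_a$ does not depend on $s_a$. No Berry--Esseen or Mills-ratio refinement is needed; the sub-Gaussian regime of Bernstein's inequality suffices because $\sqrt{s\log s}\ll s$.
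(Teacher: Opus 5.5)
Your proposal is correct and follows essentially the same route as the paper: apply the Bernstein bound of Lemma~\ref{lem:bernstein-binomial} at $t=C_a\sqrt{s\log s}$, note that the linear term $2t/3$ is $o(s)$, and take $C_a$ large enough. Your version is simply more explicit about the constants and about the factor $2$ from combining the two one-sided tails, which the paper's proof glosses over.
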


\begin{proof}
Apply Lemma~\ref{lem:bernstein-binomial} with $t=C\sqrt{s\log s}$.  For fixed $C$, the denominator in Bernstein's exponent is $2\sigma^2s+o(s)$, hence the exponent is $-(C^2/(2\sigma^2)+o(1))\log s$.  Choosing $C$ sufficiently large gives \eqref{eq:binom-sqrtlog}.
\end{proof}

\subsection{Moderate-deviation tail lower bounds}

The next lemma gives the lower-tail estimates needed to keep the offline marginal $H_{s,c}$ inside the gap.  

\begin{lemma}[Moderate-deviation tail lower bound]\label{lem:binom-tail-common}
Fix constants $A\in(0,1)$ and $L<\infty$.  There are constants $a_A>0$ and $s_A<\infty$, depending only on $q,A,L$, such that for all $s\ge s_A$ and all integers $c$ with $u=|x_s(c)|\le A\sqrt{\log s}+L$, 
\begin{equation}\label{eq:binom-min-tail}
        \min\{\Prob(N<c),\Prob(N\ge c)\}
        \ge
        a_A\frac{e^{-u^2/2}}{1+u}.
\end{equation}
\end{lemma}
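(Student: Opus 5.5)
By symmetry it suffices to bound each of $\Prob(N<c)$ and $\Prob(N\ge c)$ from below. The larger of the two is at least $1/2$, so only the tail on the far side of $c$ from $\mu_s$ needs work. I would handle this tail by a local (Stirling/de Moivre--Laplace) lower bound for binomial point probabilities, summed over a window of width $\sqrt{s}/(1+u)$. This avoids Berry--Esseen, whose $O(s^{-1/2})$ error is too crude once $e^{-u^2/2}$ drops to about $s^{-A^2/2}$.

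\textbf{Step 1 (reduction).} Suppose $c\ge\mu_s$, so $\Prob(N\ge c)$ is the small side; the case $c<\mu_s$ is symmetric, with $\Prob(N<c)=\Prob(N\le c-1)$. By \eqref{eq:binom-center-shift-1}--\eqref{eq:binom-center-shift-2}, the standardized threshold $y:=(c-\mu_s)/(\sigma\sqrt m)$ satisfies
\[
y=u+O\bigl((1+u)/s+s^{-1/2}\bigr).
\]
Since $u\le A\sqrt{\log s}+L$, we have $u^2/s\to0$ and $u/\sqrt s\to0$. Hence $e^{-y^2/2}\asymp e^{-u^2/2}$, with constants depending on $q,A,L$. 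For bounded $u$, say $u\le u_0$, both probabilities are bounded below by a positive constant by the CLT. So we may assume $u\to\infty$ along the sequence, or simply $u\ge u_0$.

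\textbf{Step 2 (local lower bound).} For integers $j$ with $|j-\mu_s|\le K\sqrt{s\log s}$, Stirling's formula with the standard binomial entropy expansion gives
\[
\Prob(N=j)\ge \frac{c_1}{\sigma\sqrt m}\exp\Bigl\{-\frac{(j-\mu_s)^2}{2\sigma^2 m}\Bigr\}\bigl(1-O((\log s)^{3/2}/\sqrt s)\bigr).
\]
The cubic term in the relative-entropy expansion is $O(|j-\mu_s|^3/s^2)=O((\log s)^{3/2}/\sqrt s)=o(1)$, and this is exactly where $u=O(\sqrt{\log s})$ is used.

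\textbf{Step 3 (summation).} Sum the bound from Step 2 over $j$ from $c$ to $c+\lfloor\sigma\sqrt m/(1+u)\rfloor$. On this window $(j-\mu_s)/(\sigma\sqrt m)\in[y,\,y+1/(1+u)]$, so the exponent changes by at most $y/(1+u)+O(1)=O(1)$. Each term is therefore at least $c_2 e^{-y^2/2}/(\sigma\sqrt m)$, and there are about $\sigma\sqrt m/(1+u)$ terms. Hence
\[
\Prob(N\ge c)\ge a_A\,e^{-u^2/2}/(1+u),
\]
and combining with Step 1 gives \eqref{eq:binom-min-tail}.

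The main obstacle is Step 2: the local-limit estimate must hold uniformly with only a $1+o(1)$ multiplicative error throughout $|j-\mu_s|\lesssim\sqrt{s\log s}$. I would prove it directly. Write $\Prob(N=j)$ via Stirling and expand $m\,\mathrm{KL}(j/m\,\|\,q)$ to third order, bounding the remainder by $C|j-\mu_s|^3/m^2$. This is routine but needs care with uniformity in $j$ and with the $\sqrt{2\pi j(m-j)/m}$ prefactor, which is $\asymp\sqrt m$ because $j/m$ stays near $q$. As an alternative, Cramér's moderate-deviation theorem gives the tail asymptotics directly, but the elementary local argument keeps all constants explicit and depending only on $q,A,L$.
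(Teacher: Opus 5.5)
Your argument is correct, but it takes a different route from the paper, and your reason for avoiding Berry--Esseen is mistaken. The paper's proof is Berry--Esseen plus Mills' ratio:
\begin{itemize}
\item It shows that the centering shift $O((1+u)/s+s^{-1/2})$ changes the Gaussian tail only by a factor $1+o(1)$. This is because the log-derivative of $v\mapsto\Phi^N(-v)$ is $O(1+v)$ by Mills' ratio.
\item Mills' ratio then gives a normal-tail lower bound of order $e^{-u^2/2}/(1+u)$.
\item Finally it uses the hypothesis $A<1$. For $u\le A\sqrt{\log s}+L$ one has $e^{-u^2/2}/(1+u)\ge c\,s^{-A^2/2}e^{-AL\sqrt{\log s}}(\log s)^{-1/2}$. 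This dominates the Berry--Esseen error $C_{\rm BE}/\sqrt m$, since $s^{-A^2/2}\gg s^{-1/2}$.
\end{itemize}
So Berry--Esseen is not too crude in this range. The restriction $A<1$ in the statement is exactly what makes it sufficient.

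Your local-limit argument is also valid. It uses Stirling with the cubic relative-entropy remainder $O(|j-\mu_s|^3/s^2)=o(1)$. It then sums over a window of about $\sqrt s/(1+u)$ lattice points on which the Gaussian exponent moves by $O(1)$. This route gives more: it never uses $A<1$, so the same bound holds for any fixed $A$, and in fact for $u=o(s^{1/6})$. The price is that you must prove the uniform local estimate yourself, whereas the paper's route takes a few lines from standard results.

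One small point to state explicitly in your Step 1. For $u\ge u_0$, the far-side tail is below $1/2$ (for example by Chebyshev), so it really is the minimum and the near side is at least $1/2$. For $u\le u_0$ you already invoke the CLT.
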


\begin{proof}
Let $\Phi^N$ denote the standard normal distribution function.  The Berry--Esseen theorem for Bernoulli sums gives a constant $C_{\rm BE}=C_{\rm BE}(q)$ such that, for all real $t$,
\begin{equation}\label{eq:BE-binomial}
        \left|
        \Prob\left(\frac{N-\mu_s}{\sigma\sqrt{m}}\le t\right)
        -\Phi^N(t)
        \right|
        \le \frac{C_{\rm BE}}{\sqrt{m}}.
\end{equation}
We first lower-bound $\Prob(N<c)$ when $x_s(c)\le0$.  Let $u=|x_s(c)|$.  By \eqref{eq:binom-center-shift-2},
\[
        \frac{c-1-\mu_s}{\sigma\sqrt{m}}
        =-u+\eps_s,
        \qquad
        |\eps_s|\le C\left(\frac{1+u}{s}+\frac1{\sqrt{s}}\right).
\]
In the range $u\le A\sqrt{\log s}+L$, one has $|\eps_s|=o(1)$ and $(1+u)|\eps_s|=o(1)$.  The mean-value theorem applied to $v\mapsto\log\Phi^N(-v)$ gives
\[
        \left|\log\frac{\Phi^N(-u+\eps_s)}{\Phi^N(-u)}\right|
        \le C(1+u)|\eps_s|=o(1),
\]
where the derivative bound uses Mills' ratio.  Hence the small shift changes the normal tail by a factor $1+o(1)$ uniformly over this range.  Applying Mills' ratio once more gives
\begin{equation}\label{eq:normal-tail-shift-left}
        \Phi^N(-u+\eps_s)
        \ge
        c_1\Phi^N(-u)
        \ge
        c_2\frac{e^{-u^2/2}}{1+u}
\end{equation}
for constants $c_1,c_2>0$.  Since $A<1$,
\[
        \frac{e^{-u^2/2}}{1+u}
        \ge
        c_3\frac{s^{-A^2/2}\exp(-A L\sqrt{\log s})}{1+\sqrt{\log s}},
\]
and this quantity dominates $s^{-1/2}$.  Therefore the Berry--Esseen error in \eqref{eq:BE-binomial} is at most one half of the right-hand side of \eqref{eq:normal-tail-shift-left} for all sufficiently large $s$.  Since $\Prob(N<c)=\Prob(N\le c-1)$, this gives the required lower bound on $\Prob(N<c)$ when $x_s(c)\le0$.

The corresponding lower bound on $\Prob(N\ge c)$ when $x_s(c)\ge0$ is identical.  In that case,
\[
        \frac{c-\mu_s}{\sigma\sqrt{m}}=u+o(1),
\]
and applying Berry--Esseen to the upper tail, with the same Mills-ratio comparison, gives
\[
        \Prob(N\ge c)
        \ge c_4\frac{e^{-u^2/2}}{1+u}
\]
for a constant $c_4>0$.

Finally, \eqref{eq:binom-min-tail} follows by combining the preceding estimates with the trivial observation that the non-rare side has probability bounded below by a positive constant.  For example, if $x_s(c)\le0$, then the preceding lower bound controls $\Prob(N<c)$, while $\Prob(N\ge c)$ is at least a positive constant for all large $s$ by Berry--Esseen and \eqref{eq:binom-center-shift-1}.  The case $x_s(c)\ge0$ is symmetric.  Reducing $a_A$ if necessary gives \eqref{eq:binom-min-tail}.
\end{proof}

The form most often used later is the following direct consequence.

\begin{corollary}[Tail lower bound on the $\sqrt{s\log s}$ scale]\label{cor:binom-tail-sqrtlog}
Fix $A\in(0,1)$.  There are constants $b_A>0$ and $s_A<\infty$ such that, whenever $s\ge s_A$ and
$|c-qs|\le A\sigma\sqrt{s\log s}$,
one has
\begin{equation}
        \min\{\Prob(N<c),\Prob(N\ge c)\}
        \ge
        b_A s^{-A^2/2}(\log s)^{-1/2}.
\end{equation}
\end{corollary}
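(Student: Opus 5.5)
This corollary is a direct specialization of Lemma~\ref{lem:binom-tail-common} with $L=0$ (or any fixed $L$). First translate the hypothesis: $|c-qs|\le A\sigma\sqrt{s\log s}$ means $u=|x_s(c)|=|c-qs|/(\sigma\sqrt s)\le A\sqrt{\log s}$. This lies in the range $u\le A\sqrt{\log s}+L$ required by the lemma. For $s\ge s_A$ the lemma then gives
\[
        \min\{\Prob(N<c),\Prob(N\ge c)\}\ge a_A\frac{e^{-u^2/2}}{1+u}.
\]

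Next, lower-bound the right-hand side uniformly over the allowed range of $u$. The map $u\mapsto e^{-u^2/2}/(1+u)$ is decreasing on $[0,\infty)$, so its minimum over $0\le u\le A\sqrt{\log s}$ is attained at the endpoint $u=A\sqrt{\log s}$. That minimum equals
\[
        \frac{s^{-A^2/2}}{1+A\sqrt{\log s}}.
\]
For $s\ge e$ we have $1+A\sqrt{\log s}\le(1+A)\sqrt{\log s}$. Hence the probability is at least $\frac{a_A}{1+A}s^{-A^2/2}(\log s)^{-1/2}$, and we may set $b_A:=a_A/(1+A)$.

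Finally, take $s_A$ at least the lemma's threshold and at least $3$, so that $\log s\ge1$. No step is a real obstacle here. The only point needing care is the monotonicity used to replace $u$ by its extreme value, which is immediate because both factors $e^{-u^2/2}$ and $1/(1+u)$ are decreasing in $u\ge 0$.
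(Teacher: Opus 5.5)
Your proposal is correct and follows the paper's proof: apply Lemma~\ref{lem:binom-tail-common} with $L=0$, use $u\le A\sqrt{\log s}$ to bound $e^{-u^2/2}/(1+u)$ below by $s^{-A^2/2}/(1+A\sqrt{\log s})$, and absorb the factor $1+A\sqrt{\log s}$ into $(\log s)^{1/2}$ for large $s$. Your explicit choice $b_A=a_A/(1+A)$ with $s\ge e$ just makes the paper's ``for all large $s$'' step concrete.
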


\begin{proof}
Apply Lemma~\ref{lem:binom-tail-common} with $L=0$.  Since $u\le A\sqrt{\log s}$,
\[
        \frac{e^{-u^2/2}}{1+u}
        \ge
        \frac{s^{-A^2/2}}{1+A\sqrt{\log s}}
        \ge
        b_A s^{-A^2/2}(\log s)^{-1/2}
\]
for all large $s$.
\end{proof}

When a bound has the form $|x_s(c)|\le A_0\sqrt{\log s}+O(1)$, we will apply Corollary~\ref{cor:binom-tail-sqrtlog} with any fixed $A>A_0$; this absorbs the $O(1)$ buffer for all sufficiently large $s$.

\subsection{Positive-part binomial moments}

The source estimates in the vanishing-density sections require moments of the random rank by which the future high count exceeds, or falls short of, the capacity.  The next lemma gives both upper bounds and truncated lower bounds.  The truncation in the lower bound will later ensure that the relevant order-statistic ranks are at most a constant multiple of $(1+u)\sqrt{s}$.

\begin{lemma}[Positive-part binomial moments]\label{lem:binomial-moment-common}
Fix $a>0$ and constants $A,L<\infty$.  There are constants $0<b_a<B_a<\infty$ and $s_a<\infty$, depending only on $q,a,A,L$, such that for all $s\ge s_a$ and all integers $c$ with
$u= |x_s(c)|\le A\sqrt{\log s}+L$,
one has
\begin{align}
        \E[(N-c+1)_+^a]&\le B_a s^{a/2}(1+u)^a,
        \label{eq:binom-moment-upper1}\\
        \E[(c-N)_+^a]&\le B_a s^{a/2}(1+u)^a.
\end{align}
Moreover, if $x_s(c)\le0$, then
\begin{equation}\label{eq:binom-moment-lower1}
        \E\!\big[(N-c+1)_+^a\,
        \1_{\{1\le N-c+1\le B_a(1+u)\sqrt{s}\}}\big]
        \ge b_a s^{a/2}(1+u)^a,
\end{equation}
and if $x_s(c)\ge0$, then
\begin{equation}\label{eq:binom-moment-lower2}
        \E\!\big[(c-N)_+^a\,
        \1_{\{1\le c-N\le B_a(1+u)\sqrt{s}\}}\big]
        \ge b_a s^{a/2}(1+u)^a.
\end{equation}
\end{lemma}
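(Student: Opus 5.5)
The three parts of Lemma~\ref{lem:binomial-moment-common} all rest on the normalized statistic $Y:=(N-\mu_s)/(\sigma\sqrt m)$ and on the shifts \eqref{eq:binom-center-shift-1}--\eqref{eq:binom-center-shift-2}. Write $N-c+1=\sigma\sqrt m\,(Y-x')$ with $x'=x_s(c)+O((1+u)/s+s^{-1/2})$. The upper bounds follow at once: $(N-c+1)_+\le |N-\mu_s|+|c-1-\mu_s|\le |N-\mu_s|+\sigma\sqrt{s}\,u+O(1)$. Using $(y_1+y_2)^a\le 2^{(a-1)_+}(y_1^a+y_2^a)$ and the central moment bound \eqref{eq:binom-central-moment} of Lemma~\ref{lem:bernstein-binomial}, we get $\E(N-c+1)_+^a\le C(s^{a/2}+s^{a/2}u^a)\le B_a s^{a/2}(1+u)^a$. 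The bound for $(c-N)_+^a$ is identical. This part needs no restriction on $u$.

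For the truncated lower bound \eqref{eq:binom-moment-lower1}, take $x_s(c)\le0$, so that $c-1-\mu_s\le -\sigma\sqrt m\,u+O(1)$. We exhibit an event of constant probability on which the rank is of the right size. Let $E:=\{N-\mu_s\in[\sigma\sqrt m,\,2\sigma\sqrt m]\}$. By Berry--Esseen \eqref{eq:BE-binomial}, $\Prob(E)\ge \Phi^N(2)-\Phi^N(1)-C_{\rm BE}/\sqrt m\ge p_0>0$ for large $s$. On $E$ we have
\[
N-c+1=(N-\mu_s)-(c-1-\mu_s)\ge \sigma\sqrt m+\sigma\sqrt m\,u-O(1)\ge \tfrac{\sigma}{2}(1+u)\sqrt s,
\]
and also $N-c+1\le 2\sigma\sqrt m+\sigma\sqrt m\,u+O(1)\le 3\sigma(1+u)\sqrt s$. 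The upper estimate uses $c-1-\mu_s\ge-\sigma\sqrt m\,u-O(1)$, which holds because $u=|x_s(c)|$. After enlarging $B_a$ to be at least $3\sigma$, the indicator in \eqref{eq:binom-moment-lower1} equals one on $E$, and in particular $N-c+1\ge1$ there. Hence the truncated moment is at least $p_0(\sigma/2)^a s^{a/2}(1+u)^a$, which gives $b_a$. The case $x_s(c)\ge0$ in \eqref{eq:binom-moment-lower2} is symmetric: use $\{\mu_s-N\in[\sigma\sqrt m,2\sigma\sqrt m]\}$ and \eqref{eq:binom-center-shift-1}.

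The only subtle point is that the constant $B_a$ appears in both the upper bound and the truncation window. I resolve this by taking $B_a$ to be the maximum of the constant from the moment bound and $3\sigma$, which is harmless since both are fixed. The hypothesis $u\le A\sqrt{\log s}+L$ enters only to keep the $O((1+u)/s)$ centering errors $o(\sqrt s)$, so that they are absorbed into the factor $\tfrac12$ above. I expect no real obstacle. The typical side $x_s(c)\le0$ is chosen precisely so that no moderate-deviation tail estimate such as Lemma~\ref{lem:binom-tail-common} is needed.
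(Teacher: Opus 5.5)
Your proof is correct. The upper bound is the paper's argument: the triangle inequality, $(r+t)^a\le C_a(r^a+t^a)$, and the central moment bound \eqref{eq:binom-central-moment}.

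For the truncated lower bound your route is more streamlined than the paper's. The paper splits at a fixed $u_0$:
\begin{itemize}
\item For $u\ge u_0$ it uses Chebyshev on the symmetric window $\{|N-\mu_s|\le\frac12\sigma\sqrt{s}\,u\}$, on which $N-c+1\asymp\sqrt{s}\,u$.
\item For $u\le u_0$ it uses a fixed Berry--Esseen window, $(N-\mu_s)/(\sigma\sqrt{s})\in[2,3]$. There it discards the deterministic shift $\sigma\sqrt{s}\,u$. It therefore needs $u$ bounded, both to compare $\sqrt{s}$ with $(1+u)\sqrt{s}$ and to cap the rank.
\end{itemize}
You instead keep the exact shift $\mu_s-c+1=\sigma\sqrt{s}\,u+(1-q)$ on a single fixed window. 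This gives $\frac{\sigma}{2}(1+u)\sqrt{s}\le N-c+1\le 3\sigma(1+u)\sqrt{s}$ for all $u\ge0$ at once. So no case split is needed, and the constants are explicit: $b_a=p_0(\sigma/2)^a$, and $B_a$ at least $3\sigma$. The paper's split only buys that the large-$u$ regime needs nothing beyond a second moment. Since Berry--Esseen is required for bounded $u$ anyway, your single-event argument is simpler.

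Two small remarks.
\begin{itemize}
\item An interval probability obtained from \eqref{eq:BE-binomial} incurs error $2C_{\rm BE}/\sqrt m$, not $C_{\rm BE}/\sqrt m$. This is harmless.
\item The hypothesis $u\le A\sqrt{\log s}+L$ is not really used in your lower bound. Write the upper estimate directly as $N-c+1\le 2\sigma\sqrt m+\sigma\sqrt s\,u+(1-q)$, rather than passing through $\sigma\sqrt m\,u+O(1)$. Then the $\sqrt m$ versus $\sqrt s$ comparison disappears. Your argument, like the paper's, then proves the lemma without the moderate-deviation restriction.
\end{itemize}
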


\begin{proof}
We prove \eqref{eq:binom-moment-upper1} and \eqref{eq:binom-moment-lower1}; the estimates involving $(c-N)_+$ follow by the same argument applied to $\mu_s-N$.

For the upper bound, use
\[
        (N-c+1)_+
        \le |N-\mu_s|+|\mu_s-c+1|.
\]
In the stated range,
\[
        |\mu_s-c+1|
        \le C(1+u)\sqrt{s}.
\]
Therefore, by $(r+t)^a\le C_a(r^a+t^a)$ and Lemma~\ref{lem:bernstein-binomial},
\[
        \E[(N-c+1)_+^a]
        \le C_a\E|N-\mu_s|^a+C_a(1+u)^a s^{a/2}
        \le B_a s^{a/2}(1+u)^a.
\]

Now assume $x_s(c)\le0$.  Then
\begin{equation}\label{eq:mean-excess-capacity}
        \mu_s-c+1
        =\sigma\sqrt{s}\,u+(1-q).
\end{equation}
We split into two cases.

First suppose $u\ge u_0$, where $u_0$ is a fixed constant chosen large enough below.  Let
\[
        E_u:=\left\{|N-\mu_s|\le \frac12\sigma\sqrt{s}\,u\right\}.
\]
Chebyshev's inequality gives
\[
        \Prob(E_u^c)
        \le \frac{4\Var(N)}{\sigma^2 s u^2}
        \le \frac{4}{u^2}.
\]
Choose $u_0$ so large that this probability is at most $1/2$.  On $E_u$, \eqref{eq:mean-excess-capacity} implies
\[
        N-c+1\ge \frac12\sigma\sqrt{s}\,u
\]
for all sufficiently large $s$, and also
\[
        N-c+1\le C\sqrt{s}\,u\le C(1+u)\sqrt{s}.
\]
Thus
\[
        \E\!\big[(N-c+1)_+^a\,
        \1_{\{1\le N-c+1\le C(1+u)\sqrt{s}\}}\big]
        \ge
        \frac12\left(\frac12\sigma\sqrt{s}\,u\right)^a
        \ge c s^{a/2}(1+u)^a,
\]
after reducing $c$ and using $u\ge u_0$.

It remains to consider $0\le u\le u_0$.  Let $I=[2,3]$.  Berry--Esseen gives
\[
        \Prob\left(\frac{N-\mu_s}{\sigma\sqrt{s}}\in I\right)
        \ge c_0>0
\]
for all large $s$, uniformly in $c$ because the interval is fixed and does not depend on $c$.  On this event,
\[
        N-c+1
        =(N-\mu_s)+(\mu_s-c+1)
        \ge 2\sigma\sqrt{s}+\sigma\sqrt{s}\,u+(1-q)
        \ge 2\sigma\sqrt{s}
        \ge c_1\sqrt{s},
\]
where the last inequality holds for all sufficiently large $s$ and uses $u\ge0$.  Moreover, using $u\le u_0$,
\[
        N-c+1\le C_1\sqrt{s}
        \le C_1(1+u)\sqrt{s}.
\]
Therefore the truncated expectation is at least $c_0(c_1\sqrt{s})^a$, which is comparable to $s^{a/2}(1+u)^a$ because $u$ is bounded.  Enlarging $B_a$ and reducing $b_a$ completes the proof of \eqref{eq:binom-moment-lower1}.
\end{proof}

The truncation in \eqref{eq:binom-moment-lower1} and \eqref{eq:binom-moment-lower2} ensures that the relevant conditional rank is at most $O((1+u)\sqrt{s})$, which is $o(s)$ in the moderate-deviation ranges used later.

\section{Edge quantiles and order statistics}\label{app:edge-order}\label{sec:edge-order}

This section introduces the local assumptions at the support edges adjacent to the gap, which we call gap-facing edges, and turns them into estimates for the offline marginal order statistic.  The results are deliberately stated in a reusable one-sided form and are applied at the two gap-facing support edges $a_2$ and $b_1$.

Fix a local mass exponent $\beta\ge0$ and define
\begin{equation}\label{eq:beta-exponents-model}
        \theta:=\frac1{\beta+1},
        \qquad
        \gamma:=\frac{\beta}{2(\beta+1)}=\frac{1-\theta}{2}.
\end{equation}

\begin{assumption}[Gap-facing edge exponent]\label{ass:model-gapped}
There exist constants $0<m_g<M_g<\infty$ and $\eps_g>0$ such that, for every $0<u\le\eps_g$,
\begin{align}
        m_g u^{\beta+1}
        &\le F_+(a_2+u)
        \le M_g u^{\beta+1},
        \label{eq:upper-edge-model}\\
        m_g u^{\beta+1}
        &\le 1-F_-(b_1-u)
        \le M_g u^{\beta+1}.
        \label{eq:lower-edge-model}
\end{align}
\end{assumption}

If densities exist, the gap-facing densities are $\asymp$ distance$^\beta$.  The case $\beta=0$ corresponds to bounded positive densities at the two gap-facing edges, while $\beta>0$ is the vanishing-density case used in Appendix~\ref{app:gapped-positive}.

\begin{lemma}[Gapped convex residual shape]\label{lem:gapped-residual-shape}
Under Assumption~\ref{ass:model-gapped}, there are constants $0<c<C<\infty$ such that, for all sufficiently small $u>0$,
\[
        c u^{\beta+2}
        \le
        \Delta(a_2+u)
        \le
        C u^{\beta+2},
        \qquad
        c u^{\beta+2}
        \le
        \Delta(b_1-u)
        \le
        C u^{\beta+2}.
\]
\end{lemma}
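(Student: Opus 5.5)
The plan is to write $\Delta$ near each gap-facing edge as an iterated integral of the distribution function and then insert the two-sided bounds of Assumption~\ref{ass:model-gapped}. By construction $\Delta=h-h_0$ with $h'(\tau)=F(\tau)-1$ and $h_0'=-q$, so $\Delta'(\tau)=F(\tau)-p$, and $\Delta$ vanishes on the gap $[b_1,a_2]$.

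\emph{Upper edge.} For $\tau\ge a_2$ we have $F(\tau)=p+qF_+(\tau)$, hence $\Delta'(\tau)=qF_+(\tau)$. Since $\Delta(a_2)=0$, integrating gives
\[
        \Delta(a_2+u)=q\int_0^u F_+(a_2+v)\dd v .
\]
For $0<u\le\eps_g$, every $v\in(0,u]$ also satisfies $v\le\eps_g$. Applying \eqref{eq:upper-edge-model} inside the integral gives
\[
        \frac{q\,m_g}{\beta+2}\,u^{\beta+2}\le\Delta(a_2+u)\le\frac{q\,M_g}{\beta+2}\,u^{\beta+2}.
\]

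\emph{Lower edge.} For $\tau\le b_1$ we have $F(\tau)=pF_-(\tau)$, hence $\Delta'(\tau)=-p\bigl(1-F_-(\tau)\bigr)$. Since $\Delta(b_1)=0$,
\[
        \Delta(b_1-u)=\int_{b_1-u}^{b_1}p\bigl(1-F_-(t)\bigr)\dd t
        =p\int_0^u\bigl(1-F_-(b_1-v)\bigr)\dd v .
\]
Inserting \eqref{eq:lower-edge-model} gives the same two-sided bound with $p$ in place of $q$.

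\emph{Constants and the one point to check.} Set $c=\min\{p,q\}m_g/(\beta+2)$ and $C=\max\{p,q\}M_g/(\beta+2)$; both bounds then hold for $0<u\le\eps_g$. The only subtle step is the derivative formula $\Delta'=F-p$, which requires $h$ to be $C^1$. This holds because $F$ is atomless, as already noted in Section~\ref{subsec:framework-recursions}, so the fundamental theorem of calculus applies. No step here is a real obstacle.

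As a check, in the two-uniform case the formula reproduces \eqref{eq:g0-Delta-exact} with $\beta=0$.
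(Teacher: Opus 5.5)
Your proposal is correct and follows the paper's own proof essentially line for line: you compute $\Delta'(a_2+u)=qF_+(a_2+u)$ and $\Delta'(b_1-u)=-p\bigl(1-F_-(b_1-u)\bigr)$, integrate from the gap edge where $\Delta$ vanishes, and insert the edge-mass bounds of Assumption~\ref{ass:model-gapped}. Your explicit constants $\min\{p,q\}m_g/(\beta+2)$ and $\max\{p,q\}M_g/(\beta+2)$ are a fine way to finish; the $C^1$ remark is more than needed, since $h$ is already absolutely continuous as an integral of a bounded function.
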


\begin{proof}
For $u>0$ small,
\[
        \Delta'(a_2+u)=qF_+(a_2+u),
        \qquad
        \Delta'(b_1-u)=-p\bigl(1-F_-(b_1-u)\bigr).
\]
Because $\Delta$ vanishes at both gap boundaries, integration gives
\[
        \Delta(a_2+u)
        =q\int_0^u F_+(a_2+v)\,\dd v,
        \qquad
        \Delta(b_1-u)
        =p\int_0^u \bigl(1-F_-(b_1-v)\bigr)\,\dd v.
\]
The two estimates then follow from \eqref{eq:upper-edge-model}--\eqref{eq:lower-edge-model}.  
\end{proof}

We use the following elementary convention.  A right-edge coordinate is denoted by $x\ge0$ and corresponds to values $e+x$ to the right of an edge $e$.  A left-edge coordinate is also denoted by $x\ge0$ and corresponds to values $e-x$ to the left of an edge $e$.

\subsection{One-sided quantile conversion}

Let $\mathcal A:[0,L]\to[0,1]$ be a continuous strictly increasing distribution function in edge-distance coordinates, with $\mathcal A(0)=0$ and $\mathcal A(L)=1$.  Its generalized inverse is
\[
        Q_{\mathcal A}(u):=\inf\{x\in[0,L]: \mathcal A(x)\ge u\},
        \qquad 0<u\le1.
\]
In applications, $\mathcal A(x)$ is either $F_+(a_2+x)$ or $1-F_-(b_1-x)$.

\begin{lemma}[Edge quantile bounds]\label{lem:edge-quantile-bounds}
Suppose that there are constants $m,M,\rho>0$ such that
\begin{equation}
        m x^{\beta+1}\le \mathcal A(x)\le Mx^{\beta+1},
        \qquad 0\le x\le \rho.
\end{equation}
Then there exist constants $0<c_Q<C_Q<\infty$ and $u_0\in(0,1)$ such that
\begin{equation}\label{eq:edge-quantile-power-local}
        c_Q u^\theta\le Q_{\mathcal A}(u)\le C_Q u^\theta,
        \qquad 0<u\le u_0.
\end{equation}
Moreover, after increasing $C_Q$ if necessary, one has the global upper bound
\begin{equation}\label{eq:edge-quantile-global-upper}
        Q_{\mathcal A}(u)\le C_Q u^\theta,
        \qquad 0<u\le1.
\end{equation}
\end{lemma}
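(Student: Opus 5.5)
The plan is to invert the two-sided power bound on $\mathcal A$ directly, using that $\mathcal A$ is continuous and strictly increasing. Then $Q_{\mathcal A}$ is the genuine inverse on $(0,1]$, so $\mathcal A(Q_{\mathcal A}(u))=u$. The local bound \eqref{eq:edge-quantile-power-local} then follows by monotone comparison, and the global bound \eqref{eq:edge-quantile-global-upper} by a compactness/boundedness argument away from $u=0$.

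\emph{Local bound.} Set $u_0:=\min\{m\rho^{\beta+1},1/2\}\in(0,1)$ and take $0<u\le u_0$.

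For the upper bound, define $x_1:=(u/m)^\theta$. Since $u\le m\rho^{\beta+1}$, we have $x_1\le\rho$, so the lower power bound applies and gives $\mathcal A(x_1)\ge m x_1^{\beta+1}=u$. By the definition of the generalized inverse, $Q_{\mathcal A}(u)\le x_1=m^{-\theta}u^\theta$.

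For the lower bound, let $x:=Q_{\mathcal A}(u)$; we already know $x\le\rho$. By continuity, $\mathcal A(x)=u$. The upper power bound then gives $u\le Mx^{\beta+1}$, hence $x\ge M^{-\theta}u^\theta$.

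Taking $c_Q:=M^{-\theta}$ and $C_Q:=m^{-\theta}$ proves \eqref{eq:edge-quantile-power-local}. The point to check here is that $Q_{\mathcal A}(u)$ lands inside $[0,\rho]$ before the upper power bound is invoked. This is guaranteed by the upper bound step just proved, so the two steps must be done in that order.

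\emph{Global upper bound.} For $u_0<u\le1$ we have the trivial estimate $Q_{\mathcal A}(u)\le L$, while $u^\theta\ge u_0^\theta$. Hence $Q_{\mathcal A}(u)\le (L/u_0^\theta)\,u^\theta$ on this range. Replacing $C_Q$ by $\max\{C_Q,L u_0^{-\theta}\}$ gives \eqref{eq:edge-quantile-global-upper} on all of $(0,1]$.

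There is no real obstacle in this argument. The only care needed is the bookkeeping of the domain of validity $x\le\rho$ and the use of continuity of $\mathcal A$ to get the identity $\mathcal A(Q_{\mathcal A}(u))=u$. This identity is what lets the upper power bound on $\mathcal A$ turn into a lower bound on the quantile.
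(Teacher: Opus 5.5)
Your proof is correct and follows the paper's argument essentially verbatim. It uses the same constants $C_Q=m^{-\theta}$ and $c_Q=M^{-\theta}$, the same inversion of the lower power bound for the upper quantile estimate, and the same $L u_0^{-\theta}$ enlargement for the global bound. The only cosmetic difference is in the lower quantile bound. You use continuity to get $\mathcal A(Q_{\mathcal A}(u))=u$ together with $Q_{\mathcal A}(u)\le\rho$. The paper instead argues directly from the infimum: every $x<(u/M)^\theta\le\rho$ has $\mathcal A(x)<u$, which is why it also includes $M\rho^{\beta+1}$ in $u_0$.
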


\begin{proof}
Choose $u_0:=\min\{m\rho^{\beta+1},M\rho^{\beta+1},1/2\}$.  If $0<u\le u_0$, then both $(u/m)^\theta$ and $(u/M)^\theta$ are at most $\rho$.  The upper bound follows because
\[
        \mathcal A\bigl((u/m)^\theta\bigr)
        \ge m\bigl((u/m)^\theta\bigr)^{\beta+1}=u,
\]
so $Q_{\mathcal A}(u)\le m^{-\theta}u^\theta$.  The lower bound follows because if $x<(u/M)^\theta$, then $\mathcal A(x)\le Mx^{\beta+1}<u$, hence the inverse cannot be smaller than $(u/M)^\theta$.  This gives \eqref{eq:edge-quantile-power-local} with $c_Q=M^{-\theta}$ and $C_Q=m^{-\theta}$.

For $u\ge u_0$, compactness gives $Q_{\mathcal A}(u)\le L\le L u_0^{-\theta}u^\theta$.  Enlarging $C_Q$ to dominate both constants yields \eqref{eq:edge-quantile-global-upper}.
\end{proof}

The next lemma is the main conversion used in all source estimates.  It says that an edge-rank fraction $u$ produces convex residual cost of order at least $u^{1+\theta}$.

\begin{lemma}[Edge quantile-to-residual conversion]\label{lem:edge-conversion}
Let $\mathcal A$ satisfy the assumptions of Lemma~\ref{lem:edge-quantile-bounds}.  Let $\Delta_{\mathcal A}:[0,L]\to[0,\infty)$ be continuous with $\Delta_{\mathcal A}(0)=0$ and $\Delta_{\mathcal A}(x)>0$ for $x>0$.  Suppose that for some $m_\Delta,M_\Delta,\rho_\Delta>0$,
\begin{equation}\label{eq:edge-residual-power-local}
        m_\Delta x^{\beta+2}
        \le \Delta_{\mathcal A}(x)
        \le M_\Delta x^{\beta+2},
        \qquad 0\le x\le \rho_\Delta.
\end{equation}
Then there are constants $0<c_\Delta<C_\Delta<\infty$ such that, for all $0<u\le1$,
\begin{equation}\label{eq:edge-conversion}
        c_\Delta u^{1+\theta}
        \le
        \Delta_{\mathcal A}(Q_{\mathcal A}(u))
        \le
        C_\Delta u^{1+\theta}.
\end{equation}
The same conclusion applies at a left edge after writing $x=e-v$ for the distance from the edge.
\end{lemma}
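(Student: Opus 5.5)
We split according to whether $u$ is small or bounded away from zero, using Lemma~\ref{lem:edge-quantile-bounds} in the first range and compactness in the second. Since the claimed bounds \eqref{eq:edge-conversion} are two-sided and multiplicative, it suffices to get them on $(0,u_1]$ for some fixed $u_1>0$, and separately on $[u_1,1]$.

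\emph{Small $u$.} By \eqref{eq:edge-quantile-power-local}, for $0<u\le u_0$ we have $c_Q u^\theta\le Q_{\mathcal A}(u)\le C_Q u^\theta$. We shrink $u_0$ to $u_1\le u_0$ so that $C_Qu_1^\theta\le\rho_\Delta$. Then $x=Q_{\mathcal A}(u)\in[0,\rho_\Delta]$ and \eqref{eq:edge-residual-power-local} applies, giving
\[
        m_\Delta c_Q^{\beta+2}u^{\theta(\beta+2)}
        \le \Delta_{\mathcal A}(Q_{\mathcal A}(u))
        \le M_\Delta C_Q^{\beta+2}u^{\theta(\beta+2)}.
\]
Since $\theta=1/(\beta+1)$, we have $\theta(\beta+2)=(\beta+2)/(\beta+1)=1+\theta$, which is exactly the exponent in \eqref{eq:edge-conversion}.

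\emph{Bounded-away range $u\in[u_1,1]$.} Since $\mathcal A$ is continuous and strictly increasing with $\mathcal A(0)=0$, the quantile $Q_{\mathcal A}(u)$ is nondecreasing in $u$, and $Q_{\mathcal A}(u)\ge Q_{\mathcal A}(u_1)\ge c_Qu_1^\theta>0$. It is also at most $L$. So $Q_{\mathcal A}(u)$ lies in the compact interval $[c_Qu_1^\theta,L]$. On this interval $\Delta_{\mathcal A}$ is continuous and strictly positive, so it is bounded between positive constants $\underline m$ and $\overline M$. Meanwhile $u^{1+\theta}\in[u_1^{1+\theta},1]$. Taking $c_\Delta\le\underline m$ and $C_\Delta\ge\overline M/u_1^{1+\theta}$ covers this range.

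Finally we take the minimum of the lower constants and the maximum of the upper constants from the two ranges. The left-edge statement follows by applying the same argument to the reflected coordinate $x=e-v$: both $\mathcal A$ and $\Delta_{\mathcal A}$ are defined in distance coordinates, so nothing changes. The only point needing care is the compactness step, where monotonicity of $Q_{\mathcal A}$ is needed to keep it away from $0$; continuity of $Q_{\mathcal A}$ is not needed. The rest is routine.
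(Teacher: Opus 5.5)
Your proof is correct and follows the paper's argument essentially step for step. You choose $u_1$ so that $Q_{\mathcal A}(u)\le\rho_\Delta$ on $(0,u_1]$ and apply the local power bounds with $\theta(\beta+2)=1+\theta$ there; on $[u_1,1]$ you use monotonicity of $Q_{\mathcal A}$ together with positivity and continuity of $\Delta_{\mathcal A}$ on a compact interval bounded away from $0$. The only difference is that you write out the upper-bound constants, which the paper handles with ``the same compactness argument.''
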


\begin{proof}
Let $u_1>0$ be small enough that $Q_{\mathcal A}(u)\le \rho_\Delta$ for every $0<u\le u_1$, which is possible by Lemma~\ref{lem:edge-quantile-bounds}.  For $0<u\le u_1$, Lemma~\ref{lem:edge-quantile-bounds} and \eqref{eq:edge-residual-power-local} give
\[
        \Delta_{\mathcal A}(Q_{\mathcal A}(u))
        \ge m_\Delta (c_Q u^\theta)^{\beta+2}
        =m_\Delta c_Q^{\beta+2} u^{\theta(\beta+2)}.
\]
Since $\theta(\beta+2)=1+\theta$, this is the desired lower bound for small $u$.

For $u\in[u_1,1]$, the function $u\mapsto \Delta_{\mathcal A}(Q_{\mathcal A}(u))$ is positive.  More explicitly, $Q_{\mathcal A}(u)\ge Q_{\mathcal A}(u_1)>0$, and by the assumption $\Delta_{\mathcal A}(x)>0$ for $x>0$ and compactness of $[Q_{\mathcal A}(u_1),L]$,
\[
        m_1:=\inf_{x\in[Q_{\mathcal A}(u_1),L]}\Delta_{\mathcal A}(x)>0.
\]
Because $u^{1+\theta}\le1$, the same lower bound holds on $[u_1,1]$ with constant $m_1$.  Taking the smaller of the two constants proves the lower bound in \eqref{eq:edge-conversion}.

For the upper bound, use the local upper bounds for $u\le u_1$ and boundedness of $\Delta_{\mathcal A}$ for $u\ge u_1$, enlarging the constant if necessary by the same compactness argument.
\end{proof}

\begin{remark}[How the abstract conversion is used]\label{rem:edge-conversion-use}
For the upper gap edge $a_2$, take $\mathcal A(x)=F_+(a_2+x)$ and $\Delta_{\mathcal A}(x)=\Delta(a_2+x)$.  For the lower edge adjacent to the gap, $b_1$, take $\mathcal A(x)=1-F_-(b_1-x)$ and $\Delta_{\mathcal A}(x)=\Delta(b_1-x)$.  Assumption~\ref{ass:model-gapped} and Lemma~\ref{lem:gapped-residual-shape} imply the local power hypotheses.  The positivity condition $\Delta_{\mathcal A}(x)>0$ for $x>0$ follows because $h_0$ is a supporting affine function and $\Delta$ vanishes on the gap but is strictly increasing into either support edge for small positive edge distance.
\end{remark}

\subsection{Uniform order-statistic moments}

The conditional distribution of an edge order statistic is obtained by applying an edge quantile to a uniform order statistic.  We record the moment estimates needed for that conversion.

\begin{lemma}[Gamma-ratio bound]\label{lem:gamma-ratio-bound}
For every fixed $a>0$ there are constants $0<c_a<C_a<\infty$ such that
\begin{equation}
        c_a x^a
        \le
        \frac{\Gamma(x+a)}{\Gamma(x)}
        \le
        C_a x^a,
        \qquad x\ge1.
\end{equation}
\end{lemma}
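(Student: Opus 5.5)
The plan is to treat the ratio $g(x):=\Gamma(x+a)/(\Gamma(x)x^a)$ as a continuous positive function on $[1,\infty)$ and show it has a finite positive limit at infinity. A continuous positive function on $[1,\infty)$ with a positive finite limit is bounded above and away from zero, and this gives the constants $c_a$ and $C_a$.

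Continuity and positivity of $g$ on $[1,\infty)$ are immediate. The Gamma function is continuous and strictly positive on $(0,\infty)$, and both $x$ and $x+a$ stay in $[1,\infty)$. So the only real step is the behavior as $x\to\infty$.

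For the limit I would use Stirling's formula, $\log\Gamma(y)=(y-\tfrac12)\log y-y+\tfrac12\log(2\pi)+O(1/y)$ as $y\to\infty$. Apply it at $y=x+a$ and at $y=x$ and subtract to get
\[
\log\frac{\Gamma(x+a)}{\Gamma(x)}=(x+a-\tfrac12)\log(x+a)-(x-\tfrac12)\log x-a+O(1/x).
\]
Write $\log(x+a)=\log x+\log(1+a/x)$. Then $(x+a-\tfrac12)\log(1+a/x)=a+O(1/x)$, so the right-hand side equals $a\log x+O(1/x)$. Hence $g(x)\to1$.

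Compactness on $[1,X]$ for a suitable large $X$, together with $g\in[\tfrac12,2]$ on $[X,\infty)$, then gives the two-sided bound.

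If one prefers to avoid Stirling, there is an alternative route using log-convexity of $\Gamma$. Reduce to $a\in(0,1]$ by the recursion $\Gamma(x+1)=x\Gamma(x)$: write $a=n+b$ with $n=\lceil a\rceil-1$ and $b\in(0,1]$, and note that $\prod_{j<n}(x+b+j)\asymp x^n$ on $x\ge1$. For $0<a\le1$, Gautschi/Wendel's inequality $x^{1-a}\le\Gamma(x+1)/\Gamma(x+a)\le(x+1)^{1-a}$ follows from Hölder (log-convexity). Combined with $\Gamma(x+1)=x\Gamma(x)$ it yields $x^a(1+1/x)^{a-1}\le\Gamma(x+a)/\Gamma(x)\le x^a$. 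The only mildly delicate point in either route is the uniformity near $x=1$, and compactness handles it.
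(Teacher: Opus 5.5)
Your proof is correct and is essentially the paper's argument: the ratio $\Gamma(x+a)/(\Gamma(x)x^a)$ is continuous and positive on $[1,\infty)$ and tends to $1$ by Stirling's formula, so it is bounded above and away from zero there. Your optional Gautschi--Wendel route is also valid and has the small bonus of giving explicit constants, e.g.\ $2^{a-1}x^a\le\Gamma(x+a)/\Gamma(x)\le x^a$ for $0<a\le1$ and $x\ge1$.
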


\begin{proof}
The function
\[
        R_a(x):=\frac{\Gamma(x+a)}{\Gamma(x)x^a}
\]
is continuous and positive on $[1,\infty)$.  By Stirling's formula, $R_a(x)\to1$ as $x\to\infty$.  Therefore $R_a$ is bounded above and below by positive finite constants on $[1,\infty)$.
\end{proof}

\begin{lemma}[Uniform order-statistic moment]\label{lem:u-order-common}
Let $U_{j:n}$ be the $j$-th smallest order statistic among $n$ independent uniform $[0,1]$ random variables.  For every fixed $a>0$ there are constants $0<c_a<C_a<\infty$ such that, uniformly over $1\le j\le n$,
\begin{equation}\label{eq:u-order-moment-two-sided}
        c_a\left(\frac{j}{n+1}\right)^a
        \le
        \E[U_{j:n}^a]
        \le
        C_a\left(\frac{j}{n+1}\right)^a.
\end{equation}
\end{lemma}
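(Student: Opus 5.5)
The plan is to compute the moment exactly and then reduce everything to Lemma~\ref{lem:gamma-ratio-bound}. Since $U_{j:n}\sim\mathrm{Beta}(j,n-j+1)$, we have the exact formula
\[
        \E[U_{j:n}^a]
        =\frac{B(j+a,n-j+1)}{B(j,n-j+1)}
        =\frac{\Gamma(j+a)}{\Gamma(j)}\cdot\frac{\Gamma(n+1)}{\Gamma(n+1+a)}.
\]
This identity is a one-line Beta-integral computation. The point is that the moment factors into two Gamma ratios, each depending on only one of the two indices.

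Next, apply Lemma~\ref{lem:gamma-ratio-bound} with $x=j\ge1$ to get
\[
        c_a j^a\le \frac{\Gamma(j+a)}{\Gamma(j)}\le C_a j^a .
\]
Apply it again with $x=n+1\ge1$ to get
\[
        c_a (n+1)^a\le \frac{\Gamma(n+1+a)}{\Gamma(n+1)}\le C_a (n+1)^a .
\]
Take reciprocals in the second bound and multiply it by the first. This yields
\[
        \frac{c_a}{C_a}\left(\frac{j}{n+1}\right)^a
        \le \E[U_{j:n}^a]
        \le \frac{C_a}{c_a}\left(\frac{j}{n+1}\right)^a .
\]
The constants depend only on $a$, so the bound is uniform over $1\le j\le n$, as claimed.

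There is no real obstacle. The only point needing care is that both Gamma arguments satisfy $x\ge1$, which is guaranteed by $j\ge1$ and $n+1\ge2$. If one preferred to avoid the Beta formula, a fallback would be to write $U_{j:n}\overset{d}{=}\Gamma_j/\Gamma_{n+1}$ with independent exponential partial sums. However, the direct Beta computation is cleaner and is the route I would take.
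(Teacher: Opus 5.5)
Your proposal is correct and matches the paper's proof. Like the paper, you use the exact Beta-moment formula to factor $\E[U_{j:n}^a]$ into $\Gamma(j+a)/\Gamma(j)$ times $\Gamma(n+1)/\Gamma(n+1+a)$, then apply Lemma~\ref{lem:gamma-ratio-bound} at $x=j$ and $x=n+1$ to get the two-sided bound with constants depending only on $a$.
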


\begin{proof}
The density of $U_{j:n}$ is beta with parameters $(j,n-j+1)$.  Hence
\begin{equation}
        \E[U_{j:n}^a]
        =\frac{\Gamma(j+a)}{\Gamma(j)}
         \frac{\Gamma(n+1)}{\Gamma(n+1+a)}.
\end{equation}
Applying Lemma~\ref{lem:gamma-ratio-bound} first at $x=j$ and then at $x=n+1$ gives
\[
        \E[U_{j:n}^a]
        \asymp_a
        \frac{j^a}{(n+1)^a},
\]
which is \eqref{eq:u-order-moment-two-sided}.
\end{proof}

\begin{corollary}[Convex residual moment of an edge order statistic]\label{cor:edge-order-residual}
Under the hypotheses of Lemma~\ref{lem:edge-conversion}, if $U_{j:n}$ is the $j$-th smallest uniform order statistic, then
\begin{equation}
        \E\bigl[\Delta_{\mathcal A}(Q_{\mathcal A}(U_{j:n}))\bigr]
        \ge
        c\left(\frac{j}{n+1}\right)^{1+\theta},
        \qquad 1\le j\le n,
\end{equation}
where $c>0$ depends only on the edge constants and on $\theta$. The upper bound follows from the upper bound in \eqref{eq:edge-conversion}.
\end{corollary}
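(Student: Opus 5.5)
The plan is to condition on the uniform order statistic and push the lower bound of Lemma~\ref{lem:edge-conversion} through the expectation. Since $\mathcal A$ is continuous and strictly increasing on $[0,L]$, the quantile transform gives a representation of the $j$-th smallest draw from $\mathcal A$ as $Q_{\mathcal A}(U_{j:n})$. Lemma~\ref{lem:edge-conversion} then gives the pointwise bound
\[
        \Delta_{\mathcal A}(Q_{\mathcal A}(U_{j:n}))\ge c_\Delta U_{j:n}^{1+\theta}.
\]
This holds almost surely, because $U_{j:n}\in(0,1]$ almost surely. On the null event $U_{j:n}=0$ both sides are $0$, using $Q_{\mathcal A}(0)=0$ and $\Delta_{\mathcal A}(0)=0$, so no care is needed there.

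Taking expectations, I apply Lemma~\ref{lem:u-order-common} with the fixed exponent $a=1+\theta>0$. This yields
\[
        \E\bigl[\Delta_{\mathcal A}(Q_{\mathcal A}(U_{j:n}))\bigr]\ge c_\Delta\,\E[U_{j:n}^{1+\theta}]\ge c_\Delta c_{1+\theta}\Bigl(\frac{j}{n+1}\Bigr)^{1+\theta},
\]
uniformly in $1\le j\le n$. So the constant is $c=c_\Delta c_{1+\theta}$. It depends only on the edge constants (through $c_\Delta$) and on $\theta$ (through the Gamma-ratio constants in Lemma~\ref{lem:gamma-ratio-bound}).

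The upper bound is the same argument run in the other direction. I use $\Delta_{\mathcal A}(Q_{\mathcal A}(u))\le C_\Delta u^{1+\theta}$ from \eqref{eq:edge-conversion}, and then the upper bound $C_{1+\theta}$ in \eqref{eq:u-order-moment-two-sided}.

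There is no real obstacle here. The only point to check is that \eqref{eq:edge-conversion} holds on all of $(0,1]$ and not just near $0$. Lemma~\ref{lem:edge-conversion} already provides this through its compactness step, which is what allows $j/(n+1)$ to range over all ranks and not only edge ranks.
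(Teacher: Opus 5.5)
Your proposal is correct and follows the paper's own proof: you apply the pointwise lower bound $\Delta_{\mathcal A}(Q_{\mathcal A}(u))\ge c_\Delta u^{1+\theta}$ from Lemma~\ref{lem:edge-conversion} and take expectations. You then use Lemma~\ref{lem:u-order-common} with $a=1+\theta$, and the upper bound follows the same way.
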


\begin{proof}
By Lemma~\ref{lem:edge-conversion},
\[
        \Delta_{\mathcal A}(Q_{\mathcal A}(U_{j:n}))\ge c_\Delta U_{j:n}^{1+\theta}.
\]
Taking expectations and applying Lemma~\ref{lem:u-order-common} with $a=1+\theta$ proves the claim.  The upper estimate is identical.
\end{proof}

\subsection{Source estimates at a support gap}

We now combine the edge conversion with the binomial moment estimates from Section~\ref{sec:binomial}.  These estimates control the positive Jensen/order-statistic term for the gapped $\beta>0$ certificate.

Assume the two-support model of Assumption~\ref{ass:model-gapped}.  Let $N\sim\mathrm{Bin}(s-1,q)$ be the number of future arrivals in the upper support.  Conditional on $N=n$, the upper-support observations are independent draws from $F_+$ and the lower-support observations are independent draws from $F_-$.  Since every upper-support value exceeds every lower-support value, the offline marginal $H_{s,c}$ has the following conditional descriptions:
\begin{itemize}[leftmargin=2em]
\item If $n\ge c$, then $H_{s,c}$ lies in the upper support and is the upper-support order statistic of rank $n-c+1$.
\item If $n<c$, then $H_{s,c}$ lies in the lower support and is the lower-support order statistic of upper-tail rank $c-n$, equivalently the distance from the lower right edge $b_1$ has rank $c-n$ from the left.
\end{itemize}

\begin{lemma}[Gapped source estimate]\label{lem:gapped-random-rank-source}
Assume Assumption~\ref{ass:model-gapped}.  Fix constants $A,L<\infty$ with $A<1$.  There exist constants $0<c<C<\infty$ and $s_0<\infty$ such that, for all $s\ge s_0$ and all integers $c$ satisfying
\[
        u=|x_s(c)|\le A\sqrt{\log s}+L,
\]
one has
\begin{equation}\label{eq:gapped-source-bounds}
        c s^{\gamma-1}(1+u)^{1+\theta} \le \E\Delta(H_{s,c}) \le
        C s^{\gamma-1}(1+u)^{1+\theta}
        + C\,\Prob\bigl(|N-\mu_s|>C\sqrt{s\log s}\bigr).
\end{equation}
The constant inside the probability can be chosen large enough to make that probability $O(s^{-r})$ for any fixed $r<\infty$.
\end{lemma}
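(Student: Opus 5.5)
The plan is to condition on $N$, the number of future upper-support arrivals, and use the conditional descriptions of $H_{s,c}$ recorded just before the statement. Since $\Delta$ vanishes on the gap and, near the two gap-facing edges, equals $\Delta_{\mathcal A}$ in edge-distance coordinates, we write
\[
\E\Delta(H_{s,c})=\E\bigl[\Delta(H_{s,c})\1_{\{N\ge c\}}\bigr]+\E\bigl[\Delta(H_{s,c})\1_{\{N<c\}}\bigr].
\]
On $\{N=n\ge c\}$, $H_{s,c}-a_2$ is distributed as $Q_{\mathcal A}(U_{n-c+1:n})$ with $\mathcal A(x)=F_+(a_2+x)$. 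Corollary~\ref{cor:edge-order-residual} then gives conditional residual expectations of order $((n-c+1)/(n+1))^{1+\theta}$, with matching upper and lower constants. On $\{N=n<c\}$ the same holds with rank $c-n$ among the $s-1-n$ lower observations, using $\mathcal A(x)=1-F_-(b_1-x)$ and Remark~\ref{rem:edge-conversion-use}.

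\textbf{Lower bound.} By symmetry, assume $x_s(c)\le0$ and keep only the event $\{N\ge c\}$. Restrict to the truncation event of Lemma~\ref{lem:binomial-moment-common},
\[
1\le N-c+1\le B_a(1+u)\sqrt s .
\]
Intersect it with $\{N\ge qs/2\}$, whose complement has exponentially small probability by Lemma~\ref{lem:bernstein-binomial}. On this intersection, $n+1\asymp s$, so the conditional bound is at least $c\,s^{-(1+\theta)}(N-c+1)^{1+\theta}$. Taking expectations and applying \eqref{eq:binom-moment-lower1} with $a=1+\theta$ gives
\[
c\,s^{-(1+\theta)}s^{(1+\theta)/2}(1+u)^{1+\theta}=c\,s^{(\theta-1)/2-\theta}\cdots .
\]
Since $\gamma-1=-(1+\theta)/2$, this is exactly $s^{\gamma-1}(1+u)^{1+\theta}$. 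The subtraction of the exponentially small probability must be absorbed. This is fine because the main term is at least polynomially large: $s^{\gamma-1}$ dominates any $e^{-cs}$. The case $x_s(c)\ge0$ is symmetric, using the lower edge and \eqref{eq:binom-moment-lower2}.

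\textbf{Upper bound.} Let $E:=\{|N-\mu_s|\le C\sqrt{s\log s}\}$. On $E$ we have $N\asymp s$ and $s-1-N\asymp s$. The upper half of Corollary~\ref{cor:edge-order-residual} bounds the conditional expectation by
\[
C s^{-(1+\theta)}\bigl[(N-c+1)_+^{1+\theta}+(c-N)_+^{1+\theta}\bigr].
\]
Taking expectations and using \eqref{eq:binom-moment-upper1} gives $C s^{\gamma-1}(1+u)^{1+\theta}$. On $E^c$ we bound $\Delta$ by its maximum over the compact support, which produces the term $C\,\Prob(E^c)$. Corollary~\ref{cor:binom-sqrtlog} shows this is $O(s^{-r})$ once $C$ is enlarged.

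\textbf{Main obstacle.} The delicate step is the ratio $j/(n+1)$ in the lower bound. I need $n+1\lesssim s$ simultaneously with the rank-truncation event, which is why the trivial bound $n\le s$ is used in the lower bound instead of concentration. The remaining care is in the upper bound, where the denominator must be bounded below. If $n$ is small, the rank ratio is capped by $1$ and $\Delta$ is bounded, so such events are absorbed into $E^c$. The hypothesis $A<1$ is not essential here beyond the range conditions of Lemma~\ref{lem:binomial-moment-common}, which hold for any fixed $A$.
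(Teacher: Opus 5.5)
Your proposal is correct and follows essentially the same route as the paper's proof. You condition on $N$, convert the edge order statistic through Corollary~\ref{cor:edge-order-residual}, and combine the trivial bound $n+1\le s$ with the truncated moment \eqref{eq:binom-moment-lower1} (or \eqref{eq:binom-moment-lower2}) for the lower bound. For the upper bound you use the concentration event together with \eqref{eq:binom-moment-upper1} and absorb the complement into the probability term. The extra intersection with $\{N\ge qs/2\}$ is harmless but unnecessary: on $\{N\ge c\}$ one already has $N\ge c\ge qs/2$ for large $s$, and as you note yourself, only $n+1\le s$ is needed.
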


\begin{proof}
We first prove the lower bound when $x_s(c)\le0$; the case $x_s(c)\ge0$ is the same with the roles of the two support intervals interchanged.

Assume $x_s(c)\le0$.  Conditional on $N=n\ge c$, set $j=n-c+1$.  Then $H_{s,c}=a_2+Q_+(U_{j:n})$, where $Q_+$ is the lower-edge distance quantile of $F_+$ at $a_2$ and $U_{j:n}$ is the $j$-th smallest uniform order statistic among $n$ points.  Corollary~\ref{cor:edge-order-residual} gives
\begin{equation}
        \E\left[\Delta(H_{s,c})\mid N=n\right]
        \ge
        c_1\left(\frac{n-c+1}{n+1}\right)^{1+\theta},
        \qquad n\ge c.
\end{equation}
Since $n+1\le s$, the right side is at least $c_1((n-c+1)/s)^{1+\theta}$.  Therefore
\[
        \E\left[\Delta(H_{s,c})\1_{\{N\ge c\}}\right]
        \ge
        c_1\E\left[\left(\frac{N-c+1}{s}\right)^{1+\theta}\1_{\{N\ge c\}}\right].
\]
Dividing the truncated lower bound in Lemma~\ref{lem:binomial-moment-common} by $s^{1+\theta}$, with $a=1+\theta$, gives
\[
        \E\left[\Delta(H_{s,c})\1_{\{N\ge c\}}\right]
        \ge
        c_2s^{-(1+\theta)/2}(1+u)^{1+\theta}.
\]
Because $1+\theta=2(1-\gamma)$, $s^{-(1+\theta)/2}=s^{\gamma-1}$.  This gives the lower bound in \eqref{eq:gapped-source-bounds} when $x_s(c)\le0$.

For the upper bound, split according to $N\ge c$ and $N<c$.  On the event
\[
        \mathcal E_C:=\{|N-\mu_s|\le C\sqrt{s\log s}\},
\]
with $C$ fixed large enough, the displayed range for $c$ implies $c=qs+O(\sqrt{s\log s})$.  Hence, on $\mathcal E_C$ and for all large $s$, the relevant conditional sample sizes satisfy $n+1\ge(q/2)s$ on $\{N\ge c\}$ and $s-n\ge(p/2)s$ on $\{N<c\}$.  The corresponding ranks are
\[
        j_+(n)=n-c+1,
        \qquad
        j_-(n)=c-n,
\]
and both are bounded by $C'(1+u+\sqrt{\log s})\sqrt{s}$ on $\mathcal E_C$.  Using the upper half of Lemma~\ref{lem:edge-conversion} and Lemma~\ref{lem:u-order-common} conditionally on $N=n$ gives
\begin{align*}
        \E[\Delta(H_{s,c})\mid N=n]
        &\le C\left(\frac{j_+(n)}{s}\right)^{1+\theta},
        && n\ge c, \\
        \E[\Delta(H_{s,c})\mid N=n]
        &\le C\left(\frac{j_-(n)}{s}\right)^{1+\theta},
        && n<c.
\end{align*}
Taking expectations and applying Lemma~\ref{lem:binomial-moment-common} to $(N-c+1)_+$ and $(c-N)_+$ gives the term $Cs^{-(1+\theta)/2}(1+u)^{1+\theta}=C s^{\gamma-1}(1+u)^{1+\theta}$.  On $\mathcal E_C^c$, the convex residual is bounded because the value support is compact; this gives the probability term in \eqref{eq:gapped-source-bounds}.
\end{proof}

\section{A standard-scale finite-difference expansion}\label{app:fd-standard}\label{sec:fd-standard}
For a candidate certificate
\[
        B_s(c)=\eta a_sG_s(x_s(c)),
\]
recall the affine-transport drift
\[
        M_s(c)=(1-q)B_{s-1}(c)+qB_{s-1}(c-1)-B_s(c).
\]
We use the standard coordinate $x_s(c)=(c-qs)/(\sigma\sqrt{s})$ and allow the profile $G_s$ to depend on $s$.

\subsection{Predecessor-coordinate expansion}
Throughout this subsection, let $\xi$ be the two-point random variable
\[
        \xi=\begin{cases}
        q, & \text{with probability }1-q,\\
        -(1-q), & \text{with probability }q.
        \end{cases}
\]
Then $\E\xi=0$, $\E\xi^2=\sigma^2$, and $|\xi|\le1$.

\begin{lemma}[Predecessor coordinates]\label{lem:predecessor-coordinates}
Fix $R_s\ge1$ with $R_s=O(\sqrt{\log s})$.  Uniformly over states with $|x_s(c)|\le R_s$, writing $x=x_s(c)$, define
\[
        X_s:=\frac{\sigma\sqrt{s}\,x+\xi}{\sigma\sqrt{s-1}}.
\]
Then $X_s=x_{s-1}(c)$ with probability $1-q$ and $X_s=x_{s-1}(c-1)$ with probability $q$.  Moreover,
\begin{equation}\label{eq:Xs-expanded}
        X_s=x+\frac{\xi}{\sigma\sqrt{s}}+\frac{x}{2s}+R_s(x,\xi),
        \qquad
        |R_s(x,\xi)|\le C\frac{1+|x|}{s^{3/2}}.
\end{equation}
\end{lemma}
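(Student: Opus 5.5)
The plan is to verify the identification first and then obtain the expansion from an explicit Taylor expansion of $\sqrt{s/(s-1)}$. For the identification, use $d_s(c)=c-qs=\sigma\sqrt{s}\,x$ together with the predecessor deviations already recorded in the proof of Lemma~\ref{lem:g0-fdiff}: $d_{s-1}(c)=d+q$ and $d_{s-1}(c-1)=d-(1-q)$. Dividing by $\sigma\sqrt{s-1}$ shows that $x_{s-1}(c)=(\sigma\sqrt{s}\,x+q)/(\sigma\sqrt{s-1})$ and $x_{s-1}(c-1)=(\sigma\sqrt{s}\,x-(1-q))/(\sigma\sqrt{s-1})$. These are exactly the two values of $X_s$, taken with the probabilities $1-q$ and $q$ of the two atoms of $\xi$. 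This step is pure bookkeeping.

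For the expansion, write
\[
X_s=\sqrt{\tfrac{s}{s-1}}\,x+\frac{\xi}{\sigma\sqrt{s-1}}.
\]
Since $\sqrt{s/(s-1)}=(1-1/s)^{-1/2}=1+\tfrac{1}{2s}+O(s^{-2})$, the first term equals $x+\tfrac{x}{2s}+O(|x|/s^2)$. For the second term, $(s-1)^{-1/2}=s^{-1/2}(1+\tfrac{1}{2s}+O(s^{-2}))$, so
\[
\frac{\xi}{\sigma\sqrt{s-1}}=\frac{\xi}{\sigma\sqrt{s}}+O\!\left(\frac{|\xi|}{\sigma s^{3/2}}\right),
\]
and $|\xi|\le1$. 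The remainder is therefore
\[
R_s=O\!\left(\frac{|x|}{s^2}+\frac{1}{s^{3/2}}\right)\le C\,\frac{1+|x|}{s^{3/2}},
\]
and this bound holds for all $x$, uniformly in $s\ge2$. The constants come from the explicit Taylor remainder of $(1-t)^{-1/2}$ on $t\in[0,1/2]$, and $\sigma$ is fixed.

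No step is a real obstacle. The only points to check are that the Taylor remainder constants are uniform for $s\ge2$, which follows because $1/s\le1/2$, and that the hypothesis $|x|\le R_s=O(\sqrt{\log s})$ is not actually needed for this bound. That hypothesis matters only later, for instance to ensure $R_s=o(1/(s\sqrt{\log s}))$ when the expansion is inserted into the Taylor increments of $G_s$.
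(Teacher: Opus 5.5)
Your proposal is correct and follows essentially the same route as the paper: it computes $x_{s-1}(c)$ and $x_{s-1}(c-1)$ explicitly, then substitutes $\sqrt{s/(s-1)}=1+\tfrac{1}{2s}+O(s^{-2})$ and $(s-1)^{-1/2}=s^{-1/2}(1+O(s^{-1}))$. You also correctly note that the remainder bound holds for all $x$, so the restriction $|x|\le R_s$ is not needed for this lemma itself.
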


\begin{proof}
By definition,
\[
        x_{s-1}(c)=\frac{c-q(s-1)}{\sigma\sqrt{s-1}}
        =\frac{\sigma\sqrt{s}\,x+q}{\sigma\sqrt{s-1}},
\]
and similarly
\[
        x_{s-1}(c-1)=\frac{\sigma\sqrt{s}\,x-(1-q)}{\sigma\sqrt{s-1}}.
\]
Since
\[
        \sqrt{\frac{s}{s-1}}=1+\frac1{2s}+O(s^{-2}),
        \qquad
        \frac1{\sqrt{s-1}}=\frac1{\sqrt{s}}\left(1+O(s^{-1})\right),
\]
Substitution gives the stated representation of $X_s$ and the error bound.  The identities for the two predecessor coordinates follow from taking $\xi=q$ and $\xi=-(1-q)$, respectively.  
\end{proof}

\subsection{Taylor expansion of the affine transport}

For an interval $I\subset\mathbb R$ write
\[
        \|f\|_{j,I}:=\sup_{y\in I}|f^{(j)}(y)|.
\]
For a state $(s,c)$ and a local radius $\rho_{\rm loc}>0$, define
\[
        I_{s,c}(\rho_{\rm loc}):=[x_s(c)-\rho_{\rm loc},x_s(c)+\rho_{\rm loc}].
\]

\begin{lemma}[Standard-scale Bellman drift expansion]\label{lem:standard-drift}
Let
\begin{equation}
        B_s(c)=\eta a_sG_s(x_s(c))
\end{equation}
for positive numbers $a_s$ and functions $G_s\in C^3(\mathbb R)$.  Fix $R_s=O(\sqrt{\log s})$ and suppose $|x_s(c)|\le R_s$.  Let $x=x_s(c)$ and let $I=I_{s,c}(1)$.  Then, for all sufficiently large $s$,
\begin{align}
        M_s(c)
        &=(1-q)B_{s-1}(c)+qB_{s-1}(c-1)-B_s(c)\notag\\
        &=\eta a_s\left[\frac{G_s''(x)}{2s}+\frac{xG_s'(x)}{2s}\right]
          +\eta(a_{s-1}-a_s)G_s(x)
          +\eta a_s\bigl(G_{s-1}(x)-G_s(x)\bigr)
          +\mathrm{Err}_{s,c},\label{eq:standard-drift-expansion}
\end{align}
where the error satisfies
\begin{align}
        |\mathrm{Err}_{s,c}|
        \le C\eta a_s\Bigg[&
        \frac{1+x^2}{s^2}\|G_s''\|_{0,I}
        +\frac1{s^{3/2}}\|G_s'''\|_{0,I}
        +\frac{1+|x|}{s^{3/2}}\|G_s'\|_{0,I}
        \notag\\
        &+\frac1s\|G_{s-1}''-G_s''\|_{0,I}
        +\frac{1+|x|}{s}\|G_{s-1}'-G_s'\|_{0,I}
        \notag\\
        &+\left|\frac{a_{s-1}}{a_s}-1\right|
        \|G_{s-1}-G_s\|_{0,I}
        +\left|\frac{a_{s-1}}{a_s}-1\right|
        \frac{|x|}{s}\|G_s'\|_{0,I}
        +\left|\frac{a_{s-1}}{a_s}-1\right|
        \frac1s\|G_s''\|_{0,I}
        \Bigg].\label{eq:standard-drift-error}
\end{align}
The constant $C$ depends only on $q$ and on the implicit constant in $R_s=O(\sqrt{\log s})$.
\end{lemma}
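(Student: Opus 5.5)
The plan is to split the drift into a transport part and a time-change part, Taylor expand the transport part in the random predecessor coordinate, and then track every remainder. By Lemma~\ref{lem:predecessor-coordinates}, $(1-q)B_{s-1}(c)+qB_{s-1}(c-1)=\eta a_{s-1}\E G_{s-1}(X_s)$. This gives the exact decomposition
\begin{align*}
M_s(c)&=\eta a_{s-1}\E\bigl[G_{s-1}(X_s)-G_{s-1}(x)\bigr]+\eta(a_{s-1}-a_s)G_s(x)\\
&\quad+\eta a_s\bigl(G_{s-1}(x)-G_s(x)\bigr)+\eta(a_{s-1}-a_s)\bigl(G_{s-1}(x)-G_s(x)\bigr).
\end{align*}
The second and third terms already appear in \eqref{eq:standard-drift-expansion}. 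The fourth term equals $\eta a_s(a_{s-1}/a_s-1)(G_{s-1}-G_s)(x)$, which is exactly the first $|a_{s-1}/a_s-1|$ entry of \eqref{eq:standard-drift-error}.

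For the transport term, I set $\epsilon:=X_s-x=\xi/(\sigma\sqrt s)+x/(2s)+R_s(x,\xi)$. Since $|x|\le R_s=O(\sqrt{\log s})$, we have $|\epsilon|\le C(1+|x|)/\sqrt s\le1$ for large $s$, so $X_s\in I$. A third-order Taylor expansion of $G_{s-1}$ at $x$ gives
\[
\E G_{s-1}(X_s)-G_{s-1}(x)=G_{s-1}'(x)\E\epsilon+\tfrac12G_{s-1}''(x)\E\epsilon^2+O\bigl(\|G_{s-1}'''\|_{0,I}\E|\epsilon|^3\bigr).
\]
The moments follow from \eqref{eq:Xs-expanded} together with $\E\xi=0$ and $\E\xi^2=\sigma^2$:
\[
\E\epsilon=\frac{x}{2s}+O\Bigl(\frac{1+|x|}{s^{3/2}}\Bigr),\qquad \E\epsilon^2=\frac1s+O\Bigl(\frac{1+x^2}{s^2}+\frac{1+|x|}{s^2}\Bigr),\qquad \E|\epsilon|^3=O(s^{-3/2}).
\]
The cross term $\xi x/s^{3/2}$ has mean zero, and the cube bound holds because $|x|^3/s^3\lesssim s^{-3/2}$ in the range $|x|\lesssim\sqrt{\log s}$. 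Hence the main term is $\frac{x}{2s}G_{s-1}'(x)+\frac1{2s}G_{s-1}''(x)$. The errors are $\frac{1+|x|}{s^{3/2}}\|G'\|$, $\frac{1+x^2}{s^2}\|G''\|$ and $s^{-3/2}\|G'''\|$, all with $G_{s-1}$ in the norms.

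Next I convert $G_{s-1}$ to $G_s$ and $a_{s-1}$ to $a_s$. Replacing $G_{s-1}'(x)$ and $G_{s-1}''(x)$ by $G_s'(x)$ and $G_s''(x)$ costs $\frac{|x|}{s}\|G_{s-1}'-G_s'\|$ and $\frac1s\|G_{s-1}''-G_s''\|$, which are the two time-variation terms in the error bound. Writing $a_{s-1}=a_s+(a_{s-1}-a_s)$ in front of the main bracket produces the two remaining $|a_{s-1}/a_s-1|$ terms, namely $\frac{|x|}{s}\|G_s'\|$ and $\frac1s\|G_s''\|$. Products of this factor with the higher-order Taylor errors are handled by assuming it is bounded.

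The main obstacle is bookkeeping. The Taylor remainder errors must be stated with $G_s$ norms on $I$ rather than $G_{s-1}$ norms. I will bound them by inserting $\|G_{s-1}^{(j)}\|\le\|G_s^{(j)}\|+\|G_{s-1}^{(j)}-G_s^{(j)}\|$. The difference terms then carry extra powers of $s^{-1/2}$, so they are dominated by the difference terms already listed. The third-derivative difference needs care, and I would avoid it by expanding only to second order with an integral remainder: write $G_{s-1}''(x+\theta\epsilon)-G_{s-1}''(x)$, then split it into $G_s''(x+\theta\epsilon)-G_s''(x)$, which is bounded by $|\epsilon|\,\|G_s'''\|$, plus twice $\|G_{s-1}''-G_s''\|_{0,I}$, which is multiplied by $\E\epsilon^2\lesssim1/s$.
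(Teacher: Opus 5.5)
Your proposal is correct and follows essentially the paper's route: you use the predecessor representation $\E G_{s-1}(X_s)$, Taylor-expand the transport with the stated moments of $X_s-x$, and then convert $G_{s-1}$ to $G_s$ and $a_{s-1}$ to $a_s$, with the cross term $(a_{s-1}-a_s)(G_{s-1}-G_s)(x)$ supplying the first $|a_{s-1}/a_s-1|$ entry. The only organizational difference is that the paper splits $G_{s-1}=G_s+(G_{s-1}-G_s)$ before expanding, taking $G_s$ to third order and the difference only to second order, which sidesteps the $\|G_{s-1}'''-G_s'''\|$ issue you handle with the integral remainder; as you note, both arguments also tacitly need $a_{s-1}/a_s$ bounded to absorb $\eta a_{s-1}$ times the Taylor remainders, which holds for $a_s=s^\gamma$.
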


\begin{proof}
Let $\xi$ and $X_s$ be as in Lemma~\ref{lem:predecessor-coordinates}.  Then
\begin{equation}\label{eq:transport-as-expectation}
        (1-q)G_{s-1}(x_{s-1}(c))+qG_{s-1}(x_{s-1}(c-1))
        =\E G_{s-1}(X_s).
\end{equation}
By \eqref{eq:Xs-expanded}, $|X_s-x|\le C(1+R_s)/\sqrt{s}=o(1)$ uniformly on the active range; hence $X_s\in I$ for all sufficiently large $s$.  In the Taylor expansions below, the Lagrange intermediate point lies between $x$ and $X_s$ and therefore also lies in $I$.

First expand $G_s$ around $x$.  Taylor's theorem with remainder gives
\[
        \E G_s(X_s)
        =G_s(x)+G_s'(x)\E(X_s-x)
        +\frac12G_s''(x)\E(X_s-x)^2+R_3,
\]
where
\[
        |R_3|\le C\|G_s'''\|_{0,I}\E|X_s-x|^3.
\]
From Lemma~\ref{lem:predecessor-coordinates},
\begin{align}
        \E(X_s-x)&=\frac{x}{2s}+O\left(\frac{1+|x|}{s^{3/2}}\right),\label{eq:Xs-first-moment-fd}\\
        \E(X_s-x)^2&=\frac1s+O\left(\frac{1+x^2}{s^2}\right),\label{eq:Xs-second-moment-fd}\\
        \E|X_s-x|^3&\le Cs^{-3/2}.
\end{align}
Here the cross terms involving the remainder $R_s(x,\xi)$ are not zero in general, because $R_s$ depends on $\xi$; they are bounded directly by $\E|R_s(x,\xi)|\le C(1+|x|)s^{-3/2}$ and $\E|\xi R_s(x,\xi)|\le C(1+|x|)s^{-3/2}$, which is precisely the size absorbed in \eqref{eq:Xs-first-moment-fd}--\eqref{eq:Xs-second-moment-fd}.
Therefore
\begin{align}
        \E G_s(X_s)
        &=G_s(x)+\frac{xG_s'(x)}{2s}+\frac{G_s''(x)}{2s}+R_s^{(1)},\label{eq:EGs-expansion}
\end{align}
with
\begin{equation}
        |R_s^{(1)}|
        \le C\left[
        \frac{1+|x|}{s^{3/2}}\|G_s'\|_{0,I}
        +\frac{1+x^2}{s^2}\|G_s''\|_{0,I}
        +\frac1{s^{3/2}}\|G_s'''\|_{0,I}
        \right].
\end{equation}

Next compare $G_{s-1}$ with $G_s$.  By Taylor expanding the difference $G_{s-1}-G_s$ to second order and using \eqref{eq:Xs-first-moment-fd}--\eqref{eq:Xs-second-moment-fd},
\begin{align}
        \left|\E\bigl[(G_{s-1}-G_s)(X_s)\bigr]-(G_{s-1}-G_s)(x)\right|
        \le C\left[
        \frac{1+|x|}{s}\|G_{s-1}'-G_s'\|_{0,I}
        +\frac1s\|G_{s-1}''-G_s''\|_{0,I}
        \right].\label{eq:Gsminus-time-difference-bound}
\end{align}
Combining \eqref{eq:transport-as-expectation}, \eqref{eq:EGs-expansion}, and \eqref{eq:Gsminus-time-difference-bound},
\begin{align}
        \E G_{s-1}(X_s)
        &=G_s(x)+\frac{xG_s'(x)}{2s}+\frac{G_s''(x)}{2s}
          +(G_{s-1}(x)-G_s(x))
          +R_s^{(2)},\label{eq:EGsminus-expansion}
\end{align}
where $R_s^{(2)}$ is bounded by the first five terms in \eqref{eq:standard-drift-error} without the factor $\eta a_s$.

Finally,
\[
        M_s(c)=\eta a_{s-1}\E G_{s-1}(X_s)-\eta a_sG_s(x).
\]
Insert \eqref{eq:EGsminus-expansion}, write $a_{s-1}=a_s+(a_{s-1}-a_s)$, and keep the leading contribution of $(a_{s-1}-a_s)G_s(x)$.  The product of $(a_{s-1}-a_s)$ with the time-change term $G_{s-1}(x)-G_s(x)$ is the cross term
\[
        \eta(a_{s-1}-a_s)\bigl(G_{s-1}(x)-G_s(x)\bigr),
\]
which is bounded by the new term
\[
        C\eta a_s\left|\frac{a_{s-1}}{a_s}-1\right|\|G_{s-1}-G_s\|_{0,I}
\]
in \eqref{eq:standard-drift-error}.  The multiplication of $(a_{s-1}/a_s-1)$ against the transport derivative terms gives the final two terms in \eqref{eq:standard-drift-error}.  This proves \eqref{eq:standard-drift-expansion}--\eqref{eq:standard-drift-error}.
\end{proof}

\subsection{Useful corollaries}

The following two consequences are the forms used later in the proof.

\begin{corollary}[Power-height drift]\label{cor:power-height-drift}
Let $a_s=s^\gamma$ with fixed $\gamma\in[0,1/2)$.  Assume that on the active range $|x_s(c)|\le R_s=O(\sqrt{\log s})$ the profile sequence satisfies
\begin{align*}
        |G_s^{(j)}(y)|&\le A_sK_s(1+|y|)^{m_j},\qquad j=0,1,2,3,\\
        |G_{s-1}^{(j)}(y)-G_s^{(j)}(y)|&\le \frac{A_sK_s}{s\log s}(1+|y|)^{m_j},
        \qquad j=0,1,2,
\end{align*}
for deterministic scales $A_s,K_s$ growing at most polynomially in $\log s$.  Then
\begin{equation}\label{eq:power-height-drift-cor}
        M_s(c)
        =\eta s^{\gamma-1}
        \left[\frac12G_s''(x)+\frac12xG_s'(x)-\gamma G_s(x)\right]
        +\mathrm{Rem}_{s,c},
\end{equation}
where $|\mathrm{Rem}_{s,c}|$ is bounded by
\[
        \eta s^\gamma\|G_{s-1}-G_s\|_{0,I}
        +\text{the right side of \eqref{eq:standard-drift-error} with }a_s=s^\gamma,
\]
including the cross term involving $G_{s-1}-G_s$.  In particular, the remainder is lower order whenever the displayed derivative and time-variation bounds are lower order relative to the intended source term.
\end{corollary}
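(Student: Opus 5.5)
The plan is to specialize Lemma~\ref{lem:standard-drift} to $a_s=s^\gamma$ and sort its terms into a leading part and a remainder. Start from \eqref{eq:standard-drift-expansion}. The transport part $\eta a_s[G_s''(x)/(2s)+xG_s'(x)/(2s)]$ becomes $\eta s^{\gamma-1}[\tfrac12G_s''(x)+\tfrac12xG_s'(x)]$ with no change. For the height-change term, expand
\[
        (s-1)^\gamma-s^\gamma=-\gamma s^{\gamma-1}+O(s^{\gamma-2}),
\]
which follows from the mean-value theorem applied to $t\mapsto t^\gamma$. This gives $\eta(a_{s-1}-a_s)G_s(x)=-\gamma\eta s^{\gamma-1}G_s(x)+O(\eta s^{\gamma-2}|G_s(x)|)$, and the $O(\eta s^{\gamma-2}|G_s(x)|)$ piece is put into $\mathrm{Rem}_{s,c}$. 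The three principal terms together are then exactly the bracket in \eqref{eq:power-height-drift-cor}.

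Next, identify what goes into the remainder. The time-variation term $\eta a_s(G_{s-1}(x)-G_s(x))$ is bounded by $\eta s^\gamma\|G_{s-1}-G_s\|_{0,I}$, which is the first term of the stated bound. The error $\mathrm{Err}_{s,c}$ is literally the right side of \eqref{eq:standard-drift-error} with $a_s=s^\gamma$, and it already contains the cross term involving $G_{s-1}-G_s$. Since $|a_{s-1}/a_s-1|\le C/s$, the residual $O(\eta s^{\gamma-2}|G_s(x)|)$ is dominated by the $|a_{s-1}/a_s-1|$-weighted terms after enlarging constants. If one wants that absorption to be literal, it can instead be listed separately as an $O(\eta s^{\gamma-2}A_sK_s(1+|x|)^{m_0})$ piece, which is lower order under the stated profile bounds.

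Finally, justify the "in particular" sentence by plugging the assumed bounds into each remainder term. On $I$ one has $1+|y|\le C(1+|x|)$ with $|x|=O(\sqrt{\log s})$, and this gives:
\begin{itemize}
\item the $G_s''$, $G_s'''$ and $G_s'$ terms are $O(\eta s^{\gamma-3/2}A_sK_s\,\mathrm{polylog}\,s)$;
\item the time-difference terms are $O(\eta s^{\gamma-1}A_sK_s\,\mathrm{polylog}\,s/\log s)$, with the $j=0$ term giving $\eta s^{\gamma-1}A_sK_s(1+|x|)^{m_0}/\log s$;
\item the $|a_{s-1}/a_s-1|$ terms carry an extra factor $1/s$.
\end{itemize}
So the remainder is smaller than the leading scale $\eta s^{\gamma-1}A_sK_s$ by at least a factor $1/\log s$ (up to the polynomial weights in $|x|$), or by a polynomial factor. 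That is what "lower order" means here, relative to a source of size $s^{\gamma-1}$ times the profile scale.

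The main obstacle is bookkeeping rather than anything conceptual. The point to watch is the time-variation term: it is only $1/\log s$ smaller than the leading term, not polynomially smaller, and it carries the weight $(1+|x|)^{m_j}$. Whether it is genuinely lower order therefore depends on the intended source term, which is why the corollary hedges with "whenever". The one thing to check is that Lemma~\ref{lem:standard-drift} applies, which requires $R_s=O(\sqrt{\log s})$ and $G_s\in C^3$; the hypotheses of the corollary provide both.
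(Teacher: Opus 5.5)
Your argument is the paper's argument: substitute $(s-1)^\gamma-s^\gamma=-\gamma s^{\gamma-1}+O(s^{\gamma-2})$ into Lemma~\ref{lem:standard-drift}. Then move into $\mathrm{Rem}_{s,c}$ the time-change term $\eta s^\gamma(G_{s-1}-G_s)(x)$, the $O(\eta s^{\gamma-2}G_s(x))$ piece, and $\mathrm{Err}_{s,c}$. Your bookkeeping for the ``in particular'' clause is also fine.

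One step is wrong, though. You claim the piece $O(\eta s^{\gamma-2}|G_s(x)|)$ is dominated by the $|a_{s-1}/a_s-1|$-weighted terms of \eqref{eq:standard-drift-error}. It is not. Those terms involve only $\|G_{s-1}-G_s\|_{0,I}$, $\|G_s'\|_{0,I}$ and $\|G_s''\|_{0,I}$. No term of \eqref{eq:standard-drift-error} contains $\|G_s\|_{0,I}$ itself.

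A concrete counterexample is $G_s\equiv1$ with $\gamma\in(0,1/2)$. It satisfies all the hypotheses, and every term of the displayed remainder bound vanishes. Yet $M_s(c)=\eta\bigl((s-1)^\gamma-s^\gamma\bigr)$, so
$\mathrm{Rem}_{s,c}=\eta\bigl((s-1)^\gamma-s^\gamma+\gamma s^{\gamma-1}\bigr)=\eta\,\tfrac{\gamma(\gamma-1)}{2}\,s^{\gamma-2}+O(\eta s^{\gamma-3})\neq0$.

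Your fallback is therefore not optional. The remainder bound must carry the extra term $C\eta s^{\gamma-2}|G_s(x)|$; the corollary's displayed bound literally omits it. This is also how the paper actually handles it. Its proof says the $O(s^{\gamma-2})G_s(x)$ part is ``included in $\mathrm{Rem}_{s,c}$,'' and Lemma~\ref{lem:gapped-drift-pub} lists $C\eta s^{\gamma-2}|G_s(x)|$ explicitly as an additional remainder term. With that correction your argument is complete. The extra term is lower order by a factor $1/s$ relative to the leading scale $\eta s^{\gamma-1}A_sK_s$, as you note.
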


\begin{proof}
For $a_s=s^\gamma$,
\[
        a_{s-1}-a_s=-\gamma s^{\gamma-1}+O(s^{\gamma-2}).
\]
Substituting this into Lemma~\ref{lem:standard-drift} gives \eqref{eq:power-height-drift-cor}.  The explicit time-change term $\eta s^\gamma(G_{s-1}(x)-G_s(x))$, the $O(s^{\gamma-2})G_s(x)$ part of $a_{s-1}-a_s$, and the error bound \eqref{eq:standard-drift-error} are all included in $\mathrm{Rem}_{s,c}$.
\end{proof}

\begin{corollary}[First difference on the standard scale]\label{cor:standard-first-difference}
Let $B_s(c)=\eta a_sG_s(x_s(c))$.  Suppose $G_s$ is continuously differentiable on the active range and $|x_s(c)|\le R_s=O(\sqrt{\log s})$.  Then, whenever the line segment between $x_{s-1}(c)$ and $x_{s-1}(c-1)$ lies inside the active range enlarged by one,
\begin{equation}\label{eq:first-difference-standard}
        |D_s(c)|
        =|B_{s-1}(c)-B_{s-1}(c-1)|
        \le
        \frac{C\eta a_{s-1}}{\sqrt{s}}
        \sup_{y\in I_{s,c}(1)}|G_{s-1}'(y)|.
\end{equation}
\end{corollary}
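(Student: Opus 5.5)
The plan is to express $D_s(c)$ directly through the predecessor coordinates and apply the mean-value theorem. By definition,
\[
        D_s(c)=B_{s-1}(c)-B_{s-1}(c-1)
        =\eta a_{s-1}\bigl[G_{s-1}(x_{s-1}(c))-G_{s-1}(x_{s-1}(c-1))\bigr].
\]

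Lemma~\ref{lem:predecessor-coordinates} gives the two predecessor coordinates exactly:
\[
        x_{s-1}(c)=\frac{\sigma\sqrt{s}\,x+q}{\sigma\sqrt{s-1}},
        \qquad
        x_{s-1}(c-1)=\frac{\sigma\sqrt{s}\,x-(1-q)}{\sigma\sqrt{s-1}},
\]
with $x=x_s(c)$. Their difference is
\[
        x_{s-1}(c)-x_{s-1}(c-1)=\frac{1}{\sigma\sqrt{s-1}},
\]
which is at most $C/\sqrt{s}$ for $s\ge2$.

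Since $G_{s-1}$ is $C^1$ on the segment joining the two coordinates, which lies in the active range by hypothesis, the mean-value theorem gives a point $\widetilde y$ on that segment with
\[
        D_s(c)=\eta a_{s-1}\,G_{s-1}'(\widetilde y)\,\frac{1}{\sigma\sqrt{s-1}}.
\]

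It remains to check that $\widetilde y\in I_{s,c}(1)=[x-1,x+1]$; this is the only step that needs any care. By \eqref{eq:Xs-expanded}, both predecessor coordinates differ from $x$ by at most
\[
        \frac{1}{\sigma\sqrt{s}}+\frac{|x|}{2s}+C\frac{1+|x|}{s^{3/2}}.
\]
Because $|x|\le R_s=O(\sqrt{\log s})$, this quantity is $o(1)$ uniformly, so it is below $1$ for all sufficiently large $s$. For the finitely many small $s$, or if one prefers to avoid asymptotics, the constant $C$ can be enlarged, or one can use directly that $|x_{s-1}(\cdot)-x|\le C(1+|x|)/\sqrt{s}$. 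Either way, $\widetilde y\in I_{s,c}(1)$ for all large $s$.

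Bounding $|G_{s-1}'(\widetilde y)|$ by its supremum over $I_{s,c}(1)$ and absorbing $1/\sigma$ and $\sqrt{s/(s-1)}\le\sqrt2$ into $C$ then yields \eqref{eq:first-difference-standard}.
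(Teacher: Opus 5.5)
Your proof is correct and is essentially the paper's argument: the exact identity $x_{s-1}(c)-x_{s-1}(c-1)=1/(\sigma\sqrt{s-1})$, followed by the mean-value theorem and a bound by the supremum of $|G_{s-1}'|$ over $I_{s,c}(1)$. You also check, via \eqref{eq:Xs-expanded} and $|x|\le R_s$, that the intermediate point lies in $I_{s,c}(1)$ for large $s$, a step the paper leaves implicit; note that enlarging $C$ does not handle small $s$, since the issue there is where $\widetilde y$ lies rather than the size of the constant, but the statement is asymptotic anyway.
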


\begin{proof}
The predecessor coordinates differ by
\[
        x_{s-1}(c)-x_{s-1}(c-1)=\frac1{\sigma\sqrt{s-1}}.
\]
The mean-value theorem gives
\[
        |D_s(c)|
        \le
        \eta a_{s-1}
        \frac1{\sigma\sqrt{s-1}}
        \sup_{y\in I_{s,c}(1)}|G_{s-1}'(y)|,
\]
which is \eqref{eq:first-difference-standard} after increasing the constant.
\end{proof}

\section{The gapped vanishing-density lower bound, $\beta>0$}\label{app:gapped-positive}\label{sec:gapped-positive}

This section proves the gapped positive-exponent extension, where the support has a gap and the gap-facing densities vanish with exponent $\beta>0$.  The proof uses the common Bellman certificate inequality from Section~\ref{sec:framework}, the binomial estimates from Appendix~\ref{app:binomial}, the edge-order estimates from Appendix~\ref{app:edge-order}, and the standard-scale finite-difference expansion from Appendix~\ref{app:fd-standard}.  The new ingredient in this section is the form of the certificate.  It has a harmonic core on the $\sqrt{s}$ scale, but the core is allowed to grow out to the moderate-deviation radius $\sqrt{s\log s}$.

Throughout this section Assumption~\ref{ass:model-gapped} holds with $\beta>0$ for the harmonic-certificate construction; the exponents $\theta$ and $\gamma$ are those in \eqref{eq:beta-exponents-model}.  The mean boundary overshoot and gap-margin estimates in Subsection~\ref{subsec:gapped-source-margin} are stated in a form that also covers the case $\beta=0$.  Recall that $\Delta\equiv0$ on the gap $[b_1,a_2]$, that $\Delta\ge0$ is convex, and that near the two gap-facing edges
\begin{equation}
        \Delta(a_2+u)\asymp u^{\beta+2},
        \qquad
        \Delta(b_1-u)\asymp u^{\beta+2}.
\end{equation}

The lower bound is proved by constructing a feasible Bellman certificate supported in the moderate-deviation band
\[
        |c-qs|\lesssim \sqrt{s\log s}.
\]
At sufficiently small fixed shifts on this scale, the harmonic core contributes the height needed for the lower bound in Theorem~\ref{thm:gapped-shift-positive}.

\subsection{The harmonic core profile}\label{subsec:gapped-harmonic-profile}

The drift operator associated with standard-scale Bellman transport is
\begin{equation}
        \calL_\gamma f(x):=\frac12 f''(x)+\frac12 x f'(x)-\gamma f(x).
\end{equation}
The profile used near the critical line must be nearly annihilated by this operator.  Otherwise the negative drift near $x=O(1)$ would be too large to be balanced by the source.
Define
\[
        H_\gamma(x):=\E |x+Z|^{2\gamma},
        \qquad Z\sim N(0,1),\quad x\in\R.
\]
\begin{lemma}[Harmonic core]\label{lem:Hgamma-properties}
The function $H_\gamma$ belongs to $C^\infty(\R)$, is even and strictly positive, and
\begin{equation}\label{eq:Hgamma-ODE}
        \calL_\gamma H_\gamma=0.
\end{equation}
For each integer $j\ge0$ there is $C_j<\infty$ such that
\[
        |H_\gamma^{(j)}(x)|\le C_j(1+|x|)^{2\gamma},
        \qquad x\in\R.
\]
For every integer $j\ge1$,
\begin{equation}\label{eq:Hgamma-derivatives-uniform}
        \sup_{x\in\R}|H_\gamma^{(j)}(x)|<\infty.
\end{equation}
Finally, there are constants $0<c_0<C_0<\infty$ such that, for $|x|\ge1$,
\begin{equation}\label{eq:Hgamma-sharp-growth}
        c_0 |x|^{2\gamma}\le H_\gamma(x)\le C_0 |x|^{2\gamma}.
\end{equation}
\end{lemma}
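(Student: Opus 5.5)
We prove the four claims about $H_\gamma(x)=\E|x+Z|^{2\gamma}$ directly from the Gaussian convolution representation
\[
        H_\gamma(x)=\int_{\R}|y|^{2\gamma}\phi(y-x)\dd y,
        \qquad \phi(t)=(2\pi)^{-1/2}e^{-t^2/2}.
\]
Since $|y|^{2\gamma}$ grows at most polynomially, differentiation under the integral is justified by dominated convergence, with $\phi$ and its derivatives supplying the Gaussian decay. This gives $H_\gamma\in C^\infty$ and
\[
        H_\gamma^{(j)}(x)=\int|y|^{2\gamma}(-1)^j\phi^{(j)}(y-x)\dd y .
\]
Evenness follows from the symmetry $Z\stackrel{d}{=}-Z$, since $|{-x}+Z|\stackrel{d}{=}|x-Z|=|x+Z|$ in law. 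Strict positivity holds because $|x+Z|^{2\gamma}>0$ almost surely.

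For the ODE \eqref{eq:Hgamma-ODE}, note that $u(t,x):=\E|x+\sqrt{t}Z|^{2\gamma}$ solves the heat equation $\partial_t u=\frac12\partial_x^2u$. It is also homogeneous: $u(t,x)=t^{\gamma}H_\gamma(x/\sqrt t)$. Differentiate this identity at $t=1$. The right side gives $\partial_tu=\gamma H_\gamma-\frac12xH_\gamma'$, and the heat equation gives $\partial_t u=\frac12H_\gamma''$. Equating the two yields $\frac12H_\gamma''+\frac12xH_\gamma'-\gamma H_\gamma=0$. If one prefers a self-contained check, the same identity follows from Gaussian integration by parts, but the scaling argument is cleaner. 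The only point needing care is that $|y|^{2\gamma}$ is not smooth at $0$; all derivatives fall on $\phi$, so no regularity of $|y|^{2\gamma}$ is required.

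For the growth bounds, write $|y|^{2\gamma}\le C(|x|^{2\gamma}+|y-x|^{2\gamma})$, using $2\gamma<1$ and subadditivity. Inserting this into the integral formula gives $|H_\gamma^{(j)}(x)|\le C_j(1+|x|)^{2\gamma}$, since $\int(1+|t|^{2\gamma})|\phi^{(j)}(t)|\dd t<\infty$.

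The uniform bound \eqref{eq:Hgamma-derivatives-uniform} for $j\ge1$ is the main step and needs a cancellation. Because $\int\phi^{(j)}=0$ for $j\ge1$, we may subtract $|x|^{2\gamma}$:
\[
        H_\gamma^{(j)}(x)=(-1)^j\int\bigl(|x+t|^{2\gamma}-|x|^{2\gamma}\bigr)\phi^{(j)}(t)\dd t .
\]
The function $y\mapsto|y|^{2\gamma}$ is $2\gamma$-Hölder with $2\gamma\in(0,1)$, so $\bigl||x+t|^{2\gamma}-|x|^{2\gamma}\bigr|\le|t|^{2\gamma}$. This bound is uniform in $x$, and integrability against $|\phi^{(j)}|$ finishes the step. (For $\gamma=0$ the function $H_\gamma$ is constant, and the claim is trivial.)

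For the sharp growth \eqref{eq:Hgamma-sharp-growth}, the upper bound follows from the $j=0$ estimate, since $1+|x|\le 2|x|$ when $|x|\ge1$. For the lower bound, restrict to the event $\{|Z|\le|x|/2\}$, on which $|x+Z|\ge|x|/2$. This event has probability at least $\Prob(|Z|\le1/2)>0$ when $|x|\ge1$, so
\[
        H_\gamma(x)\ge \Prob(|Z|\le 1/2)\,(|x|/2)^{2\gamma}=c_0|x|^{2\gamma}.
\]
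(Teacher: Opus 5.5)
Your proof is correct, and it takes a different route from the paper's at two points.

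\textbf{The ODE.} The paper regularizes the integrand by $f_\eps(y)=(y^2+\eps^2)^\gamma$. It applies Stein's identity to get $H_{\gamma,\eps}''+xH_{\gamma,\eps}'=\E[Wf_\eps'(W)]$ with $W=x+Z$, and then lets $\eps\downarrow0$ using $Wf_\eps'(W)\to2\gamma|W|^{2\gamma}$ and dominated convergence. Your self-similarity argument instead uses that $u(t,x)=t^\gamma H_\gamma(x/\sqrt t)$ solves $\partial_tu=\frac12\partial_x^2u$. It keeps every derivative on the Gaussian kernel, needs no limiting step, and shows structurally why $H_\gamma$ is annihilated by $\calL_\gamma$.

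\textbf{The uniform derivative bound.} For \eqref{eq:Hgamma-derivatives-uniform}, the paper first proves the decay $|H_\gamma^{(j)}(x)|\le C_j|x|^{2\gamma-j}$ for $|x|\ge1$. It does this by Taylor expansion on $\{|Z|\le|x|/2\}$ and Gaussian tails off that event, and then concludes boundedness from $2\gamma-j<0$. Your argument combines the cancellation $\int\phi^{(j)}=0$ with the H\"older bound $\bigl||x+t|^{2\gamma}-|x|^{2\gamma}\bigr|\le|t|^{2\gamma}$. It is shorter and fully rigorous, but it yields only boundedness.

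That suffices for the lemma as stated. Note, however, that the paper later relies on the decay, which is not part of the statement. The collar bound $|G_s'(x)|\le C(\log s)^{\theta/2}$ in Lemma~\ref{lem:gapped-profile-pub} requires $|H_\gamma'(x)|\lesssim|x|^{2\gamma-1}$. With mere boundedness of $H_\gamma'$, the term $\frac12xG_s'$ in the drift would exceed the source by a factor of order $(\log s)^{\theta/2}$, which no fixed $\eta$ can absorb.

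Your own ODE recovers this decay immediately. Since $xH_\gamma'=2\gamma H_\gamma-H_\gamma''$ and $H_\gamma''$ is bounded, you get $|H_\gamma'(x)|\le C|x|^{2\gamma-1}$ for $|x|\ge1$. The differentiated identities $xH_\gamma^{(j+1)}=(2\gamma-j)H_\gamma^{(j)}-H_\gamma^{(j+2)}$ then bootstrap the corresponding decay for the higher derivatives.
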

\begin{proof}
Write
\[
        H_\gamma(x)=\int_{\R}|y|^{2\gamma}\varphi(y-x)\,\dd y,
\]
where $\varphi$ is the standard normal density.  This is the convolution of the locally integrable function $y\mapsto |y|^{2\gamma}$ with a smooth Gaussian density.  Hence $H_\gamma\in C^\infty(\R)$, and differentiation may be performed on the Gaussian factor.  Evenness and positivity are immediate.

We next prove the differential equation.  The singularity of $|y|^{2\gamma}$ at zero is harmless, but it is better not to differentiate the singular integrand twice.  Let $f_\eps(y)=(y^2+\eps^2)^\gamma$ and $H_{\gamma,\eps}(x):=\E f_\eps(x+Z)$.  For each fixed $\eps>0$, Stein's identity gives, with $W=x+Z$,
\[
        H_{\gamma,\eps}''(x)+xH_{\gamma,\eps}'(x)
        =\E[W f_\eps'(W)].
\]
Here $Wf_\eps'(W)=2\gamma W^2(W^2+\eps^2)^{\gamma-1}$ is bounded by a constant times $1+|W|^{2\gamma}$ uniformly in $\eps\le1$.  Hence dominated convergence, uniformly on compact $x$-sets, yields
\[
        \lim_{\eps\downarrow0}\E[Wf_\eps'(W)]
        =2\gamma\E|W|^{2\gamma}=2\gamma H_\gamma(x).
\]
The convergence $H_{\gamma,\eps}\to H_\gamma$ and the convergence of the derivatives follow by differentiating the Gaussian kernel in the convolution representation above.  Therefore
\[
        H_\gamma''(x)+xH_\gamma'(x)=2\gamma H_\gamma(x),
\]
which is equivalent to \eqref{eq:Hgamma-ODE}.

The global derivative bounds follow by differentiating the Gaussian kernel.  The derivative $H_\gamma^{(j)}(x)$ is an integral of $|y|^{2\gamma}$ times a degree-$j$ polynomial in $(y-x)$ times $\varphi(y-x)$; after the change of variables $z=y-x$, this is bounded by
\[
        C_j\E(1+|x+Z|)^{2\gamma}(1+|Z|)^j
        \le C_j'(1+|x|)^{2\gamma}.
\]
For $|x|\ge1$, the upper bound in \eqref{eq:Hgamma-sharp-growth} and the derivative bounds
\[
        |H_\gamma^{(j)}(x)|\le C_j |x|^{2\gamma-j},
        \qquad |x|\ge1,\ j\ge1,
\]
follow similarly from Taylor expansion on the event $|Z|\le |x|/2$ and from Gaussian tails on the complement.  The lower bound for $H_\gamma$ follows from
\[
        H_\gamma(x)\ge \E\bigl[|x+Z|^{2\gamma}\1_{\{|Z|\le |x|/2\}}\bigr]
        \ge (|x|/2)^{2\gamma}\Prob(|Z|\le |x|/2),
\]
which is at least a positive constant times $|x|^{2\gamma}$ for $|x|\ge1$.

For $j\ge1$, the large-$x$ derivative bound above and the inequality $0<2\gamma<1$ imply boundedness on $|x|\ge1$.  On $|x|\le1$, continuity gives boundedness.  This proves \eqref{eq:Hgamma-derivatives-uniform}.
\end{proof}

Choose constants
\begin{equation}
        0<\alpha_1<\alpha_2<A<\sqrt\theta.
\end{equation}
Let $\psi\in C^\infty(\R)$ be even, with $0\le\psi\le1$, such that
\begin{equation}
        \psi(u)=1\quad (|u|\le\alpha_1),
        \qquad
        \psi(u)=0\quad (|u|\ge\alpha_2).
\end{equation}
For all sufficiently large $s$, define
\begin{equation}
        G_s(x):=(\log s)^\theta H_\gamma(x)\psi(x/\sqrt{\log s}).
\end{equation}
The certificate is
\begin{equation}\label{eq:gapped-B-def-pub}
        B_s(c):=\eta s^\gamma G_s(x_s(c)),
        \qquad 1\le c<s,
\end{equation}
with boundary values $B_s(0)=0$ and $B_s(c)=0$ for $c\ge s$.  Here $\eta>0$ is a small constant chosen only after all other constants and the base time are fixed.

\begin{lemma}[Profile bounds]\label{lem:gapped-profile-pub}
There is a constant $C<\infty$ such that, for all sufficiently large $s$, the following bounds hold.
\begin{enumerate}[label=(\roman*), leftmargin=2em]
\item The support of $G_s$ is contained in $[-\alpha_2\sqrt{\log s},\alpha_2\sqrt{\log s}]$, and
\begin{equation}
        |G_s(x)|\le C (\log s)^\theta(1+|x|)^{2\gamma}\1_{\{|x|\le\alpha_2\sqrt{\log s}\}}.
\end{equation}
In particular $|G_s(x)|\le C (\log s)^{(1+\theta)/2}$ everywhere.
\item For $j=1,2,3$,
\begin{equation}\label{eq:gapped-profile-global-derivative}
        |G_s^{(j)}(x)|\le C (\log s)^\theta\1_{\{|x|\le\alpha_2\sqrt{\log s}\}}.
\end{equation}
\item On the collar $\alpha_1\sqrt{\log s}\le |x|\le\alpha_2\sqrt{\log s}$,
\begin{align}\label{eq:gapped-profile-collar}
        |G_s(x)|&\le C(\log s)^{(1+\theta)/2},
        & |G_s'(x)|&\le C(\log s)^{\theta/2}, \notag\\
        |G_s''(x)|&\le C(\log s)^{(\theta-1)/2},
        & |G_s'''(x)|&\le C(\log s)^{(\theta-2)/2}.
\end{align}
\item Uniformly for $|x|\le\alpha_2\sqrt{\log s}+2$ and $j=0,1,2$,
\begin{equation}\label{eq:gapped-profile-time-derivatives}
        |G_{s-1}^{(j)}(x)-G_s^{(j)}(x)|
        \le \frac{C}{s}(\log s)^{(1+\theta)/2}.
\end{equation}
In the core region $|x|\le\alpha_1\sqrt{\log s}-1$, the sharper bound
\begin{equation}\label{eq:gapped-profile-time-core}
        |G_{s-1}(x)-G_s(x)|
        \le \frac{C}{s}(\log s)^{\theta-1}(1+|x|)^{2\gamma}
\end{equation}
holds.
\end{enumerate}
\end{lemma}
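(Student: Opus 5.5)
The plan is direct differentiation of the product $G_s(x)=(\log s)^\theta H_\gamma(x)\psi(x/\sqrt{\log s})$ by Leibniz's rule, combined with the bounds of Lemma~\ref{lem:Hgamma-properties}. Write $\ell:=\log s$ and $\psi_\ell(x):=\psi(x/\sqrt{\ell})$. Then $\psi_\ell^{(k)}(x)=\ell^{-k/2}\psi^{(k)}(x/\sqrt\ell)$, which is bounded by $C_k\ell^{-k/2}$. It is supported in $|x|\le\alpha_2\sqrt\ell$, and for $k\ge1$ it is supported on the collar $\alpha_1\sqrt\ell\le|x|\le\alpha_2\sqrt\ell$.

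For item (i), the support claim is immediate. The bound follows from $H_\gamma(x)\le C(1+|x|)^{2\gamma}$ together with $0\le\psi\le1$. On the support we have $(1+|x|)^{2\gamma}\le C\ell^{\gamma}$, and $\theta+\gamma=(1+\theta)/2$, which gives the global bound.

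For item (ii), expand $G_s^{(j)}=\ell^\theta\sum_{k\le j}\binom jk H_\gamma^{(j-k)}\psi_\ell^{(k)}$. The $k=0$ term is bounded by $C\ell^\theta$ via \eqref{eq:Hgamma-derivatives-uniform}. Each $k\ge1$ term lives on the collar. There it is bounded by $C(1+|x|)^{2\gamma}\ell^{-k/2}$ when $j=k$, and by $C\ell^{-k/2}$ otherwise. The worst case is $C\ell^{\gamma-1/2}\le C$, since $2\gamma<1$.

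Item (iii) needs the refined large-$|x|$ bounds $|H_\gamma^{(m)}(x)|\le C|x|^{2\gamma-m}$ for $|x|\ge1$, which are derived in the proof of Lemma~\ref{lem:Hgamma-properties}. On the collar $|x|\asymp\sqrt\ell$, so every term $H_\gamma^{(j-k)}\psi_\ell^{(k)}$ is bounded by $C\ell^{\gamma-(j-k)/2}\ell^{-k/2}=C\ell^{\gamma-j/2}$. Multiplying by $\ell^\theta$ gives $\ell^{(1+\theta)/2-j/2}$, which matches all four displayed bounds.

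Item (iv) is the step I expect to need the most care, though it is still routine. I would regard $G_s$ as $F(x,u)=u^{-\theta}\ldots$, or more simply differentiate in the continuous variable $t=\log s$. Write $G(x,t)=t^\theta H_\gamma(x)\psi(xt^{-1/2})$, so that
\[
\partial_t G=\theta t^{\theta-1}H_\gamma\psi(xt^{-1/2})-\tfrac12 t^{\theta-3/2}xH_\gamma(x)\psi'(xt^{-1/2}).
\]
Since $\log s-\log(s-1)=1/s+O(s^{-2})$, the mean-value theorem reduces everything to bounding $\partial_x^j\partial_tG$ uniformly for $t$ between $\log(s-1)$ and $\log s$. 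These two values are interchangeable up to $1+o(1)$ factors, and the region $|x|\le\alpha_2\sqrt\ell+2$ is the same for both.
\begin{itemize}
\item The first term of $\partial_tG$ is bounded by $C\ell^{\theta-1}(1+|x|)^{2\gamma}\le C\ell^{\theta-1+\gamma}$.
\item The second term lives on the collar. There it is bounded by $C\ell^{\theta-3/2}\sqrt\ell\,\ell^{\gamma}=C\ell^{\theta+\gamma-1}$.
\item Both are at most $C\ell^{(1+\theta)/2}$, since $\theta+\gamma-1=(\theta-1)/2$.
\end{itemize}
Applying $\partial_x^j$ for $j\le2$ only produces factors of $\ell^{-1/2}$ or bounded derivatives of $H_\gamma$, so the same bound persists. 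This bound is in fact much better than needed, and it proves \eqref{eq:gapped-profile-time-derivatives}.

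For the core bound \eqref{eq:gapped-profile-time-core}, take $|x|\le\alpha_1\sqrt{\ell}-1$. At both times $x/\sqrt{t}\le\alpha_1$, using $\sqrt{\log s}-\sqrt{\log(s-1)}=o(1)$ and the unit buffer, so $\psi\equiv1$ there. Hence $G_{s-1}(x)-G_s(x)=\bigl((\log(s-1))^\theta-(\log s)^\theta\bigr)H_\gamma(x)$. This is bounded by $C s^{-1}\ell^{\theta-1}(1+|x|)^{2\gamma}$.

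The only genuine subtlety is checking that the buffer of $2$ in (iv), and of $1$ in the core statement, keeps the relevant cutoff regimes consistent between $s-1$ and $s$. The bound $\sqrt{\log s}-\sqrt{\log(s-1)}=O(1/(s\sqrt{\log s}))$ settles this.
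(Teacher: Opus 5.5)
Your proposal is correct and follows essentially the same route as the paper: Leibniz-rule bounds on $(\log s)^\theta H_\gamma(x)\psi(x/\sqrt{\log s})$ using the uniform and large-$|x|$ derivative bounds for $H_\gamma$, a mean-value argument in the time variable for (iv), and the observation that $\psi\equiv1$ at both times $s-1$ and $s$ in the core. The only difference is cosmetic: you differentiate in $t=\log s$ where the paper differentiates in $Y=\sqrt{\log s}$, and both yield the same stronger time-difference bound $O\bigl(s^{-1}(\log s)^{(\theta-1)/2}\bigr)$.
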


\begin{proof}
The support statement follows from the support of $\psi$.  The size bound follows from Lemma~\ref{lem:Hgamma-properties}:
\[
        |G_s(x)|\le C(\log s)^\theta(1+|x|)^{2\gamma}\1_{\{|x|\le\alpha_2\sqrt{\log s}\}}.
\]
On the support, $(1+|x|)^{2\gamma}\le C (\log s)^\gamma$, so $|G_s|\le C (\log s)^{\theta+\gamma}=C(\log s)^{(1+\theta)/2}$, because $\theta+\gamma=(1+\theta)/2$.

For derivatives, differentiate the product $(\log s)^\theta H_\gamma(x)\psi(x/\sqrt{\log s})$.  Every derivative falling on $\psi(x/\sqrt{\log s})$ produces a factor $(\log s)^{-1/2}$, and the derivatives of $\psi$ are bounded.  The conservative global bound \eqref{eq:gapped-profile-global-derivative} follows from the uniform derivative bound \eqref{eq:Hgamma-derivatives-uniform} for the terms in which at least one derivative falls on $H_\gamma$, and from the size bound $|H_\gamma(x)|\le C(1+|x|)^{2\gamma}$ for the terms in which all derivatives fall on the cutoff.  For example,
\[
        \left|(\log s)^\theta H_\gamma(x)\frac{1}{\sqrt{\log s}}\psi'(x/\sqrt{\log s})\right|
        \le C (\log s)^\theta(\log s)^{\gamma-1/2}=C(\log s)^{\theta/2}\le C(\log s)^\theta.
\]
In the collar $|x|\asymp \sqrt{\log s}$, the sharper estimates in \eqref{eq:Hgamma-sharp-growth} give
\[
        (\log s)^\theta|H_\gamma(x)|=O((\log s)^{(1+\theta)/2}),
        \qquad
        (\log s)^\theta|H_\gamma'(x)|=O((\log s)^{\theta/2}),
\]
and the cutoff derivative term in $G_s'$ is also $O((\log s)^\theta(\log s)^\gamma(\log s)^{-1/2})=O((\log s)^{\theta/2})$.  The product rule gives, for instance,
\[
G_s''=(\log s)^\theta\left[H_\gamma''\psi+2H_\gamma'(\log s)^{-1/2}\psi'+H_\gamma (\log s)^{-1}\psi''\right](x/\sqrt{\log s}),
\]
whose three terms are all $O((\log s)^{(\theta-1)/2})$ on the collar; differentiating once more gives terms of order $O((\log s)^{(\theta-2)/2})$.  This proves \eqref{eq:gapped-profile-collar}.

For time differences, write $G_s(x)=\Gamma(\sqrt{\log s},x)$ with
\[
        \Gamma(Y,x):=Y^{2\theta}H_\gamma(x)\psi(x/Y).
\]
For $|x|\le\alpha_2\sqrt{\log s}+2$ and $Y$ between $\sqrt{\log(s-1)}$ and $\sqrt{\log s}$, the same product-rule bounds used above give
\[
        |\partial_Y\partial_x^j\Gamma(Y,x)|\le C (\log s)^{\theta/2},
        \qquad j=0,1,2.
\]
We carry out the largest case explicitly.  Differentiating the displayed formula for $\partial_x^2\Gamma$ in $Y$ produces terms such as
\[
        Y^{2\theta-1}H_\gamma''\psi,
        \quad
        Y^{2\theta-2}H_\gamma'\psi',
        \quad
        Y^{2\theta-3}H_\gamma\psi'',
        \quad
        xY^{2\theta-4}H_\gamma\psi''',
\]
all evaluated at $x/Y$.  In the present range, the sharp bounds for $H_\gamma$ on the collar and the uniform derivative bounds in the core make each of these at most $C(\log s)^{\theta/2}$; the cases $j=0,1$ are easier.  Since $\sqrt{\log s}-\sqrt{\log(s-1)}=O((s\sqrt{\log s})^{-1})$, this implies the stronger bound $O(s^{-1}(\log s)^{(\theta-1)/2})$; the displayed weaker bound \eqref{eq:gapped-profile-time-derivatives} follows.  In the core, $\psi\equiv1$ for both $\sqrt{\log s}$ and $\sqrt{\log(s-1)}$, so $\Gamma(Y,x)=Y^{2\theta}H_\gamma(x)$.  The mean-value theorem gives
\[
        |G_{s-1}(x)-G_s(x)|
        \le C|\sqrt{\log s}-\sqrt{\log(s-1)}|(\log s)^{\theta-1/2}H_\gamma(x)
        \le \frac{C}{s}(\log s)^{\theta-1}(1+|x|)^{2\gamma},
\]
which is \eqref{eq:gapped-profile-time-core}.
\end{proof}

\subsection{Source estimates and gap margin}\label{subsec:gapped-source-margin}

The certificate is supported on $|x_s(c)|\le\alpha_2\sqrt{\log s}$.  Since $\alpha_2<\sqrt\theta$, this active band lies in a moderate-deviation range where the offline marginal mean remains in the support gap with a polynomially large margin.  We first record the source estimate, then prove the margin and show that the certificate perturbation is too small to move the threshold out of the gap.

The mean boundary overshoot and gap-margin estimates are common estimates for the gapped regime.  Although this section proves the positive-$\beta$ certificate, Lemmas~\ref{lem:gapped-mean-overshoot} and~\ref{lem:gapped-gap-margin-pub} are stated and proved under Assumption~\ref{ass:model-gapped} for any $\beta\ge0$, with $\theta=1/(\beta+1)$.  Their moderate-deviation radius is denoted by $A_0$ in the statements below; in the harmonic-certificate argument one takes $A_0=A$, while in Section~\ref{sec:gapped-zero} one may take any fixed $A_0<1$ because $\theta=1$.

\begin{lemma}[Gapped source on the active band]\label{lem:gapped-source-pub}
There are constants $c_S>0$ and $s_S<\infty$ such that, for all $s\ge s_S$ and all $1\le c<s$ satisfying
\begin{equation}\label{eq:gapped-active-source-condition}
        |x_s(c)|\le A \sqrt{\log s},
\end{equation}
one has
\begin{equation}\label{eq:gapped-source-pub}
        \E\Delta(H_{s,c})
        \ge c_S s^{\gamma-1}(1+|x_s(c)|)^{1+\theta}.
\end{equation}
\end{lemma}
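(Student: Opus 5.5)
This is a direct specialization of Lemma~\ref{lem:gapped-random-rank-source}. I would apply that lemma with the constant $A$ fixed in the harmonic construction (recall $A<\sqrt\theta\le1$, so $A<1$) and with $L=0$. The hypothesis $|x_s(c)|\le A\sqrt{\log s}$ in \eqref{eq:gapped-active-source-condition} then places every such state inside the range $u\le A\sqrt{\log s}+L$ required there. The lower half of \eqref{eq:gapped-source-bounds} gives exactly
\[
\E\Delta(H_{s,c})\ge c\,s^{\gamma-1}(1+|x_s(c)|)^{1+\theta}
\]
for all $s$ beyond the threshold of that lemma. So I would take $c_S$ and $s_S$ to be its lower-bound constant and threshold.

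To make the plan self-contained, I would also recall why the cited lower bound holds, since that is where the work sits. Treat the case $x_s(c)\le0$; the case $x_s(c)\ge0$ follows by exchanging the two support intervals and using the edge $b_1$.

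\textbf{Step 1 (conditional source).} On $\{N=n\ge c\}$, $H_{s,c}$ is the upper-support order statistic of rank $j=n-c+1$ from the gap edge $a_2$, that is, $a_2+Q_+(U_{j:n})$. Corollary~\ref{cor:edge-order-residual}, applied through Remark~\ref{rem:edge-conversion-use}, gives the conditional bound $\E[\Delta(H_{s,c})\mid N=n]\ge c_1((n-c+1)/s)^{1+\theta}$.

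\textbf{Step 2 (averaging over $N$).} Average the Step 1 bound over $N$ using the truncated positive-part moment lower bound \eqref{eq:binom-moment-lower1} of Lemma~\ref{lem:binomial-moment-common} with $a=1+\theta$. This yields the factor $s^{(1+\theta)/2}(1+u)^{1+\theta}$. Finally use $(1+\theta)/2=1-\gamma$.

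The only point needing care, which I expect to be the main (mild) obstacle, is uniformity in the binomial moment lower bound. This covers both the bounded-$u$ regime, handled via Berry--Esseen on a fixed window, and the growing-$u$ regime up to $A\sqrt{\log s}$, handled via Chebyshev on $E_u$. Lemma~\ref{lem:binomial-moment-common} already provides this uniformly over $u\le A\sqrt{\log s}+L$, so no new estimate is needed. Note also that the restriction $A<\sqrt\theta$ is stronger than needed here; only $A<1$ is used. The sharper restriction matters later, for the gap-margin estimate.
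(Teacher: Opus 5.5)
Your proposal is correct and matches the paper's proof: the paper also derives the lemma directly from the lower half of Lemma~\ref{lem:gapped-random-rank-source}, using $A<\sqrt\theta\le 1$ and the identity $1+\theta=2(1-\gamma)$, and your recap of Steps~1--2 follows that lemma's own argument. Your remark that only $A<1$ is needed here, while $A<\sqrt\theta$ matters for the gap-margin estimate, is also accurate.
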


\begin{proof}
This is a direct consequence of Lemma~\ref{lem:gapped-random-rank-source}.  The condition \eqref{eq:gapped-active-source-condition} is the same as $|x_s(c)|\le A\sqrt{\log s}$, with $A<1$ because $A<\sqrt\theta\le1$.  The two edge assumptions required in Lemma~\ref{lem:gapped-random-rank-source} are exactly the edge-mass assumptions in Assumption~\ref{ass:model-gapped}.  Lemma~\ref{lem:gapped-random-rank-source} gives the lower bound in \eqref{eq:gapped-source-bounds}, and $1+\theta=2(1-\gamma)$ gives precisely \eqref{eq:gapped-source-pub}.
\end{proof}

The next estimate controls the average amount by which the offline marginal overshoots into one of the two support intervals.  It is used to locate $\tau_s(c)=\E H_{s,c}$ inside the gap.

\begin{lemma}[Mean boundary overshoot]\label{lem:gapped-mean-overshoot}
Under Assumption~\ref{ass:model-gapped} with any $\beta\ge0$, let $\theta$ be as in \eqref{eq:beta-exponents-model} and fix $A_0<\infty$.  There are constants $C_P<\infty$ and $s_P<\infty$, depending also on $A_0$, such that, for all $s\ge s_P$ and all $c$ satisfying $|x_s(c)|\le A_0\sqrt{\log s}$, with $u:=|x_s(c)|$,
\begin{align}
        \E[(H_{s,c}-a_2)_+]
        &\le C_P s^{-\theta/2}(1+u)^\theta,
        \label{eq:upper-mean-overshoot}\\
        \E[(b_1-H_{s,c})_+]
        &\le C_P s^{-\theta/2}(1+u)^\theta.
\end{align}
\end{lemma}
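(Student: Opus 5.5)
We condition on the upper-support count $N\sim\mathrm{Bin}(s-1,q)$ and use the conditional description of $H_{s,c}$ recorded before Lemma~\ref{lem:gapped-random-rank-source}. On $\{N<c\}$ the offline marginal lies in the lower support, so $(H_{s,c}-a_2)_+=0$. On $\{N=n\ge c\}$ we write $H_{s,c}=a_2+Q_+(U_{j:n})$ with $j=n-c+1$, where $Q_+$ is the edge-distance quantile of $F_+$ at $a_2$. Therefore
\[
        \E[(H_{s,c}-a_2)_+]=\E\bigl[Q_+(U_{N-c+1:N})\1_{\{N\ge c\}}\bigr].
\]
The global quantile bound \eqref{eq:edge-quantile-global-upper} gives $Q_+(v)\le C_Qv^\theta$ for all $v\in(0,1]$. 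Since $\theta\le1$, Jensen's inequality (or Lemma~\ref{lem:u-order-common} with $a=\theta$) then gives
\[
        \E[Q_+(U_{j:n})]\le C\bigl(j/(n+1)\bigr)^\theta.
\]

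Next we split on the good event $\mathcal E:=\{|N-\mu_s|\le C_1\sqrt{s\log s}\}$, using Corollary~\ref{cor:binom-sqrtlog} with exponent $a=1$. On $\mathcal E^c$, the overshoot is bounded by $b_2-a_2$, so this part contributes $O(s^{-1})$. That is at most $s^{-\theta/2}$ because $\theta\le1$. On $\mathcal E$, we have $N+1\ge(q/2)s$ for large $s$, since $c=qs+O(\sqrt{s\log s})$ throughout the stated range. Hence the contribution on $\mathcal E$ is at most
\[
        C s^{-\theta}\,\E\bigl[(N-c+1)_+^\theta\bigr].
\]
Lemma~\ref{lem:binomial-moment-common} with $a=\theta$, applicable because $u\le A_0\sqrt{\log s}$, bounds this moment by $B_\theta s^{\theta/2}(1+u)^\theta$. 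The product is $Cs^{-\theta/2}(1+u)^\theta$, which proves \eqref{eq:upper-mean-overshoot}.

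The lower edge is handled symmetrically. On $\{N<c\}$, the distance $b_1-H_{s,c}$ is the lower-edge quantile of $1-F_-(b_1-\cdot)$ evaluated at the uniform order statistic of rank $c-N$ among $s-1-N$ points. The count $s-1-N\ge(p/2)s$ on $\mathcal E$. We then apply the bound for $\E[(c-N)_+^\theta]$ from Lemma~\ref{lem:binomial-moment-common}.

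The only delicate point is a range mismatch. Lemma~\ref{lem:binomial-moment-common} was stated with constants $A,L$, and the earlier source lemma required $A<1$, whereas here $A_0$ is arbitrary. I expect no obstacle: the upper-bound half of Lemma~\ref{lem:binomial-moment-common} uses only the crude triangle inequality and the central moment bound \eqref{eq:binom-central-moment}, both valid for any fixed $A_0$. I would re-verify only that this half never invokes the tail lower bounds that need $A<1$.
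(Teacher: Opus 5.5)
Your proof is correct and follows the paper's argument. The steps are the same: condition on $N$, bound the edge distance by $C_Q U_{j:n}^\theta$, use the order-statistic moment bound, and finish with the positive-part binomial moment of Lemma~\ref{lem:binomial-moment-common} at exponent $\theta$. Two small remarks. The good-event split is unnecessary, since on $\{N\ge c\}$ one already has $N+1>c\ge(q/2)s$, and on $\{N<c\}$ one has $s-1-N\ge s-c\ge(p/2)s$. Your range worry is moot, because Lemma~\ref{lem:binomial-moment-common} is stated for arbitrary finite $A,L$.
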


\begin{proof}
We prove \eqref{eq:upper-mean-overshoot}; the other estimate is symmetric.  The random variable $(H_{s,c}-a_2)_+$ is nonzero only on $\{N\ge c\}$.  Conditional on $N=n\ge c$, $H_{s,c}$ is the upper-support order statistic of rank $n-c+1$.  The lower-edge quantile bound for the upper support applies because Assumption~\ref{ass:model-gapped} gives edge mass comparable to distance$^{\beta+1}$ at $a_2$.  Thus, by the global upper bound in Lemma~\ref{lem:edge-quantile-bounds}, after increasing the constant as in that lemma, the distance of this order statistic from $a_2$ is at most $C_Q U_{j:n}^{\theta}$ in expectation, where $j=n-c+1$.  Lemma~\ref{lem:u-order-common} therefore gives
\[
        \E[(H_{s,c}-a_2)_+\mid N=n]
        \le C\left(\frac{n-c+1}{n+1}\right)^\theta,
        \qquad n\ge c.
\]
On the event $n\ge c$, and for all states with $|x_s(c)|\le A_0\sqrt{\log s}$, we have $n+1\ge c\ge(q/2)s$ for all large $s$, because $c=qs+O(\sqrt{s\log s})$.  Hence
\[
        \E[(H_{s,c}-a_2)_+]
        \le C s^{-\theta}\E[(N-c+1)_+^\theta].
\]
The positive-part moment bound of Lemma~\ref{lem:binomial-moment-common}, applied with exponent $\theta$, gives
\[
        \E[(N-c+1)_+^\theta]
        \le C s^{\theta/2}(1+u)^\theta,
\]
which proves \eqref{eq:upper-mean-overshoot}.
\end{proof}

\begin{lemma}[Gap margin for the offline marginal]\label{lem:gapped-gap-margin-pub}
Under Assumption~\ref{ass:model-gapped} with any $\beta\ge0$, let $\theta=1/(\beta+1)$ and fix $A_0<\sqrt\theta$.  There are constants $c_G>0$ and $s_G<\infty$, depending also on $A_0$, such that, for all $s\ge s_G$ and all $c$ satisfying $|x_s(c)|\le A_0\sqrt{\log s}$,
\begin{equation}\label{eq:gapped-gap-margin-pub}
        \dist\bigl(\tau_s(c),\{b_1,a_2\}\bigr)
        \ge c_G s^{-A_0^2/2}(\log s)^{-1/2}.
\end{equation}
\end{lemma}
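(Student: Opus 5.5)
Write $\tau=\tau_s(c)=\E H_{s,c}$ and split according to which support component contains $H_{s,c}$. On $\{N\ge c\}$ the offline marginal lies in $[a_2,b_2]$; on $\{N<c\}$ it lies in $[a_1,b_1]$. Hence
\[
        \tau=a_2-\E\bigl[(a_2-H_{s,c})\1_{\{N<c\}}\bigr]+\E\bigl[(H_{s,c}-a_2)\1_{\{N\ge c\}}\bigr].
\]
On $\{N<c\}$ we have $a_2-H_{s,c}=G+(b_1-H_{s,c})_+\ge G$. Therefore
\[
        a_2-\tau\ \ge\ G\,\Prob(N<c)-\E[(H_{s,c}-a_2)_+].
\]
Symmetrically,
\[
        \tau-b_1\ \ge\ G\,\Prob(N\ge c)-\E[(b_1-H_{s,c})_+].
\]
Since $\tau$ lies in the gap region when both right-hand sides are positive, the distance to $\{b_1,a_2\}$ is at least the minimum of the two. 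The plan is to show that each probability term dominates the corresponding overshoot term.

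For the probability terms I would apply Corollary~\ref{cor:binom-tail-sqrtlog} with $A=A_0<\sqrt\theta\le 1$. This is legitimate because $|x_s(c)|\le A_0\sqrt{\log s}$ is exactly $|c-qs|\le A_0\sigma\sqrt{s\log s}$. It gives
\[
        \min\{\Prob(N<c),\Prob(N\ge c)\}\ge b_{A_0}s^{-A_0^2/2}(\log s)^{-1/2}.
\]
For the overshoot terms I would use Lemma~\ref{lem:gapped-mean-overshoot}, which bounds each by $C_Ps^{-\theta/2}(1+u)^\theta\le C s^{-\theta/2}(\log s)^{\theta/2}$ on the range $u\le A_0\sqrt{\log s}$.

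The comparison is then immediate. Because $A_0^2<\theta$, the ratio of the overshoot bound to the tail lower bound is at most $C s^{-(\theta-A_0^2)/2}(\log s)^{(\theta+1)/2}\to0$. So for $s\ge s_G$ each overshoot is at most half of $G$ times the probability. This yields the stated margin with $c_G=Gb_{A_0}/2$.

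The only delicate point is the strict inequality $A_0<\sqrt\theta$. It is exactly what makes the polynomial rate of the binomial tail, $s^{-A_0^2/2}$, beat the overshoot rate $s^{-\theta/2}$, and this is why the exponent restriction appears in the statement.

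One must also check that the cruder route of bounding the lower-side contribution $\E[(b_1-H)_+\1_{\{N<c\}}]$ does not matter. It does not, since that term only increases $a_2-\tau$ and was discarded with the favorable sign.
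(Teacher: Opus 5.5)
Your proposal is correct and is essentially the paper's proof. It uses the same decomposition $a_2-\tau_s(c)\ge G\Prob(N<c)-\E[(H_{s,c}-a_2)_+]$ and its mirror at $b_1$, the same binomial tail lower bound and mean-overshoot lemma, and the same comparison $s^{-\theta/2}(\log s)^{\theta/2}=o\bigl(s^{-A_0^2/2}(\log s)^{-1/2}\bigr)$ driven by $A_0^2<\theta$. The only cosmetic difference is that you apply the symmetric minimum bound of Corollary~\ref{cor:binom-tail-sqrtlog} to both edges at once. The paper instead splits on the sign of $x_s(c)$ and handles the far edge with a Berry--Esseen constant lower bound on the non-rare side, which gives an order-one margin there.
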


\begin{proof}
We prove the claim for the case $x_s(c)\le0$.  Put $u=|x_s(c)|$.  Since $H_{s,c}\le b_1$ on $\{N<c\}$ and $H_{s,c}\ge a_2$ on $\{N\ge c\}$,
\begin{equation}\label{eq:a2-margin-decomp}
        a_2-\tau_s(c)
        =\E[(a_2-H_{s,c})\1_{\{N<c\}}]
         -\E[(H_{s,c}-a_2)\1_{\{N\ge c\}}]
        \ge G\Prob(N<c)-\E[(H_{s,c}-a_2)_+].
\end{equation}
By Lemma~\ref{lem:binom-tail-common},
\begin{equation}
        \Prob(N<c)\ge c_0\frac{e^{-u^2/2}}{1+u}
        \ge c_0 s^{-A_0^2/2}(\log s)^{-1/2},
\end{equation}
where the final inequality uses $u\le A_0\sqrt{\log s}$.  By Lemma~\ref{lem:gapped-mean-overshoot},
\begin{equation}\label{eq:gap-margin-overshoot-small}
        \E[(H_{s,c}-a_2)_+]
        \le C s^{-\theta/2}(1+u)^\theta
        \le C s^{-\theta/2}(\log s)^{\theta/2}.
\end{equation}
Because $A_0^2<\theta$,
\[
        s^{-\theta/2}(\log s)^{\theta/2}=o\bigl(s^{-A_0^2/2}(\log s)^{-1/2}\bigr).
\]
Combining \eqref{eq:a2-margin-decomp}--\eqref{eq:gap-margin-overshoot-small} gives
\begin{equation}
        a_2-\tau_s(c)
        \ge c s^{-A_0^2/2}(\log s)^{-1/2}
\end{equation}
for all sufficiently large $s$.

It remains to check that the opposite edge $b_1$ is not close.  On the same side,
\begin{equation}\label{eq:b1-margin-decomp}
        \tau_s(c)-b_1
        \ge G\Prob(N\ge c)-\E[(b_1-H_{s,c})_+].
\end{equation}
Since $x_s(c)\le0$, the event $\{N\ge c\}$ has probability bounded below by a positive constant, uniformly in the present range:
\[
        \frac{c-\mu_s}{\sigma\sqrt{s}}
        =x_s(c)+O(s^{-1/2})\le O(s^{-1/2}),
\]
so Berry--Esseen gives $\Prob(N\ge c)\ge c_1>0$ for all large $s$.  The second term in \eqref{eq:b1-margin-decomp} is $o(1)$ by Lemma~\ref{lem:gapped-mean-overshoot}.  Thus $\tau_s(c)-b_1\ge Gc_1/2$ for large $s$, which is stronger than \eqref{eq:gapped-gap-margin-pub}.  The case $x_s(c)\ge0$ is identical after interchanging the two support intervals and the two events $\{N\ge c\}$ and $\{N<c\}$.
\end{proof}

\begin{lemma}[Activity of neighboring states]\label{lem:gapped-neighbor-active}
If at least one of
\[
        B_s(c),\qquad B_{s-1}(c),\qquad B_{s-1}(c-1)
\]
is nonzero, then for all sufficiently large $s$,
\begin{equation}\label{eq:neighbor-active-strong}
        |x_s(c)|\le \alpha_2\sqrt{\log s}+1.
\end{equation}
In particular,
\begin{equation}\label{eq:neighbor-active-bound}
        |x_s(c)|\le A \sqrt{\log s}.
\end{equation}
\end{lemma}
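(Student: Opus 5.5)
The plan is to read off the support of the certificate from the support of the profile, and then move the predecessor coordinates to time $s$ using the explicit shift formulas of Lemma~\ref{lem:predecessor-coordinates}.

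\textbf{Step 1 (support of each candidate value).} By Lemma~\ref{lem:gapped-profile-pub}(i), $G_r$ vanishes outside $|x|\le\alpha_2\sqrt{\log r}$. Hence $B_r(c')\neq0$ forces $|x_r(c')|\le\alpha_2\sqrt{\log r}$ and $1\le c'<r$; the boundary values are zero by definition, so they are never the nonzero one.

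\textbf{Step 2 (case $B_s(c)\ne0$).} This gives $|x_s(c)|\le\alpha_2\sqrt{\log s}$ directly.

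\textbf{Step 3 (cases $B_{s-1}(c)\ne0$ or $B_{s-1}(c-1)\ne0$).} Lemma~\ref{lem:predecessor-coordinates} gives
\[
        \sigma\sqrt{s}\,x_s(c)=\sigma\sqrt{s-1}\,x_{s-1}(c)-q,
        \qquad
        \sigma\sqrt{s}\,x_s(c)=\sigma\sqrt{s-1}\,x_{s-1}(c-1)+(1-q).
\]
In either case,
\[
        |x_s(c)|\le\sqrt{\tfrac{s-1}{s}}\,\alpha_2\sqrt{\log(s-1)}+\frac{1}{\sigma\sqrt{s}}
        \le\alpha_2\sqrt{\log s}+\frac{1}{\sigma\sqrt s}.
\]
The last term is at most $1$ for large $s$, so \eqref{eq:neighbor-active-strong} holds.

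\textbf{Step 4 (the weaker bound).} Since $\alpha_2<A$, we have $(A-\alpha_2)\sqrt{\log s}\ge1$ for all large $s$. This gives \eqref{eq:neighbor-active-bound}.

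There is no real obstacle. The only care needed is that the bounds are uniform in $c$ and that the case of a zero boundary value is excluded. Both points are immediate, because the shift is deterministic and of size $O(s^{-1/2})$.
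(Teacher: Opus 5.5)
Your proof is correct and follows the same route as the paper. You read off the support of $B_r$ from the cutoff in $G_r$, and you transport the predecessor coordinates back to time $s$. The one difference is that you use the exact relations $\sigma\sqrt{s}\,x_s(c)=\sigma\sqrt{s-1}\,x_{s-1}(c)-q=\sigma\sqrt{s-1}\,x_{s-1}(c-1)+(1-q)$ where the paper uses the asymptotic form $x_{s-1}(c)=x_s(c)+O(s^{-1/2})+O(|x_s(c)|/s)$. Your version makes the rearrangement step explicit and does not need the range hypothesis $|x_s(c)|\le R_s$ of Lemma~\ref{lem:predecessor-coordinates}.
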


\begin{proof}
If $B_s(c)\ne0$, then \eqref{eq:neighbor-active-strong} follows from the support of $G_s$.  Suppose $B_{s-1}(c)\ne0$.  Then $|x_{s-1}(c)|\le\alpha_2\sqrt{\log(s-1)}$.  By Lemma~\ref{lem:predecessor-coordinates},
\[
        x_{s-1}(c)=x_s(c)+O(s^{-1/2})+O(|x_s(c)|/s).
\]
Since $\sqrt{\log(s-1)}=\sqrt{\log s}+o(1)$ and $\sqrt{\log s}\to\infty$, rearranging gives $|x_s(c)|\le\alpha_2\sqrt{\log s}+1$ for all large $s$.  The argument for $B_{s-1}(c-1)\ne0$ is the same.  Finally, because $\alpha_2<A$, the stronger bound \eqref{eq:neighbor-active-strong} implies \eqref{eq:neighbor-active-bound} for all sufficiently large $s$.
\end{proof}

\begin{lemma}[First difference and gap margin]\label{lem:gapped-first-diff-gap}
For the certificate \eqref{eq:gapped-B-def-pub}, fix any $\eta_0<\infty$.  There are $C_D<\infty$ and $s_D<\infty$, uniform for $0<\eta\le\eta_0$, such that whenever $s\ge s_D$ and a neighboring state is active,
\begin{equation}\label{eq:gapped-D-bound-pub}
        |D_s(c)|\le C_D\eta s^{-\theta/2}(\log s)^\theta.
\end{equation}
Moreover, for all sufficiently large $s$,
\begin{equation}\label{eq:perturbed-threshold-in-gap}
        \tau_s(c)-D_s(c)\in(b_1,a_2).
\end{equation}
Hence
\begin{equation}\label{eq:gapped-Delta-perturbed-zero}
        \Delta(\tau_s(c)-D_s(c))=0.
\end{equation}
\end{lemma}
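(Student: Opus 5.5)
The plan has three steps: bound $|D_s(c)|$ by a mean-value argument, locate $\tau_s(c)$ with a polynomial margin using Lemma~\ref{lem:gapped-gap-margin-pub}, and compare the two rates.

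\emph{Step 1: the first-difference bound.} We write
\[
D_s(c)=\eta (s-1)^\gamma\bigl[G_{s-1}(x_{s-1}(c))-G_{s-1}(x_{s-1}(c-1))\bigr].
\]
The two predecessor coordinates differ by exactly $1/(\sigma\sqrt{s-1})$, so the mean-value theorem gives
\[
|D_s(c)|\le \eta s^\gamma (\sigma\sqrt{s-1})^{-1}\sup|G_{s-1}'|.
\]
This is Corollary~\ref{cor:standard-first-difference}. Lemma~\ref{lem:gapped-profile-pub}(ii), applied at time $s-1$, gives the global bound $\sup|G_{s-1}'|\le C(\log s)^\theta$. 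This bound holds everywhere, since $G_{s-1}'$ vanishes outside the support, so no localization to the active range is needed. Hence $|D_s(c)|\le C\eta s^{\gamma-1/2}(\log s)^\theta$. Since $\gamma-1/2=-\theta/2$, this is \eqref{eq:gapped-D-bound-pub}, with constants independent of $\eta\le\eta_0$. If no neighboring state is active, $D_s(c)=0$ trivially.

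\emph{Step 2: the gap margin.} We use Lemma~\ref{lem:gapped-neighbor-active}: whenever some neighboring state is active, $|x_s(c)|\le\alpha_2\sqrt{\log s}+1\le A_0\sqrt{\log s}$ for any fixed $A_0\in(\alpha_2,A]$ and all large $s$. We take $A_0=A<\sqrt\theta$. Lemma~\ref{lem:gapped-gap-margin-pub} then places $\tau_s(c)$ in $(b_1,a_2)$ at distance at least $c_G s^{-A^2/2}(\log s)^{-1/2}$ from both endpoints. The margin estimate shows $\tau_s(c)$ sits strictly inside the gap on the side bounded by the nearby edge, and at distance of order one from the far edge.

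\emph{Step 3: comparison of rates.} By Steps 1 and 2 it suffices that
\[
C_D\eta_0 s^{-\theta/2}(\log s)^\theta<c_G s^{-A^2/2}(\log s)^{-1/2}
\]
for large $s$. The ratio of the two sides is $O\bigl(s^{-(\theta-A^2)/2}(\log s)^{\theta+1/2}\bigr)\to0$, since $A^2<\theta$. This yields \eqref{eq:perturbed-threshold-in-gap}, and \eqref{eq:gapped-Delta-perturbed-zero} follows because $\Delta\equiv0$ on $[b_1,a_2]$. If all neighbors are inactive, then $D_s(c)=0$. The claim is only needed on active states, and there it follows from the same margin whenever $|x_s(c)|\le A\sqrt{\log s}$.

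The main obstacle is Step 2's bookkeeping: ensuring the active band, enlarged by the $+1$ buffer and by the shift from $s$ to $s-1$, still lies inside a radius $A_0<\sqrt\theta$ where the gap-margin lemma applies. The strict ordering $\alpha_2<A<\sqrt\theta$ absorbs this. Everything else is the exponent comparison $A^2<\theta$, which requires the global derivative bound on $G_s'$ so that no extra power of $\sqrt{\log s}$ appears.
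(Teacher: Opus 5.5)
Your proposal is correct and follows the paper's proof essentially step for step. It uses the mean-value bound on $D_s(c)$ with the global derivative estimate of Lemma~\ref{lem:gapped-profile-pub}(ii); you skip the paper's check that the predecessor segment lies in the enlarged active range, which is harmless because that bound holds for all $x$. It then applies Lemma~\ref{lem:gapped-neighbor-active} with $A_0=A$ to invoke Lemma~\ref{lem:gapped-gap-margin-pub}, and finishes with the rate comparison $s^{-(\theta-A^2)/2}(\log s)^{\theta+1/2}\to0$. As in the paper, the conclusion $\tau_s(c)\in(b_1,a_2)$ relies on the proof of the margin lemma rather than only its distance statement, and the in-gap claim is asserted only for states with an active neighbor, as you note.
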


\begin{proof}
If both predecessor values $B_{s-1}(c)$ and $B_{s-1}(c-1)$ vanish, then $D_s(c)=0$ and \eqref{eq:gapped-D-bound-pub} is immediate.  Otherwise at least one predecessor value is nonzero.  By Lemma~\ref{lem:gapped-neighbor-active}, the active predecessor coordinate is within the support band enlarged by one.  The two predecessor coordinates differ by $O(s^{-1/2})$, so the whole line segment between them lies in the active range enlarged by one for all large $s$.  Hence the mean-value argument in Corollary~\ref{cor:standard-first-difference} applies.  Using the global first-derivative bound in Lemma~\ref{lem:gapped-profile-pub},
\[
        |D_s(c)|
        \le C\eta s^\gamma s^{-1/2}(\log s)^\theta
        =C\eta s^{-\theta/2}(\log s)^\theta,
\]
which proves \eqref{eq:gapped-D-bound-pub}.  By Lemma~\ref{lem:gapped-neighbor-active}, the state lies in the range of Lemma~\ref{lem:gapped-gap-margin-pub} with $A_0=A$.  Therefore the offline marginal has gap margin at least $c_Gs^{-A^2/2}(\log s)^{-1/2}$.  Since $A^2<\theta$,
\[
        \frac{s^{-\theta/2}(\log s)^\theta}{s^{-A^2/2}(\log s)^{-1/2}}
        =s^{-(\theta-A^2)/2}(\log s)^{\theta+1/2}\longrightarrow0.
\]
Thus, after increasing $s_D$ if necessary, uniformly for $0<\eta\le\eta_0$ the perturbation $D_s(c)$ is smaller than half the distance from $\tau_s(c)$ to the two gap boundaries.  This proves \eqref{eq:perturbed-threshold-in-gap}.  Since $\Delta$ vanishes on the gap, \eqref{eq:gapped-Delta-perturbed-zero} follows.
\end{proof}

\subsection{Drift and verification of the Bellman certificate inequality}\label{subsec:gapped-drift-feasibility}

We bound the affine-transport drift of the certificate by applying the finite-difference expansion from Appendix~\ref{app:fd-standard}.  The two regions below isolate the harmonic cancellation that motivates the profile.

\begin{lemma}[Drift of the harmonic certificate]\label{lem:gapped-drift-pub}
There is a constant $C_M<\infty$ such that, whenever a neighboring state is active,
\begin{equation}\label{eq:gapped-drift-pub}
        M_s(c)
        \ge -C_M\eta s^{\gamma-1}(1+|x_s(c)|)^{1+\theta}.
\end{equation}
\end{lemma}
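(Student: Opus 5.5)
We apply Corollary~\ref{cor:power-height-drift} with $a_s=s^\gamma$ and $G_s=(\log s)^\theta H_\gamma\,\psi(\cdot/\sqrt{\log s})$. By Lemma~\ref{lem:gapped-neighbor-active}, an active neighbour forces $|x_s(c)|\le\alpha_2\sqrt{\log s}+1\le A\sqrt{\log s}$, so we are in the range $R_s=O(\sqrt{\log s})$ required there. The corollary gives
\[
M_s(c)=\eta s^{\gamma-1}\,\calL_\gamma G_s(x)+\mathrm{Rem}_{s,c},\qquad x=x_s(c).
\]
The target bound $C_M\eta s^{\gamma-1}(1+|x|)^{1+\theta}$ is of the same order as the source in Lemma~\ref{lem:gapped-source-pub}. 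The task therefore reduces to two bounds: $|\calL_\gamma G_s(x)|\lesssim(1+|x|)^{1+\theta}$, and $|\mathrm{Rem}_{s,c}|\lesssim\eta s^{\gamma-1}(1+|x|)^{1+\theta}$ (lower order suffices).

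\emph{Core region} $|x|\le\alpha_1\sqrt{\log s}$. Here $\psi\equiv1$, so $G_s=(\log s)^\theta H_\gamma$. Lemma~\ref{lem:Hgamma-properties} gives $\calL_\gamma H_\gamma=0$, hence $\calL_\gamma G_s(x)=0$ exactly. For the remainder we use Lemma~\ref{lem:gapped-profile-pub}. The derivative norms are $O((\log s)^\theta)$, which gives error terms $s^{\gamma-3/2}(\log s)^{\theta}(1+|x|)$ and $s^{\gamma-2}(1+x^2)(\log s)^\theta$; both are $o(s^{\gamma-1})$. The explicit time-change term $\eta s^\gamma(G_{s-1}-G_s)(x)$ is handled by the sharper core bound~\eqref{eq:gapped-profile-time-core}, which gives $O(\eta s^{\gamma-1}(\log s)^{\theta-1}(1+|x|)^{2\gamma})$. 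This is at most $\eta s^{\gamma-1}(1+|x|)^{1+\theta}$, since $\theta\le1$ and $2\gamma<1+\theta$. At states within distance one of $\alpha_1\sqrt{\log s}$, the window $I$ reaches the collar. There we fall back on the weaker time bound~\eqref{eq:gapped-profile-time-derivatives}, of size $s^{-1}(\log s)^{(1+\theta)/2}$, and compare it with $(1+|x|)^{1+\theta}\asymp(\log s)^{(1+\theta)/2}$.

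\emph{Collar region} $\alpha_1\sqrt{\log s}\le|x|\le\alpha_2\sqrt{\log s}+1$. We cannot use cancellation here. Instead we bound each term of $\calL_\gamma G_s$ with~\eqref{eq:gapped-profile-collar}: $|G_s''|\lesssim(\log s)^{(\theta-1)/2}$, $|xG_s'|\lesssim\sqrt{\log s}\,(\log s)^{\theta/2}=(\log s)^{(1+\theta)/2}$, and $|G_s|\lesssim(\log s)^{(1+\theta)/2}$. Since $|x|\asymp\sqrt{\log s}$ on the collar, each term is $\lesssim(1+|x|)^{1+\theta}$. The remainder terms are controlled by the collar derivative bounds together with~\eqref{eq:gapped-profile-time-derivatives}. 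The largest of them is $\eta s^\gamma\cdot s^{-1}(\log s)^{(1+\theta)/2}\asymp\eta s^{\gamma-1}(1+|x|)^{1+\theta}$, which is again within budget. The fact that $\theta+\gamma=(1+\theta)/2$ is exactly what makes the collar height match the source scale.

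Finally, if $B_s(c)=0$ but a predecessor value is nonzero, $x$ lies just outside the support of $G_s$. In that case $G_s(x)=0$, and the same collar bounds apply on $I$, so the argument goes through unchanged.

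The main obstacle is bookkeeping the time-variation remainder near the core/collar interface. We will not try to reprove the mixed bounds; instead we take both cases uniformly from~\eqref{eq:gapped-profile-time-derivatives}, since the stated budget $(1+|x|)^{1+\theta}$ is generous enough to absorb the weaker estimate everywhere except deep in the core, where~\eqref{eq:gapped-profile-time-core} is available.
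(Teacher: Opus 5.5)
Your proposal is correct and follows the paper's proof essentially step for step. You apply the standard-scale drift expansion with $a_s=s^\gamma$, which is what Corollary~\ref{cor:power-height-drift} packages. In the core you use the exact cancellation $\calL_\gamma H_\gamma=0$ together with the sharp time bound~\eqref{eq:gapped-profile-time-core}, and on the collar you use the crude profile bounds together with~\eqref{eq:gapped-profile-time-derivatives}, where $(1+|x|)^{1+\theta}\asymp(\log s)^{(1+\theta)/2}$. The only cosmetic difference is that the paper cuts at $\alpha_1\sqrt{\log s}-1$, so the interface states fall into its collar case, whereas you treat them separately with the weaker time bound; this amounts to the same argument.
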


\begin{proof}
Let $x=x_s(c)$.  By Lemma~\ref{lem:gapped-neighbor-active}, $|x|\le \alpha_2\sqrt{\log s}+1\le A\sqrt{\log s}$ for all large $s$, so the standard-scale expansion of Lemma~\ref{lem:standard-drift} applies with $a_s=s^\gamma$ and profile $G_s$.  Since
\[
        (s-1)^\gamma-s^\gamma=-\gamma s^{\gamma-1}+O(s^{\gamma-2}),
\]
we can rewrite the expansion as
\begin{equation}\label{eq:gapped-drift-start}
        M_s(c)
        =\eta s^{\gamma-1}\Bigl[\frac12G_s''(x)+\frac12xG_s'(x)-\gamma G_s(x)\Bigr]
        +\eta s^\gamma\bigl(G_{s-1}(x)-G_s(x)\bigr)
        +\mathcal R_{s,c},
\end{equation}
where $\mathcal R_{s,c}$ is bounded by the error terms in \eqref{eq:standard-drift-error}, with $a_s=s^\gamma$, together with the additional lower-order term
\[
        C\eta s^{\gamma-2}|G_s(x)|.
\]
This additional term is the $O(s^{\gamma-2})G_s(x)$ remainder from expanding $(s-1)^\gamma-s^\gamma$.  We show that every term on the right of \eqref{eq:gapped-drift-start} is bounded below by the right side of \eqref{eq:gapped-drift-pub}.

First suppose $|x|\le\alpha_1\sqrt{\log s}-1$.  Then the unit neighborhood of $x$ lies in the core for large $s$, and $G_s=(\log s)^\theta H_\gamma$ there.  The leading operator term in \eqref{eq:gapped-drift-start} is exactly zero by Lemma~\ref{lem:Hgamma-properties}.  The time-variation bound \eqref{eq:gapped-profile-time-core} gives
\[
        \eta s^\gamma |G_{s-1}(x)-G_s(x)|
        \le C\eta s^{\gamma-1}(\log s)^{\theta-1}(1+|x|)^{2\gamma}.
\]
Because $\theta-1<0$ and $1+\theta>2\gamma$, this is at most
\[
        C\eta s^{\gamma-1}(1+|x|)^{1+\theta}.
\]
The error terms in \eqref{eq:standard-drift-error} are smaller.  Using \eqref{eq:gapped-profile-global-derivative}, \eqref{eq:gapped-profile-time-derivatives}, and $|x|\le O(\sqrt{\log s})$, their contribution divided by $\eta s^{\gamma-1}$ is bounded by a sum of terms of the form
\[
        \frac{(\log s)^{1+\theta}}{s},
        \qquad
        \frac{(\log s)^{1/2+\theta}}{\sqrt{s}},
        \qquad
        \frac{(\log s)^{(1+\theta)/2}}{s},
        \qquad
        \frac{(\log s)^{(2+\theta)/2}}{s}.
\]
The last two terms include the time-difference errors, the cross term added in Lemma~\ref{lem:standard-drift}, and the $O(\eta s^{\gamma-2}|G_s(x)|)$ contribution above.  All displayed terms are $o(1)$ and hence the corresponding contributions to $M_s(c)$ are bounded by $C\eta s^{\gamma-1}(1+|x|)^{1+\theta}$.  Thus \eqref{eq:gapped-drift-pub} holds in the core.

Now suppose $\alpha_1\sqrt{\log s}-1<|x|\le\alpha_2\sqrt{\log s}+1$.  Then $(1+|x|)^{1+\theta}\asymp (\log s)^{(1+\theta)/2}$.  The collar bounds in Lemma~\ref{lem:gapped-profile-pub} imply
\[
        \left|\frac12G_s''(x)+\frac12xG_s'(x)-\gamma G_s(x)\right|
        \le C(\log s)^{(1+\theta)/2}.
\]
The time-variation term is at most $C\eta s^{\gamma-1}(\log s)^{(1+\theta)/2}$ by \eqref{eq:gapped-profile-time-derivatives}.  The remainder terms in \eqref{eq:standard-drift-error}, including the cross term involving $G_{s-1}-G_s$, and the additional $O(\eta s^{\gamma-2}|G_s(x)|)$ term above are also bounded by $C\eta s^{\gamma-1}(\log s)^{(1+\theta)/2}$ because $|x|=O(\sqrt{\log s})$, $|(s-1)^\gamma/s^\gamma-1|=O(1/s)$, and the profile derivatives have at most polynomial-in-$\sqrt{\log s}$ size.  Since $(\log s)^{(1+\theta)/2}\asymp(1+|x|)^{1+\theta}$ in this region, \eqref{eq:gapped-drift-pub} follows.
\end{proof}

\begin{proposition}[Feasibility of the gapped harmonic certificate]\label{prop:gapped-feasible-pub}
There exist $s_0<\infty$ and $\eta_0>0$ such that, for every $\eta\in(0,\eta_0]$, the certificate \eqref{eq:gapped-B-def-pub} satisfies the Bellman certificate inequality \eqref{eq:common-bellman-residual} for every $s>s_0$ and $1\le c<s$.
\end{proposition}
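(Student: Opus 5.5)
The plan is to split the interior states $(s,c)$ according to whether any of the three certificate values $B_s(c)$, $B_{s-1}(c)$, $B_{s-1}(c-1)$ is nonzero, mirroring Regime~4 of Proposition~\ref{prop:g0-feasible}. If all three vanish, then $M_s(c)=0$ and $D_s(c)=0$. In that case $S_s(c,0)=\delta_s(c)\ge0$ by the offline slack identity, Lemma~\ref{lem:offline-slack-common}, so \eqref{eq:common-bellman-residual} holds trivially. No outer cutoff band is needed here, because the profile $G_s$ has compact support in $x$. This is unlike the two-uniform construction, where the $\Omega$ cutoff required Lemma~\ref{lem:g0-macro-slack}.

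Now suppose some neighboring state is active. Lemma~\ref{lem:gapped-neighbor-active} places $(s,c)$ in the band $|x_s(c)|\le A\sqrt{\log s}$. I first invoke Lemma~\ref{lem:gapped-first-diff-gap}, valid uniformly in $\eta\le\eta_0$ for any preset $\eta_0$. It gives $\tau_s(c)-D_s(c)\in(b_1,a_2)$, so $\Delta(\tau_s(c)-D_s(c))=0$. The finite-difference perturbation is then harmless and
\[
S_s(c,D_s(c))=\E\Delta(H_{s,c}).
\]
Lemma~\ref{lem:gapped-source-pub} applies on the same band and gives $S_s(c,D_s(c))\ge c_S s^{\gamma-1}(1+|x_s(c)|)^{1+\theta}$. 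Lemma~\ref{lem:gapped-drift-pub} gives $M_s(c)\ge -C_M\eta s^{\gamma-1}(1+|x_s(c)|)^{1+\theta}$. Adding the two bounds,
\[
M_s(c)+S_s(c,D_s(c))\ge (c_S-C_M\eta)\,s^{\gamma-1}(1+|x_s(c)|)^{1+\theta}\ge0
\]
whenever $\eta\le c_S/C_M$.

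The main care point is the order of constants. $C_M$ and $C_D$ scale out $\eta$ and must be independent of it. The thresholds $s_S,s_G,s_D$, and the threshold implicit in Lemma~\ref{lem:gapped-drift-pub}, must be uniform over $\eta\in(0,\eta_0]$ for a fixed initial $\eta_0$, say $\eta_0=1$. I would check that the drift expansion is linear in $\eta$, so $C_M$ does not depend on it. I would also check that the gap-margin comparison in Lemma~\ref{lem:gapped-first-diff-gap} only needs $\eta$ bounded above. Then I set $\eta_0:=\min\{1,c_S/C_M\}$ and take $s_0$ to be the maximum of all the large-$s$ thresholds, evaluated with the initial bound $\eta\le1$.

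The expected obstacle is the drift lemma itself, namely the harmonic cancellation $\calL_\gamma H_\gamma=0$ in the core and the collar bounds. That lemma is already available, so the proposition reduces to this bookkeeping.
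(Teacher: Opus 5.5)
Your proposal is correct and follows essentially the same route as the paper. The inactive case uses the offline Jensen slack, and the active case combines Lemma~\ref{lem:gapped-first-diff-gap} (which kills the finite-difference perturbation) with the source bound of Lemma~\ref{lem:gapped-source-pub} and the drift bound of Lemma~\ref{lem:gapped-drift-pub}, with $\eta_0$ below $c_S/C_M$; the paper takes $c_S/(2C_M)$ to keep a strict margin, and your more explicit ordering of constants makes precise what the paper leaves implicit.
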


\begin{proof}
Fix $s$ and $c$.  If all three neighboring certificate values
\[
        B_s(c),\qquad B_{s-1}(c),\qquad B_{s-1}(c-1)
\]
are zero, then $M_s(c)=0$ and $D_s(c)=0$.  The Bellman certificate inequality becomes
\[
        \E\Delta(H_{s,c})-\Delta(\tau_s(c))\ge0,
\]
which is the offline Jensen slack from Lemma~\ref{lem:offline-slack-common}.

It remains to treat the case in which a neighboring state is active.  Lemma~\ref{lem:gapped-first-diff-gap} gives $\Delta(\tau_s(c)-D_s(c))=0$.  Therefore the source term in the Bellman certificate inequality is simply
\[
        S_s(c,D_s(c))=\E\Delta(H_{s,c}).
\]
By Lemmas~\ref{lem:gapped-neighbor-active} and \ref{lem:gapped-source-pub},
\[
        S_s(c,D_s(c))
        \ge c_Ss^{\gamma-1}(1+|x_s(c)|)^{1+\theta}.
\]
By Lemma~\ref{lem:gapped-drift-pub},
\[
        M_s(c)
        \ge -C_M\eta s^{\gamma-1}(1+|x_s(c)|)^{1+\theta}.
\]
Choose $\eta_0\le c_S/(2C_M)$ and choose $s_0$ large enough that all preceding lemmas apply for $s>s_0$.  Then
\[
        M_s(c)+S_s(c,D_s(c))
        \ge \frac{c_S}{2}s^{\gamma-1}(1+|x_s(c)|)^{1+\theta}\ge0.
\]
This is exactly \eqref{eq:common-bellman-residual}.
\end{proof}

\subsection{Shifted capacity lower bound}\label{subsec:gapped-positive-theorems}

\begin{theorem}[Gapped lower bound at shifted capacity, $\beta>0$]\label{thm:gapped-shift-positive}
Assume the two-support gapped model and Assumption~\ref{ass:model-gapped} with $\beta>0$.  There is $\bar\alpha>0$ such that, for every fixed $\alpha\in(0,\bar\alpha)$, there exists $c_\alpha>0$ for which
\begin{equation}\label{eq:gapped-shift-positive-result}
        B_T^\star\!\left(\left\lfloor qT+\alpha\sigma\sqrt{T\log T}\right\rfloor\right)
        \ge c_\alpha T^\gamma(\log T)^{(\beta+2)/(2(\beta+1))}
\end{equation}
for all sufficiently large $T$.
\end{theorem}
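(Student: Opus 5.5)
The plan mirrors the proof of Theorem~\ref{thm:gapped-zero}.

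\emph{Step 1: feasibility and the base condition.} Let $s_0$ and $\eta_0$ be as in Proposition~\ref{prop:gapped-feasible-pub}. The profile $G_{s_0}$ is finite on $0\le c\le s_0$, so Corollary~\ref{cor:base-scale-framework} lets me shrink $\eta\in(0,\eta_0]$ until the base condition $B_{s_0}\le B_{s_0}^\star$ holds. The boundary values and nonnegativity are built into \eqref{eq:gapped-B-def-pub}, since $H_\gamma>0$ and $\psi\ge0$. Hence, for every large $T$ and every capacity $k\le T$, the certificate is feasible for $\mathsf P(T,k,s_0)$. Proposition~\ref{prop:common-comparison} then gives $B_T^\star(k_T)\ge B_T(k_T)$ for $k_T=\lfloor qT+\alpha\sigma\sqrt{T\log T}\rfloor$.

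\emph{Step 2: the choice of $\bar\alpha$.} I take $\bar\alpha=\alpha_1$, the radius on which the cutoff $\psi$ equals one. This is legitimate because $\alpha_1$ was fixed before $\eta$ and $s_0$ were chosen, and $\alpha_1<\sqrt\theta$ is arbitrary, so any smaller value would also work.

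\emph{Step 3: evaluating the certificate at $k_T$.} For fixed $\alpha<\alpha_1$ we have
\[
x_T(k_T)=\alpha\sqrt{\log T}+O(T^{-1/2}).
\]
Hence for large $T$ we get $|x_T(k_T)|\le\alpha_1\sqrt{\log T}$, so that $\psi=1$ and $G_T(x_T(k_T))=(\log T)^\theta H_\gamma(x_T(k_T))$. Since $|x_T(k_T)|\ge1$ for large $T$, the growth bound \eqref{eq:Hgamma-sharp-growth} gives
\[
H_\gamma(x_T(k_T))\ge c_0\,(\alpha\sqrt{\log T}/2)^{2\gamma}.
\]
Therefore
\[
B_T(k_T)\ge \eta c_0 (\alpha/2)^{2\gamma}\,T^\gamma(\log T)^{\theta+\gamma}.
\]
Finally, $\theta+\gamma=(1+\theta)/2=(\beta+2)/(2(\beta+1))$, which matches \eqref{eq:gapped-shift-positive-result} with $c_\alpha=\eta c_0(\alpha/2)^{2\gamma}$.

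\emph{Main obstacle: uniformity of the constants.} The difficulty is not in any of the steps above but in the order in which constants are chosen. The certificate must not depend on $T$ or $\alpha$, and $\eta$ must be fixed independently of $T$. That is true here: $\eta$ depends only on $s_0$ and the profile constants, and $T$ enters only through $k_T$. One also needs $k_T<T$, so that $(T,k_T)$ is an interior state; this holds for large $T$ because $q<1$.
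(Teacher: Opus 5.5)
Your proposal is correct and follows the paper's proof essentially step for step. You obtain feasibility from Proposition~\ref{prop:gapped-feasible-pub}, secure the base condition via Corollary~\ref{cor:base-scale-framework}, and apply the comparison principle; you then evaluate the certificate at $k_T$ using $\psi=1$ and the growth bound \eqref{eq:Hgamma-sharp-growth} for $H_\gamma$. Your choice $\bar\alpha=\alpha_1$ instead of the paper's $\alpha_1/2$ is immaterial, since both arguments only use $\alpha<\alpha_1$.
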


\begin{proof}
Let $s_0$ and $\eta_0$ be as in Proposition~\ref{prop:gapped-feasible-pub}.  By Corollary~\ref{cor:base-scale-framework}, after possibly reducing $\eta\in(0,\eta_0]$, the certificate satisfies the base condition at time $s_0$.

Take $\bar\alpha:=\alpha_1/2$ and fix $\alpha\in(0,\bar\alpha)$.  Let
\[
        k_T^{(\alpha)}:=\left\lfloor qT+\alpha\sigma\sqrt{T\log T}\right\rfloor.
\]
Proposition~\ref{prop:gapped-feasible-pub} verifies the Bellman certificate inequality for all later times, so $B$ is feasible for $\mathsf P(T,k_T^{(\alpha)},s_0)$ for all sufficiently large $T$.  Proposition~\ref{prop:common-comparison} gives
\[
        B_T^\star(k_T^{(\alpha)})=\Reg(T,k_T^{(\alpha)};F)\ge B_T(k_T^{(\alpha)}).
\]
Then
\[
        x_T(k_T^{(\alpha)})=\alpha\sqrt{\log T}+o(1).
\]
Since $\alpha<\alpha_1$, the cutoff is equal to one for all large $T$.  By the large-$x$ asymptotic in Lemma~\ref{lem:Hgamma-properties},
\[
        H_\gamma(\alpha\sqrt{\log T}+o(1))\ge c_\alpha(\log T)^\gamma
\]
for all large $T$; the $o(1)$ shift is negligible because $\alpha>0$ is fixed.  Absorbing the fixed certificate multiplier $\eta$ into $c_\alpha$, we obtain
\[
        B_T(k_T^{(\alpha)})
        \ge c_\alpha T^\gamma(\log T)^{\theta+\gamma}.
\]
Using $\theta+\gamma=(1+\theta)/2$, this becomes
\[
        B_T(k_T^{(\alpha)})
        \ge c_\alpha T^\gamma(\log T)^{(1+\theta)/2}.
\]
Finally,
\[
        \frac{1+\theta}{2}=\frac{\beta+2}{2(\beta+1)}.
\]
Combining this with the certified lower bound above gives \eqref{eq:gapped-shift-positive-result}.
\end{proof}

\section{Moderate-scale finite-difference details for the bounded-density case}\label{app:g0-asymp}

This appendix records the finite-difference expansions behind Lemma~\ref{lem:g0-fdiff}.  They are the asymptotic calculations used in the two-uniform proof.  Throughout the appendix, write $d=d_s(c)$ and $z=z_s(c)$, with $d_s$ and $z_s$ defined in \eqref{eq:coords-model}--\eqref{eq:z-def-model}.  The profile $\varphi_s$ is the slow-tail profile of Lemma~\ref{lem:g0-profile}.

\subsection{Scalar expansions}

\begin{lemma}[Scalar $s$-asymptotics]\label{lem:g0-s-asymp-detail}
As $s\to\infty$,
\begin{align}
        (\log(s-1))^2-(\log s)^2
        &= -\frac{2\log s}{s}+O\left(\frac{\log s}{s^2}\right),
        \label{eq:g0-app-ell-sq-diff}\\
        \frac{\sigma\sqrt{s\log s}}{\sigma\sqrt{(s-1)\log(s-1)}}
        &=1+\frac1{2s}+\frac1{2s\log s}+O(s^{-2}),
        \label{eq:g0-app-r-ratio}\\
        \frac{(\log(s-1))^2}{(\log s)^2}
        &=1-\frac{2}{s\log s}+O\left(\frac1{s^2\log s}\right).
        \label{eq:g0-app-ell-ratio}
\end{align}
\end{lemma}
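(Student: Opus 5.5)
The plan is to reduce all three expansions to Taylor expansions in the single small parameter $1/s$, using
\[
\log(s-1)=\log s+\log(1-1/s)=\log s-\frac1s-\frac1{2s^2}+O(s^{-3}).
\]
Writing $\ell:=\log s$ and $\epsilon:=\log(1-1/s)=-1/s+O(s^{-2})$, every quantity in the statement is an elementary function of $\ell$, $\epsilon$ and $s$. We expand each one separately and track which error terms dominate uniformly as $s\to\infty$.

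For \eqref{eq:g0-app-ell-sq-diff}, use the exact identity
\[
(\log(s-1))^2-\ell^2=\epsilon(2\ell+\epsilon)=2\ell\epsilon+\epsilon^2 .
\]
Since $2\ell\epsilon=-2\ell/s+O(\ell/s^2)$ and $\epsilon^2=O(s^{-2})$, the claim follows with error $O(\ell/s^2)$. For \eqref{eq:g0-app-ell-ratio}, divide the same identity by $\ell^2$:
\[
\frac{(\log(s-1))^2}{\ell^2}=1+\frac{2\epsilon}{\ell}+\frac{\epsilon^2}{\ell^2}.
\]
Here $2\epsilon/\ell=-2/(s\ell)+O(1/(s^2\ell))$, and $\epsilon^2/\ell^2=O(1/(s^2\ell^2))$ is absorbed into that error.

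For \eqref{eq:g0-app-r-ratio}, first cancel $\sigma$. Then factor the ratio as
\[
\sqrt{\frac{s}{s-1}}\cdot\sqrt{\frac{\ell}{\ell+\epsilon}}.
\]
The first factor is $1+\frac1{2s}+O(s^{-2})$. For the second, write $\ell/(\ell+\epsilon)=(1+\epsilon/\ell)^{-1}$. Since $|\epsilon/\ell|\le C/(s\ell)\to0$, we get
\[
\sqrt{\frac{\ell}{\ell+\epsilon}}=1-\frac{\epsilon}{2\ell}+O\!\left(\frac{\epsilon^2}{\ell^2}\right)=1+\frac1{2s\ell}+O\!\left(\frac1{s^2\ell}\right).
\]
Multiplying the two factors gives the main terms $1+\frac1{2s}+\frac1{2s\ell}$. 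The cross term $\frac1{4s^2\ell}$ and the remaining errors are all $O(s^{-2})$, because $\ell\ge1$ for large $s$.

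The only point requiring care is bookkeeping in \eqref{eq:g0-app-r-ratio}: every discarded term must be checked to be $O(s^{-2})$ and not merely $O(1/(s\ell^2))$. The worry would be the second-order term in the expansion of $(1+\epsilon/\ell)^{-1/2}$. But that term is $O(\epsilon^2/\ell^2)=O(s^{-2}\ell^{-2})$, so it is acceptable. No step is a genuine obstacle; the lemma is routine calculus, and the proof will simply present these three expansions in sequence.
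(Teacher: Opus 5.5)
Your proposal is correct and follows essentially the same route as the paper. Both proofs expand $\log(s-1)=\log s-1/s+O(s^{-2})$, square it to get \eqref{eq:g0-app-ell-sq-diff} and \eqref{eq:g0-app-ell-ratio}, and build \eqref{eq:g0-app-r-ratio} from the expansions of $s/(s-1)$ and $\log s/\log(s-1)$. The only difference is cosmetic: you take the square root of each factor separately, while the paper multiplies first and takes one square root; your error bookkeeping, including the cross term $1/(4s^2\log s)$, is sound.
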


\begin{proof}
The identity
\[
        \log(s-1)=\log s+\log(1-1/s)=\log s-1/s+O(s^{-2})
\]
implies \eqref{eq:g0-app-ell-sq-diff} by squaring.  Since the square of the normalizer is $\sigma^2s\log s$,
\[
        \frac{\sigma^2s\log s}{\sigma^2(s-1)\log(s-1)}
        =\frac{s}{s-1}\frac{\log s}{\log(s-1)}
        =\left(1+\frac1s+O(s^{-2})\right)
          \left(1+\frac1{s\log s}+O((s^2\log s)^{-1})\right),
\]
and taking square roots gives \eqref{eq:g0-app-r-ratio}.  Finally,
\[
        \log(s-1)/\log s=1-1/(s\log s)+O((s^2\log s)^{-1}),
\]
and squaring gives \eqref{eq:g0-app-ell-ratio}.
\end{proof}

\subsection{Sign-stable predecessor representation}

\begin{lemma}[Sign-stable representation]\label{lem:g0-sign-stable-detail}
Fix $\alpha_0>0$.  For all sufficiently large $s$, if $z_s(c)\ge\alpha_0$, then the two predecessor deviations $d_s(c)+q$ and $d_s(c)-(1-q)$ have the same sign as $d_s(c)$.  Let $\xi$ be the two-point random variable
\[
        \xi=q\,\sgn(d_s(c)) \quad\text{with probability }1-q,
        \qquad
        \xi=-(1-q)\,\sgn(d_s(c)) \quad\text{with probability }q.
\]
Then $\E\xi=0$, $\E\xi^2=\sigma^2$, $|\xi|\le1$, and, with
\[
        Z_s:=\frac{|d_s(c)|+\xi}{\sigma\sqrt{(s-1)\log(s-1)}},
\]
one has
\begin{equation}\label{eq:g0-app-convex-exp}
        (1-q)\varphi_{s-1}(z_{s-1}(c))+q\varphi_{s-1}(z_{s-1}(c-1))
        =\E \varphi_{s-1}(Z_s).
\end{equation}
\end{lemma}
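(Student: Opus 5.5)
The plan is a direct computation in three steps: establish sign stability, verify the moments of $\xi$, and then match the two predecessor coordinates with the two values of $Z_s$.

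\emph{Sign stability.} The hypothesis $z_s(c)\ge\alpha_0$ means $|d_s(c)|\ge\alpha_0\sigma\sqrt{s\log s}$. For $s$ large enough this exceeds $1$. Both shifts $+q$ and $-(1-q)$ have absolute value at most $1$. Hence $d_s(c)+q$ and $d_s(c)-(1-q)$ are nonzero and have the same sign as $d_s(c)$. These two quantities are exactly the predecessor deviations $d_{s-1}(c)=c-q(s-1)=d_s(c)+q$ and $d_{s-1}(c-1)=c-1-q(s-1)=d_s(c)-(1-q)$.

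\emph{Moments of $\xi$.} The two values of $\xi$ are those of the variable $\xi$ in Lemma~\ref{lem:predecessor-coordinates}, multiplied by $\sgn(d_s(c))\in\{\pm1\}$. Therefore $\E\xi=(1-q)q-q(1-q)=0$, $\E\xi^2=(1-q)q^2+q(1-q)^2=\sigma^2$, and $|\xi|\le\max\{q,1-q\}\le1$.

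\emph{Matching the coordinates.} Write $\epsilon=\sgn(d_s(c))$. By sign stability, $|d_s(c)+q|=\epsilon(d_s(c)+q)=|d_s(c)|+\epsilon q$ and $|d_s(c)-(1-q)|=|d_s(c)|-\epsilon(1-q)$. Dividing by $\sigma\sqrt{(s-1)\log(s-1)}$ gives $z_{s-1}(c)=(|d_s(c)|+\epsilon q)/(\sigma\sqrt{(s-1)\log(s-1)})$, which is the value of $Z_s$ on the event $\{\xi=\epsilon q\}$ of probability $1-q$. Similarly, $z_{s-1}(c-1)$ is the value of $Z_s$ on $\{\xi=-\epsilon(1-q)\}$ of probability $q$. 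Taking the expectation of $\varphi_{s-1}(Z_s)$ over these two events yields \eqref{eq:g0-app-convex-exp}.

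No real obstacle arises. The only care needed is to ensure that $s$ is large enough, depending on $\alpha_0$, for the absolute values to be removed without sign flips; this is exactly what the sign-stability step guarantees.
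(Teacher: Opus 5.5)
Your proposal is correct and matches the paper's proof: the bound $|d_s(c)|\ge\alpha_0\sigma\sqrt{s\log s}$ lets you remove the absolute values without sign flips, the moments of $\xi$ are immediate, and the two values of $Z_s$ are $z_{s-1}(c)$ and $z_{s-1}(c-1)$ with probabilities $1-q$ and $q$. Your explicit threshold $|d_s(c)|>1$ just makes concrete the paper's observation that $|d_s(c)|\to\infty$.
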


\begin{proof}
If $z_s(c)\ge\alpha_0$, then $|d_s(c)|=z_s(c)\sigma\sqrt{s\log s}\to\infty$.  Thus, for all large $s$, both predecessor deviations have the sign of $d_s(c)$, and
\[
        |d_s(c)+q|=|d_s(c)|+q\sgn(d_s(c)),
        \qquad
        |d_s(c)-(1-q)|=|d_s(c)|-(1-q)\sgn(d_s(c)).
\]
The moment identities for $\xi$ are immediate.  The two values taken by $Z_s$ are precisely $z_{s-1}(c)$ and $z_{s-1}(c-1)$ with probabilities $1-q$ and $q$, respectively, proving \eqref{eq:g0-app-convex-exp}.
\end{proof}

\begin{lemma}[Moments of the predecessor shift]\label{lem:g0-dZ-detail}
Let $Z_s$ be as in Lemma~\ref{lem:g0-sign-stable-detail}, set $\delta Z:=Z_s-z$, where $z=z_s(c)$, and assume $z\le C_0\sqrt{s/(\log s)}$ for a fixed constant $C_0$.  Then
\begin{equation}\label{eq:g0-app-Z-expansion}
        Z_s=z+\frac{\xi}{\sigma\sqrt{s\log s}}+\frac{z}{2s}+R_s,
        \qquad
        |R_s|\le C\frac{1+z}{s\log s},
\end{equation}
and
\begin{align}
        \E\delta Z
        &=\frac{z}{2s}+O\left(\frac{1+z}{s\log s}\right),
        \label{eq:g0-app-dZ1}\\
        \E(\delta Z)^2
        &=\frac1{s\log s}+O\left(\frac{1+z^2}{s^2}+\frac{1+z}{(s\log s)^{3/2}}\right),
        \label{eq:g0-app-dZ2}\\
        \E|\delta Z|^3
        &=O((s\log s)^{-3/2}).
        \label{eq:g0-app-dZ3}
\end{align}
\end{lemma}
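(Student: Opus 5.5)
The plan is to write $Z_s$ as an explicit product and expand it with the scalar asymptotics of Lemma~\ref{lem:g0-s-asymp-detail}. By definition,
\[
Z_s=\frac{|d|+\xi}{\sigma\sqrt{s\log s}}\cdot r_s,
\qquad
r_s:=\frac{\sigma\sqrt{s\log s}}{\sigma\sqrt{(s-1)\log(s-1)}}=1+\frac1{2s}+\frac1{2s\log s}+O(s^{-2})
\]
by \eqref{eq:g0-app-r-ratio}. Since $|d|/(\sigma\sqrt{s\log s})=z$, this gives
\[
Z_s=\Bigl(z+\frac{\xi}{\sigma\sqrt{s\log s}}\Bigr)\Bigl(1+\frac1{2s}+\frac1{2s\log s}+O(s^{-2})\Bigr).
\]
Multiplying out, the leading terms are $z+\xi/(\sigma\sqrt{s\log s})+z/(2s)$. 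The remainder $R_s$ collects three pieces:
\begin{enumerate}[label=(\alph*),leftmargin=2.5em]
\item $z/(2s\log s)+O(z/s^2)$, which is at most $Cz/(s\log s)$;
\item the cross term $\xi/(\sigma\sqrt{s\log s})\cdot O(1/s)=O(s^{-3/2})$, which is $O(1/(s\log s))$;
\item nothing else of larger order, since $z\le C_0\sqrt{s/\log s}$ makes the $O(z/s^2)$ piece harmless.
\end{enumerate}
Together these give \eqref{eq:g0-app-Z-expansion} with $|R_s|\le C(1+z)/(s\log s)$. The only care needed is that all constants are uniform over the stated range of $z$.

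For the moments, write $\delta Z=\xi/(\sigma\sqrt{s\log s})+z/(2s)+R_s$.

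The first moment follows directly because $\E\xi=0$, which gives \eqref{eq:g0-app-dZ1} with error $\E|R_s|\le C(1+z)/(s\log s)$.

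For the second moment, square the decomposition. The main term is $\E\xi^2/(\sigma^2 s\log s)=1/(s\log s)$. The cross term between $\xi$ and $z/(2s)$ vanishes in expectation. The square $z^2/(4s^2)$ is $O(z^2/s^2)$. It remains to bound the terms involving $R_s$:
\begin{itemize}[leftmargin=2.5em]
\item $\E|\xi R_s|/\sqrt{s\log s}\le C(1+z)/(s\log s)^{3/2}$;
\item $(z/s)\,|R_s|\le C(1+z)^2/s^2$;
\item $R_s^2\le C(1+z)^2/(s\log s)^2$.
\end{itemize}
All of these fit inside $O\bigl((1+z^2)/s^2+(1+z)/(s\log s)^{3/2}\bigr)$, which proves \eqref{eq:g0-app-dZ2}.

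For the third moment, $|\delta Z|\le C\bigl((s\log s)^{-1/2}+(1+z)/s\bigr)$ deterministically. In the range $z\le C_0\sqrt{s/\log s}$ we have $(1+z)/s\le C(s\log s)^{-1/2}$, so $\E|\delta Z|^3=O((s\log s)^{-3/2})$, which is \eqref{eq:g0-app-dZ3}.

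The main obstacle is bookkeeping rather than any real difficulty. The point to check is that the range bound $z\lesssim\sqrt{s/\log s}$ is exactly what keeps the drift $z/s$ no larger than the fluctuation scale. That is what makes the cubic bound and the uniformity of every error term hold.
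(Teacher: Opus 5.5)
Your proposal is correct and follows the paper's proof essentially step for step. Like the paper, you factor $Z_s=\bigl(z+\xi/(\sigma\sqrt{s\log s})\bigr)$ times the normalizer ratio from Lemma~\ref{lem:g0-s-asymp-detail}, and then take moments using $\E\xi=0$, $\E\xi^2=\sigma^2$, and the range $z\le C_0\sqrt{s/\log s}$. The only cosmetic differences are that you bound the $\xi R_s$ cross term directly via $|\xi|\le 1$ instead of Cauchy--Schwarz, and you get the cubic bound from a deterministic bound on $|\delta Z|$ rather than from $|a+b+c|^3\le 27(|a|^3+|b|^3+|c|^3)$.
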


\begin{proof}
By Lemma~\ref{lem:g0-s-asymp-detail},
\[
        Z_s=\left(z+\frac{\xi}{\sigma\sqrt{s\log s}}\right)\frac{\sigma\sqrt{s\log s}}{\sigma\sqrt{(s-1)\log(s-1)}}
        =\left(z+\frac{\xi}{\sigma\sqrt{s\log s}}\right)
          \left(1+\frac1{2s}+\frac1{2s\log s}+O(s^{-2})\right).
\]
This gives \eqref{eq:g0-app-Z-expansion}.  The omitted term involving $\xi/(2s\sigma\sqrt{s\log s})$ is $O((s^{3/2}\sqrt{\log s})^{-1})$, which is dominated by $(1+z)/(s\log s)$ on the stated range.  Taking expectations and using $\E\xi=0$ gives \eqref{eq:g0-app-dZ1}.  Squaring \eqref{eq:g0-app-Z-expansion}, using $\E\xi^2=\sigma^2$ and the identity $(\sigma\sqrt{s\log s})^2=\sigma^2s\log s$, gives the leading term in \eqref{eq:g0-app-dZ2}.  The deterministic square of the remainder contributes $O((1+z)^2/(s^2(\log s)^2))$, and the cross-term between $\xi/(\sigma\sqrt{s\log s})$ and $R_s$ contributes $O((1+z)/(s\log s)^{3/2})$ by Cauchy--Schwarz; both are covered by the displayed error bound.  The third-moment bound follows from $|a+b+c|^3\le 27(|a|^3+|b|^3+|c|^3)$ and the assumed range $z\le C_0\sqrt{s/(\log s)}$.
\end{proof}

\subsection{Bounded-range Taylor expansion}

\begin{lemma}[Taylor expansion on a bounded $z$-range]\label{lem:g0-taylor-detail}
Let $Z_s$ be as above and suppose $z\in[\alpha_0,\alpha_5]$.  Then
\begin{equation}\label{eq:g0-app-taylor}
        (\log(s-1))^2\E\bigl[\varphi_{s-1}(Z_s)-\varphi_{s-1}(z)\bigr]
        =(\log s)^2\varphi_s'(z)\frac{z}{2s}
         +\frac12\log s\varphi_s''(z)\frac1s
         +R_T,
\end{equation}
where
\begin{equation}
        |R_T|\le C_\lambda \varphi_s(z)\frac{\log s}{s}.
\end{equation}
\end{lemma}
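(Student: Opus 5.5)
The plan is a third-order Taylor expansion of $\varphi_{s-1}$ around $z$, using the predecessor-shift moments of Lemma~\ref{lem:g0-dZ-detail}. After that, we replace $\varphi_{s-1}$-derivatives by $\varphi_s$-derivatives via \eqref{eq:g0-profile-der-time}, and $(\log(s-1))^2$ by $(\log s)^2$ via Lemma~\ref{lem:g0-s-asymp-detail}.

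Since $z\in[\alpha_0,\alpha_5]$ is bounded, \eqref{eq:g0-app-Z-expansion} gives $|\delta Z|\le C/\sqrt{s\log s}$ deterministically. Hence $Z_s$ and every Lagrange intermediate point lie in the fixed compact interval $[\alpha_0/2,2\alpha_5]$ for large $s$. Taylor's theorem gives
\[
\E\bigl[\varphi_{s-1}(Z_s)-\varphi_{s-1}(z)\bigr]
=\varphi_{s-1}'(z)\E\delta Z+\tfrac12\varphi_{s-1}''(z)\E(\delta Z)^2+R_3,
\qquad |R_3|\le C\|\varphi_{s-1}'''\|_{0,[\alpha_0/2,2\alpha_5]}\,\E|\delta Z|^3 .
\]
By (P1), (P2) and the explicit exponential form on $z\ge\alpha_4$, all derivatives of $\varphi_{s-1}$ up to order three are bounded by $C_\lambda$ on this compact interval, uniformly in $s$. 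So \eqref{eq:g0-app-dZ3} gives $|R_3|\le C_\lambda(s\log s)^{-3/2}$. For the first two terms, insert \eqref{eq:g0-app-dZ1} and \eqref{eq:g0-app-dZ2} with $z$ bounded:
\[
\varphi_{s-1}'(z)\frac{z}{2s}+\frac{\varphi_{s-1}''(z)}{2s\log s}+O_\lambda\Bigl(\frac1{s\log s}\Bigr)+O_\lambda\Bigl(\frac1{s^2}+\frac1{(s\log s)^{3/2}}\Bigr).
\]

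Now multiply by $(\log(s-1))^2=(\log s)^2(1+O(1/(s\log s)))$. The leading terms become $(\log s)^2\varphi_{s-1}'(z)\,z/(2s)+\tfrac12\log s\,\varphi_{s-1}''(z)/s$. The multiplicative correction contributes only $O_\lambda(\log s/s^2)$, since the derivatives are bounded and the bracket is $O(1/s)$. The error terms contribute $O_\lambda(\log s/s)$ from the $O(1/(s\log s))$ part of $\E\delta Z$, and smaller amounts from the others. Replacing $\varphi_{s-1}^{(j)}$ by $\varphi_s^{(j)}$ for $j=1,2$ costs $O_{\lambda}(1/(s(\log s)^2))$ by \eqref{eq:g0-profile-der-time} with $R=\alpha_5$. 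After multiplication by $(\log s)^2/s$ or $\log s/s$ this is $O(1/s^2)$.

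Every remainder is therefore $O_\lambda(\log s/s)$. Finally, $\varphi_s(z)\ge c>0$ uniformly on $[\alpha_0,\alpha_5]$: it is at least $f_0$ on $[0,\alpha_2]$, equals $\varphi_\star$ on $[\alpha_2,\alpha_3]$, and is $\asymp\varphi_\star$ on $[\alpha_3,\alpha_5]$ because the exponent there is $O(\lambda/\log s)$. So the remainder can be written as $C_\lambda\varphi_s(z)\log s/s$.

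The main obstacle is bookkeeping in the dominant error. The $O((1+z)/(s\log s))$ error in $\E\delta Z$ multiplies $(\log s)^2\varphi'$. This is exactly what forces the remainder to size $\log s/s$ rather than smaller, so one must check that no larger cross term appears. In particular, the $1/(2s\log s)$ part of the normalizer ratio must be absorbed into this error, and it is, since $z$ is bounded.
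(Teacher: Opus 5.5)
Your proposal is correct and follows the paper's proof essentially step for step. The shared steps are a third-order Taylor expansion of $\varphi_{s-1}$ about $z$ using the moments of Lemma~\ref{lem:g0-dZ-detail}, replacement of $\varphi_{s-1}',\varphi_{s-1}''$ by $\varphi_s',\varphi_s''$ via \eqref{eq:g0-profile-der-time}, and the expansion $(\log(s-1))^2=(\log s)^2(1+O((s\log s)^{-1}))$. You also check explicitly that $\varphi_s$ is bounded below on $[\alpha_0,\alpha_5]$, which justifies writing the $O_\lambda(\log s/s)$ remainder as $C_\lambda\varphi_s(z)\log s/s$; the paper leaves this implicit.
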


\begin{proof}
Taylor's theorem gives
\[
        \varphi_{s-1}(Z_s)-\varphi_{s-1}(z)
        =\varphi_{s-1}'(z)\delta Z+\frac12\varphi_{s-1}''(z)(\delta Z)^2+\Theta_s,
\]
with $|\Theta_s|\le C_\lambda |\delta Z|^3$ uniformly on $[\alpha_0,\alpha_5]$, by Lemma~\ref{lem:g0-profile}.  Taking expectations and using \eqref{eq:g0-app-dZ1}--\eqref{eq:g0-app-dZ3},
\[
        \E[\varphi_{s-1}(Z_s)-\varphi_{s-1}(z)]
        =\varphi_{s-1}'(z)\frac{z}{2s}
          +\frac12\varphi_{s-1}''(z)\frac1{s\log s}
          +R_1,
\]
where $(\log(s-1))^2|R_1|\le C_\lambda \varphi_s(z)\log s/s$.  The time-variation bounds \eqref{eq:g0-profile-der-time} allow $\varphi_{s-1}'$ and $\varphi_{s-1}''$ to be replaced by $\varphi_s'$ and $\varphi_s''$; after multiplication by $(\log(s-1))^2$, the resulting contribution is absorbed by the same remainder.  Multiplying by $(\log(s-1))^2=(\log s)^2(1+O((s\log s)^{-1}))$ proves \eqref{eq:g0-app-taylor}.
\end{proof}

\subsection{Ratio expansion for large deviations}

\begin{lemma}[Ratio expansion for $z\ge\alpha_5$]\label{lem:g0-tail-ratio-detail}
Let $Z_s$ be as in Lemma~\ref{lem:g0-sign-stable-detail}.  Suppose
\[
        z\in[\alpha_5,C_0\sqrt{s/(\log s)}]
\]
for a fixed $C_0<\infty$.  Then $Z_s\ge\alpha_4$ almost surely for all sufficiently large $s$, and
\begin{equation}
        \E\left[\frac{\varphi_{s-1}(Z_s)}{\varphi_s(z)}\right]
        =1-\frac{\lambda z}{2s\log s}
          +O_\lambda\left(\frac{1+z}{s(\log s)^2}+\frac1{s(\log s)^3}\right).
\end{equation}
\end{lemma}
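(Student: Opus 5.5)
Since $\alpha_5>\alpha_4$ and $|Z_s-z|=O((s\log s)^{-1/2}+z/s)=o(1)$ uniformly on the range (Lemma~\ref{lem:g0-dZ-detail}), $Z_s\ge\alpha_4$ almost surely for large $s$. Hence both $\varphi_{s-1}(Z_s)$ and $\varphi_s(z)$ are given by the pure exponential $\mathcal T$. I would write $\ell=\log s$, $\ell'=\log(s-1)$ and $m=(\alpha_3+\alpha_4)/2$. Then
\[
\frac{\varphi_{s-1}(Z_s)}{\varphi_s(z)}
=\exp\Bigl\{-\frac{\lambda}{\ell'}(Z_s-m)+\frac{\lambda}{\ell}(z-m)\Bigr\}
=\exp\Bigl\{-\frac{\lambda}{\ell'}\,\delta Z-\lambda(z-m)\Bigl(\frac1{\ell'}-\frac1{\ell}\Bigr)\Bigr\},
\]
with $\delta Z=Z_s-z$. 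Everything then reduces to expanding this exponent, call it $E$, and taking $\E e^{E}$.

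First, the deterministic time-shift part. Using $1/\ell'-1/\ell=1/(s\ell^2)+O(1/(s\ell^3))$ from Lemma~\ref{lem:g0-s-asymp-detail}, this part equals $-\lambda(z-m)/(s\ell^2)+O(\lambda(1+z)/(s\ell^3))$, which is $O_\lambda((1+z)/(s\ell^2))$ and can be absorbed into the error. Second, the random part. By \eqref{eq:g0-app-Z-expansion}, $\delta Z=\xi/(\sigma\sqrt{s\ell})+z/(2s)+R_s$ with $|R_s|\le C(1+z)/(s\ell)$, and $1/\ell'=(1/\ell)(1+O(1/(s\ell)))$. Taking expectations with $\E\xi=0$, I get
\[
\E E=-\frac{\lambda z}{2s\ell}+O_\lambda\Bigl(\frac{1+z}{s\ell^2}\Bigr).
\]
Note that $R_s/\ell$ contributes $(1+z)/(s\ell^2)$, which is exactly the allowed error size.

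Third, expand $\E e^{E}=1+\E E+O(\E E^2 e^{|E|})$. Here $|E|=o(1)$ uniformly, because $z/\ell\cdot 1/s\le C_0/\sqrt{s\ell^3}\to0$. The second-order term is
\[
\E E^2\le C_\lambda\Bigl(\frac{1}{\ell^2}\,\frac{1}{s\ell}+\frac{z^2}{s^2\ell^2}+\frac{(1+z)^2}{s^2\ell^4}\Bigr).
\]
The first piece is $\lambda^2\sigma^2\E\xi^2/(\sigma^2 s\ell^3)=O(1/(s\ell^3))$, which produces the $1/(s(\log s)^3)$ error term.

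The step that needs care is the term $z^2/(s^2\ell^2)$. It is not trivially below $(1+z)/(s\ell^2)$ for large $z$; one needs $z/s\le 1$, which holds because $z\le C_0\sqrt{s/\ell}$. Then $z^2/(s^2\ell^2)\le z/(s\ell^2)\cdot C_0/\sqrt{s\ell}$, which is fine. Similarly, the cross and cubic terms, after Cauchy--Schwarz, are bounded by the displayed errors, using uniform $o(1)$ bounds on $E$. Collecting terms gives the claimed expansion, with all constants depending only on $\lambda$, $C_0$ and $q$.
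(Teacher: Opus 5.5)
Your proof is correct and follows essentially the same route as the paper: on $[\alpha_4,\infty)$ you write the ratio as a pure exponential, split the exponent into a transport part driven by $\delta Z$ and a time-shift part from $1/\log(s-1)-1/\log s$, and expand $\E e^{E}$ to second order using the moment bounds of Lemma~\ref{lem:g0-dZ-detail}. The differences are cosmetic and both valid: you center the time-shift at $z-m$, which makes that term deterministic and removes the paper's prefactor $\varphi_{s-1}(\alpha_4)/\varphi_s(\alpha_4)$, and you absorb $z^2/(s^2\ell^2)$ into $(1+z)/(s\ell^2)$, whereas the paper uses $z^2\le C_0^2 s/\ell$ to absorb it into $1/(s\ell^3)$.
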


\begin{proof}
Since $z\ge\alpha_5>\alpha_4$ and $|Z_s-z|=o(1)$ uniformly in the displayed range, all arguments lie in $[\alpha_4,\infty)$ for large $s$.  In that region,
\[
        \varphi_s(w)=\varphi_s(\alpha_4)\exp\{-\lambda(w-\alpha_4)/\log s\}.
\]
Thus
\[
\frac{\varphi_{s-1}(Z_s)}{\varphi_s(z)}
=\frac{\varphi_{s-1}(\alpha_4)}{\varphi_s(\alpha_4)}
 \exp\left\{-\frac{\lambda}{\log s}(Z_s-z)
          -\lambda\left(\frac1{\log(s-1)}-\frac1{\log s}\right)(Z_s-\alpha_4)\right\}.
\]
By the time-variation bound in Lemma~\ref{lem:g0-profile}, the prefactor is $1+O_\lambda((s(\log s)^2)^{-1})$, and
\[
        \frac1{\log(s-1)}-\frac1{\log s}=O((s(\log s)^2)^{-1}).
\]
Write the exponent as $A_s+B_s$, where
\[
        A_s:=-\frac{\lambda}{\log s}(Z_s-z),
        \qquad
        B_s:=-\lambda\left(\frac1{\log(s-1)}-\frac1{\log s}\right)(Z_s-\alpha_4).
\]
The moment estimates in Lemma~\ref{lem:g0-dZ-detail} imply
\[
        \E A_s=-\frac{\lambda z}{2s\log s}
          +O_\lambda\left(\frac{1+z}{s(\log s)^2}\right),
        \qquad
        \E a_s(c)^2=O_\lambda\left(\frac1{s(\log s)^3}
           +\frac{z^2}{s^2(\log s)^2}\right).
\]
In the displayed range $z\le C_0\sqrt{s/(\log s)}$, the final term is
$O((s(\log s)^3)^{-1})$.  Also
\[
        |B_s|\le C_\lambda\frac{1+z+|Z_s-z|}{s(\log s)^2},
\]
so $\E|B_s|=O_\lambda((1+z)/(s(\log s)^2))$.  The third-order remainder in the exponential is bounded by
$O_\lambda(\E|A_s|^3+\E|B_s|^2+\E|A_sB_s|)$, which is absorbed by
$O_\lambda((1+z)/(s(\log s)^2)+1/(s(\log s)^3))$.  Hence
\[
        \E e^{A_s+B_s}
        =1-\frac{\lambda z}{2s\log s}
        +O_\lambda\left(\frac{1+z}{s(\log s)^2}+\frac1{s(\log s)^3}\right).
\]
Multiplying by the prefactor $1+O_\lambda((s(\log s)^2)^{-1})$ proves the lemma.
\end{proof}

\end{document}